\definecolor{menucolor}{rgb}{0.1,0.52,0.47}
\definecolor{urlcolor}{rgb}{0.85,0.37,0.01}
\definecolor{runcolor}{rgb}{0.46,0.44,0.701}
\definecolor{filecolor}{rgb}{0.2,0.5,0.01}
\definecolor{linkcolor}{rgb}{0.12,0.47,0.70}
\definecolor{citecolor}{rgb}{0.55,0.36,0.01}
\definecolor{anchorcolor}{rgb}{0.4,0.4,0.4}
\renewcommand{\hat}{\widehat}
\newcommand{\Var}{{\mathbb V}\mbox{ar}}
\newcommand{\mX}{\mathcal X}
\newcommand{\mC}{\mathcal C}
\newcommand{\mD}{\mathcal D}
\newcommand{\mY}{\mathcal Y}
\newcommand{\mS}{\mathcal S}
\newcommand{\mB}{\mathcal B}
\newcommand{\mG}{\mathcal G}
\newcommand{\mP}{\mathcal P}
\newcommand{\mZ}{\mathcal Z}
\newcommand{\mM}{\mathcal M}
\newcommand{\mL}{\mathcal L}
\newcommand{\mE}{\mathcal E}
\newcommand{\mV}{\mathcal V}
\newcommand{\mH}{\mathcal H}
\newcommand{\mW}{\mathcal W}
\newcommand{\mU}{\mathcal U}
\newcommand{\mO}{\mathcal O}
\newcommand{\mhB}{\mathcal{ \hat B}}
\newcommand{\mhZ}{\mathcal{ \hat Z}}
\newcommand{\mhV}{\mathcal{ \hat V}}
\newcommand{\mbX}{\mathcal{ \bar X}}
\newcommand{\mbY}{\mathcal{ \bar Y}}
\newcommand{\bmbY}{\boldsymbol{\mathcal{ \bar Y}}}
\newcommand{\bX}{\boldsymbol{X}}
\newcommand{\bY}{\boldsymbol{Y}}
\newcommand{\bw}{\boldsymbol{w}}
\newcommand{\bbm}{\boldsymbol{m}}
\newcommand{\bs}{\boldsymbol{s}}
\newcommand{\be}{\boldsymbol{e}}
\newcommand{\bj}{\boldsymbol{j}}
\newcommand{\by}{\boldsymbol{y}}
\newcommand{\bx}{\boldsymbol{x}}
\newcommand{\bi}{\boldsymbol{i}}
\newcommand{\bmX}{\boldsymbol{\mathcal X}}
\newcommand{\bmY}{\boldsymbol{\mathcal Y}}
\newcommand{\bmZ}{\boldsymbol{\mathcal Z}}
\newcommand{\bmE}{\boldsymbol{\mathcal E}}
\newcommand{\bmS}{\boldsymbol{\mathcal S}}
\DeclareMathOperator{\tr}{tr}
\DeclareMathOperator{\vecc}{vec}
\DeclareMathOperator{\vech}{vech}
\DeclareMathOperator{\N}{\mathcal N}
\DeclareMathOperator{\I}{I}
\DeclareMathOperator*{\circs}{\circ}
\DeclareMathOperator*{\argmax}{arg\,max}
\DeclareMathOperator*{\Ast}{\ast}
\newcommand{\hM}{ \hat M}
\newcommand{\hL}{ \hat L}
\newcommand{\hSigma}{ \hat \Sigma}
\newcommand{\hsigma}{ \hat \sigma}
\newcommand{\hlambda}{ \hat \lambda}
\newcommand{\tuckL}{ [\![}
\newcommand{\tuckR}{ ]\!]}
\newcommand{\gn}{\~{n}}
\newcommand{\eleven}{11}
\newcommand\sdots{\!\hbox to 1em{.\hss.\hss.}}
\newtheorem{defn}{Definition}[section] 
\newtheorem{thm}{Theorem}[section] 
\newtheorem{cor}{Corollary}[section] 
\newtheorem{lemma}{Lemma}[section] 
\newtheorem{property}{Property}[section] 
\newcommand{\citep}{\cite}
\newcommand{\citet}{\cite}
\begin{document}
\title{Reduced-Rank  Tensor-on-Tensor Regression and Tensor-variate Analysis of Variance}
\author{Carlos~Llosa-Vite~and~Ranjan~Maitra
  \thanks{C. Llosa-Vite and R. Maitra are with the Department of Statistics
at Iowa State University, Ames, Iowa 50011, USA. e-mail:
\{cllosa,maitra\}@iastate.edu.}
}




\maketitle
\begin{abstract}
     Fitting regression models with many multivariate responses and  covariates can be challenging, but such responses and covariates sometimes have tensor-variate structure. We  extend the classical multivariate regression model to exploit such structure  in two ways: first, we impose four types of low-rank tensor formats on the  regression coefficients. Second, we model the errors using the tensor-variate normal distribution that imposes a Kronecker separable format on the covariance matrix. We obtain maximum likelihood estimators via block-relaxation algorithms and derive their computational complexity and asymptotic distributions. Our regression framework enables us to formulate  tensor-variate analysis of variance (TANOVA) methodology. This methodology, when applied in a one-way TANOVA layout, enables us to identify  cerebral regions significantly associated with the interaction of  suicide attempters or non-attemptor ideators and positive-, negative- or death-connoting words in a functional Magnetic Resonance Imaging study. Another application uses three-way TANOVA on the Labeled Faces in the Wild image dataset to distinguish facial characteristics related to ethnic origin, age group and gender.

   \end{abstract}
\begin{IEEEkeywords}
CP decomposition, HOLQ, HOSVD, Kronecker separable models,
  LFW dataset, 
  Multilinear statistics, Multiway regression, Random tensors, Suicide
  ideation, Tensor Train format, Tensor Ring format,
  Tucker format  
\end{IEEEkeywords}

\maketitle

\section{Introduction}\label{sec:introduction}
The classical simple linear regression (SLR) model (without intercept) relates the response variable $y_i$ to the explanatory variable $x_{i}$  as
$
y_i = \beta x_{i} +e_i$ with $\Var(e_i) = \sigma^2$
for $i =1,2,\sdots,n$, 
where $\beta$ is the regression coefficient parameter and $\sigma^2$
is the variance parameter. A natural extension of SLR
for vector-valued responses and explanatory variables is the
multivariate multiple linear regression (MVMLR) model
\begin{equation}\label{eq:MLR}
\boldsymbol{y}_i = B \boldsymbol{x}_{i} +\boldsymbol{e}_i
,\quad
\Var (\boldsymbol{e}_i) = \Sigma,
\end{equation}
where 
$(\boldsymbol{y}_1,\boldsymbol{x}_1),(\boldsymbol{y}_2,\boldsymbol{x}_2),\sdots,(\boldsymbol{y}_n,\boldsymbol{x}_n)$
are vector-valued responses and covariates and $(B,\Sigma)$ are parameters. The number of
parameters relative to the sample size in \eqref{eq:MLR} is greater in
the MVMLR model that in its SLR
counterpart because the parameters $(B,\Sigma)$ are
matrix-valued~\citep{johnsonandwichern08}. Several methodologies, for
example, the lasso and graphical lasso
\citep{tibshirani96,friedmanetal08}, envelope models
\citep{cooketal10} and reduced rank regression
\citep{mukherjeeandzhu11}, have been proposed to alleviate issues
arising from the large
number of parameters in \eqref{eq:MLR}. However, these methodologies
are only for vector-valued observations and do not exploit their underlying structure that may further reduce the number of
necessary parameters, in some cases making computation
feasible. Here we consider \textit{tensor}- or array-structured
responses and covariates that arise in many applications, such as the 
,
two motivating examples introduced next.

\subsection{Illustrative Examples}
\label{sec:examples}
\subsubsection{Cerebral activity in subjects at risk of suicide}\label{subseq:suicide} The United States Center for Disease 
Control and Prevention reports that 47173 Americans died  by
suicide in 2017, accounting for about two-thirds of all homicides in that year. 
Accurate suicide risk assessment is challenging, even for trained
mental health professionals, as 78\% of patients who commit
suicide deny such ideation even in their last communication with professionals~\citep{buschetal03}. 
Understanding how subjects at risk of suicide respond to different stimuli is important to guide treatment and therapy.
\citet{justetal17} provided data from a functional
Magnetic Resonance Imaging (fMRI) study of nine suicide attempters and eight suicide
non-attempter ideators (henceforth, ideators) upon
exposing them to ten words each with positive, negative or death-related
connotations. 
Our interest is in understanding brain regions associated with the
interaction of a subject's attempter/ideator status and word type
to inform diagnosis and treatment.  

Traditional approaches fit separate regression models at each voxel
without regard to spatial context that is only addressed  
post hoc at the time of inference. 
A more holistic strategy would use \eqref{eq:MLR} with the response
vector $\boldsymbol{y}_i$ of size $30\!\times\!43\!\times\!
56\!\times\! 20\!=\!1444800$, which contains thirty $43\!\times\!56\!\times\! 20$ image volumes, for each of the $i=1,2,\sdots,17$ subjects. The
explanatory variable here is a 2D 
vector that indicates a subject's status as a suicide attempter or
ideator. Under this framework,  $B$ and $\Sigma$ have over 2.8 million and 1 trillion unconstrained parameters, making estimation with
only $17$ subjects impractical. Incorporating a 3D spatial autoregressive (AR)
structure into the image volume, and another correlation structure
between the words can allow estimation of the variance, but still needs additional methodology to accomodate the large sixth-order tensor-structured regression parameter $B$. We develop such methodology in this paper, and return to
this dataset in  Section~\ref{application:suicide}. 
\subsubsection{Distinguishing characteristics of  faces}\label{subseq:lfw}
Distinguishing the visual characteristics of faces is important for
biometrics. The Labeled Faces in the Wild (LFW)
database \citep{Huangetal07} is used for developing
and testing facial recognition methods, and contains over
13000  $250\times250$ color 
images of faces of different individuals along with their
classification  into  ethnic origin,  age group and gender~\citep{AfifiandAbdelhamed19,Kumaretal09}.
We use a subset, totalling 605 images, for which the three attributes of
ethnic origin (African, European or Asian, as specified in the database),
cohort (child, youth, middle-aged or senior) and gender (male or
female) are unambiguous. The color at each pixel is a 3D RGB vector 
so each image (response) is a $250\!\times\! 250\!\times\! 3$
array. The three image attributes can each be 
represented by an indicator vector, so  the covariates (attributes)
have a three-way tensor-variate structure.  Our objective
is to train a linear model to help us distinguish the visual characteristics of
different attributes. Transforming the 3D tensors into 
vectors and fitting~\eqref{eq:MLR} requires a $B$ of 
$250\!\times\!250\!\times\!3\!\times\!2\!\times\!3\!\times\!4$ or 4.5 million
unconstrained parameters and an error covariance matrix $\Sigma$ of
over 17 billion similar parameters, making  accurate inference (from
only 605 observed images) impractical. Methodology that incorporates
the reductions  afforded by the tensor-variate structures of the
responses and the covariates can  redeem the situation. We revisit
this dataset in Section~\ref{application:LFW}. 

\subsection{Related work and overview of our contributions}
The previous examples show the inadequacy of training
\eqref{eq:MLR} on tensor-valued data without additional accommodation
for structure, as the sizes of $B$ and $\Sigma$ in unconstrained
vector-variate regression grow with the dimensions of the
tensor-valued responses and explanatory 
variables. Several regression frameworks~\citep{bietal21,panagakisetal21}
that efficiently allow for tensor
responses~\citep{hoff14,sunandli16,liandzhang17} or
covariates~\citep{guoetal12,fuetal14,zhouetal13,xiaoshanetal18,liandzhang21,rabusseauandkadri16,guhaniyogietal17,ahmedetal20,kongetal20,zhouetal21,dengetal21,poythressetal21,papadogeorgouetal21} have recently been considered.
Tensor-on-tensor regression (ToTR) refers to the case where both the
response and covariates are tensors. In this context, \citet{hoff14}
proposed an outer matrix product (OP) factorization of $B$,
\citet{lock17} suggested {\em canonical polyadic} or CANDECOMP/PARAFAC
(CP) decomposition~\citep{carrollandchang70,
  harshman70}, while \citet{liuetal20} and \citet{llosa18}
factorized $B$ using a tensor ring (TR)~\citep{zhaoetal16b, oseledets11} format. 
Finally, \citet{brandianddimatteo21} considered a Tucker
(TK)~\citep{tucker66,koldaandbader09,bahrietal017} framework but
failed to include necessary constraints in their estimation algorithm.
The CP, TR and OP formats on $B$ allow for quantum dimension reduction
without affecting prediction or discrimination ability of the
regression model. However, these methodologies do not account for
dependence  within tensor observations,  the sampling distribution of
their estimated coefficients and the natural connection that
exists between ToTR and the related analysis of variance
(ANOVA). Here, we propose a general ToTR framework that renders four
low-rank tensor formats on the coefficient $B$: CP, TK,
TR  and the OP, while simultaneously allowing the errors to follow a
\textit{tensor-variate normal (TVN)  distribution}
\citep{hoff11,akdemirandgupta11,ohlsonetal13,manceuranddutilleul13}
that posits a Kronecker structure on the $\Sigma$ of~\eqref{eq:MLR}. 
Assuming TVN-distributed errors allows us to obtain the sampling
distributions of the estimated coefficients under their assumed
low-rank format. Indeed, Section \ref{application:suicide} uses our
derived sampling distributions to produce statistical parametric maps
to help detect significant neurological interactions between
death-related words and suicide attempter/ideator status.
The TVN assumption on the errors also allows us to consider dependence
within the tensor-valued observations. The Kronecker structured
$\Sigma$ in the TVN model renders a different covariance matrix for
each tensor dimension, allowing us to 
simultaneously study multiple dependence contexts within the same
framework. Here we also introduce the notion of tensor-variate ANOVA
(TANOVA) under the ToTR framework, which is analogous to ANOVA and
multivariate ANOVA (MANOVA) being instances of
multiple linear regression (MLR) and MVMLR.

The rest of the paper is structured as follows. 
Section~\ref{sec:methodology} 
first presents notations and network diagrams, low-rank tensor 
formats, the TVN distribution and our preliminary results that we develop
for use in this paper. We formulate  ToTR and TANOVA
methodology with low-rank tensor formats on the covariates, and TVN
errors. We provide algorithms for finding  maximum likelihood (ML)
estimators and study their properties.
Section~\ref{sec:simulation} evaluates performance of our methods in
two simulation scenarios while Section~\ref{sec:data_application}
applies our methodology to the motivating applications of
Sections~\ref{subseq:suicide} and  \ref{subseq:lfw}. We conclude with
some discussion. A supplementary appendix with sections,
theorems, lemmas, figures and equations prefixed with
``S'' is also available.

\section{Theory and Methods}\label{sec:methodology}
This section introduces a regression model with TVN errors and
tensor-valued responses and covariates.  We provide notations and
definitions, then introduce our models and develop algorithms for ML
estimation under the TK, CP, OP and TR low-rank formats. A special
case leads us to TANOVA. We also derive asymptotic properties of our
estimators and computational complexity of our algorithms.
\subsection{Background and preliminary results} \label{sec:preliminaries}
We provide a unified treatment of tensor reshapings
and contractions by integrating the work of \citet{kolda06},
\citet{koldaandbader09}, and \citet{cichockietal17} with our own
results that we use later. 
We use $\tr(.)$,  $(.)'$, and $(.)^-$ to denote the
trace, transpose, and pseudo-inverse, $I_n$ for  the $n\!\times\! n$
identity matrix and $\otimes$ for the Kronecker product
(Section~\ref{app:multilinear_statistics} has additional definitions
and illustrations). We define tensors as multi-dimensional arrays of numbers. The total number of {\em modes} or sides of a tensor is
called its {\em order}. We use  lower-case letters (\textit{i.e.} $x$)
to specify scalars, bold lower-case italics (\textit{i.e.}
$\boldsymbol{x}$) for vectors, upper-case  italics (\textit{i.e.}
$X$) for matrices, and calligraphic scripts
(\textit{i.e.} $\mX$) for higher-order tensors. Random matrices or
vectors are denoted using $\bX$ and random tensors by $\bmX$. We denote the
$(i_1,\sdots,i_p)th$ element of a $p$th order tensor $\mX$ using
$\mX(i_1,\sdots,i_p)$ or $\mX(\bi)$ where  $\bi =
[i_1,\sdots,i_p]'$. The vector outer product, with notation $\circ$, of $p$ vectors generates the $p$th-order tensor $\mX = \circs_{j=1}^p \boldsymbol{x}_j$ with $
(i_1,\sdots,i_p)$th element 
$\mX(\bi) = \prod_{j=1}^p \boldsymbol{x}_j(i_j).$
Any $p$th-order tensor $\mX \in \mathbb{R}^{m_1\times \sdots \times m_p}$ (or $\mathbb{R}^{\times_{j=1}^pm_j}$) can be expressed using the vector outer product as
\begin{equation}\label{tens}
\mX= \sum_{i_1=1}^{m_1} \sdots \sum_{i_p=1}^{m_p} \mX(i_1,\sdots,i_p)
\big( \circs_{q=1}^p \boldsymbol{e}_{i_q}^{m_q} \big),
\end{equation}
where $\boldsymbol{e}_{i}^{m}\in \mathbb{R}^{m}$ is a unit basis
vector with 1 as the $i$th element and 0 everywhere else. Equation \eqref{tens}
 allows us to reshape a tensor by only manipulating the vector
outer product. A $p$th-order diagonal tensor $\mathbb{I}^p_r
\in \mathbb{R}^{\times_1^p r}$ has ones where the indices in each mode
coincide, and zeroes  elsewhere, that is,
\begin{equation}\label{eq:diag}
\mathbb{I}^p_r
=
 \sum\limits_{i=1}^r (\circs\limits_{q=1}^p\boldsymbol{e}_i^r).
\end{equation}
Tensor structures are conveniently represented by tensor network
diagrams that are a recent adaptation~\citep{cichockietal17} from
quantum physics where they were originally introduced to visually
describe many-body problems. Each node in a tensor network diagram
corresponds to a tensor and each edge coming from the node
represents a mode. A node with no edges is a scalar, a node with one
\begin{figure}[!h]
\vspace*{-0.2cm}
\centering
  \mbox{
    \subfloat[$x\in\mathbb
    R$]{\includegraphics[width=.22\columnwidth]{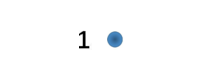}\label{subfig:network-a}}
    \subfloat[$\boldsymbol{x}\in{\mathbb
      R}^2$]{\includegraphics[width=.22\columnwidth]{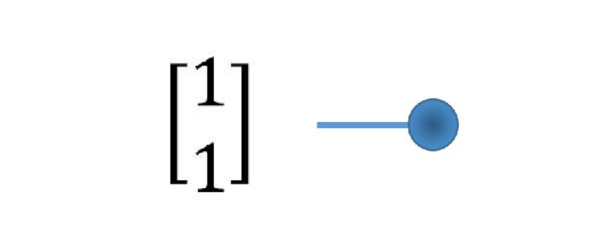}\label{subfig:network-b}}
    \hfill
     \subfloat[$X\in{\mathbb R}^{2\times 2}$]{\includegraphics[width=.28\columnwidth]{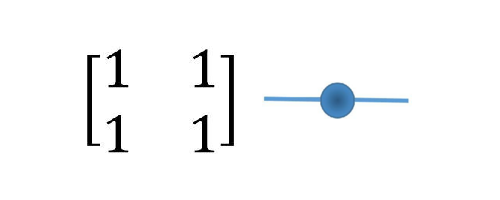}\label{subfig:network-c}}
     \hfill
     \subfloat[$\mX\in{\mathbb R}^{2\times 2\times 2}$]{\includegraphics[width=.28\columnwidth]{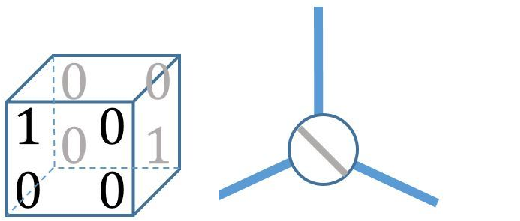}\label{subfig:network-d}}}
     \caption{Tensor network diagrams of a (a) scalar, (b) vector, (c)
       matrix and (d) third-order diagonal tensor.}
     \label{fig:networks}
\end{figure}
edge is a vector and a node with two edges is a matrix. More generally, a node with
$p$ edges is a $p$th-order tensor (Figs.~\ref{subfig:network-a}--\ref{subfig:network-d}). (The angle
between edges has no meaning beyond aesthetics.) Diagonal
tensors as in \eqref{eq:diag} are represented by putting a diagonal across
the node, as in Fig.~\ref{subfig:network-d}.
\subsubsection{Tensor reshapings, contractions, low-rank formats}
The matricization of a tensor is a matrix with its elements arranged differently. The following definition is from \citet{kolda06}
\begin{defn}\label{def:matdef}
Let $\mathscr{S}  = \{ r_1,\sdots,r_L\}$ and $\mathscr{T} =
\{m_1,\sdots,m_M\}$ be ordered sets that partition the set of
modes $\mathscr{M} = \{1,\sdots,p\}$ of $\mX \in
\mathbb{R}^{\times_{j=1}^p m_j}$. Here $L+M=p$. Then if $\bi = [i_1,\sdots,i_p]'$, the matricization $\mX_{(\mathscr{S}\times \mathscr{T})}$ is a matrix of size $(\prod_{q \in \mathscr{S}} m_q) \times (\prod_{q \in \mathscr{T}} m_q)$ defined as
\begin{equation}\label{eq:mat}
\mX_{(\mathscr{S}\times \mathscr{T})}
= \sum\limits_{i_1=1}^{m_1} \sdots \sum\limits_{i_p=1}^{m_p} \mX(\bi)
\big( \bigotimes\limits_{q \in \mathscr{S}} \boldsymbol{e}_{i_q}^{m_q} \big)
\big( \bigotimes\limits_{q \in \mathscr{T}} \boldsymbol{e}_{i_{q}}^{m_{q}} \big)'.
\end{equation}
\end{defn}
We define matricizations in \textit{reverse lexicographic} order to be
consistent with the traditional matrix vectorization. This means that
the $q$ modes in the multiple Kronecker product \eqref{eq:mat} are
selected in reverse order. \begin{table}[h]
 	\caption{\label{table:1}Tensor reshapings defined by
          specifying partitions $(\mathscr{S},\mathscr{T})$ of
          $\mathscr{M}$ in  \eqref{eq:mat}.}
        \centering
\begin{tabular}{ c | c | c | c }
    \hline
   Reshaping & Notation & $\mathscr{S}$ & $\mathscr{T}$ \\[1ex]  \hline
   \begin{tabular}{@{}c@{}}$k$th mode \\matricization \end{tabular}
       &
        $\mX_{(k)}$ & $\{k\}$ & 
        \begin{tabular}{@{}c@{}}
        $\{1,\sdots,k-1,k+1,$\\$\sdots,p\}$
        \end{tabular} \\ [1ex] 
   \begin{tabular}{@{}c@{}}$k$th canonical \\matricization \end{tabular} & $\mX_{<k>}$ & $\{1, \sdots,k\}$ & $\{k+1,\sdots,p\}$ \\ [1ex] 
    vectorization & $\vecc(\mX)$ & $\{1, \sdots,p\}$ & $\varnothing$ \\ \hline
  \end{tabular}
\end{table}
Table \ref{table:1} defines several
reshapings by selecting different partitions
$(\mathscr{S},\mathscr{T})$ of $\mathscr{M}$. These definitions are
clarified in \eqref{defvec}, \eqref{kmode} and
\eqref{canonical}.

Tensor contractions~\citep{cichockietal17} generalize the matrix product to
higher-ordered tensors. We use
$\mX\times_{k_1,\sdots,k_a}^{l_1,\sdots,l_a} \mY$ to denote
the mode-${{l_1,\sdots,l_a}\choose {k_1,\sdots,k_a}}$ product
or contraction between the $(k_1,\sdots,k_a)$ modes of $\mX \in
\mathbb{R}^{\times_{j=1}^p m_j}$ and the
$(l_1,\sdots,l_a)$ modes of $\mY \in \mathbb{R}^{\times_{j=1}^q n_j}$, where $m_{k_1} = n_{l_1}, \ldots, m_{k_a} = n_{l_a}$. This contraction results in a
tensor of order $p+q-2  a$ where the $a$ pairs of modes $(l_j,k_j)$
get \textit{contracted}. A simple contraction between the $k$th mode
of $\mX$ and the $l$th mode of $ \mY$ has 
$(i_1,\sdots,i_{k-1},i_{k+1},\sdots,i_p,j_1,\sdots,j_{l-1},j_{l+1},\sdots,j_q)$th element 
\begin{align}\label{def:modecontraction}
\sum\limits_{t=1}^{m_l}
\mX(i_1,\sdots,i_{k-1},t,i_{k+1},\sdots,i_p)
\mY(j_1,\sdots,j_{l-1},t,j_{l+1},\sdots,j_q).
\end{align}
Similarly, multiple contractions sum over multiple products of the
elements of $\mX$ and $\mY$. Table~\ref{table:2} defines some
contractions using this notation. An important special case is 
{\em partial contraction} that contracts all the  $p<q$ modes of $\mX \in
\mathbb{R}^{\times_{j=1}^p m_j}$ with the first $p$ modes of
$\mY\in \mathbb{R}^{\times_{j=1}^q m_j}$ 
resulting in a tensor  $\langle \mX |\mY \rangle =\mX
\times_{1,\sdots,p}^{1,\sdots,p} \mY$  of size 
 $ \mathbb{R}^{\times_{j=p+1}^q m_j} .$
The partial contraction helps  define ToTR,
and can also be written as a matrix-vector multiplication using
Lemma~\ref{lemma1}~\ref{lemma1:v} (below).
\begin{table}
  \caption{\label{table:2}Tensor contractions, where the contraction
    along one mode is defined as per \eqref{def:modecontraction}. Here $\mX \in \mathbb{R}^{\times_{j=1}^p m_j}$, $\mY \in \mathbb{R}^{\times_{j=1}^q n_j}$, and $X$ and $Y$ are the cases where $p=2$ and $q=2$ respectively.}
  \centering
\begin{tabular}{ c | c | c | c }
    \hline
   Contraction & Notation & Definition & Conditions \\ [1ex] \hline
    matrix product & $XY$ & $X\times_2^1 Y$ & $p=q=2$ \\ [1ex] 
\begin{tabular}{@{}c@{}}$k$th mode \\matrix product \end{tabular}      &$\mX \times_k Y$&$\mX \times_k^2 Y$& $q =2$\\[1ex] 
\begin{tabular}{@{}c@{}}$k$th mode \\vector product \end{tabular}     &$\mX \bar{\times}_k \boldsymbol{y}$&$\mX \times_k^1  \boldsymbol{y}$&$q=1$\\[1ex] 
    inner product & $\langle \mX, \mY \rangle $ & $\mX\times_{1,\sdots,p}^{1,\sdots,q} \mY$ & $p=q$ \\ [1ex] 
    partial contraction & $\langle \mX |\mY \rangle
$&$\mX \times_{1,\sdots,p}^{1,\sdots,p} \mY$&$p<q$\\[1ex] 
\begin{tabular}{@{}c@{}}last mode with \\first mode \end{tabular}    
      & $\mX \times^1 \mY$&$\mX \times_p^1\mY$& --- \\
    \hline
  \end{tabular}
  \vspace*{-0.5cm}
\end{table}
The tensor trace is a self-contraction between the two outer-most modes of a tensor. If $m_1 = m_p$, then
\begin{equation}
\tr (\mX)
= \sum_{i=1}^{m_1} \mX(i,:,:\sdots,:,i),
\end{equation}
whence $\tr (\mX)\in \mathbb{R}^{\times_{j=2}^{p+1} m_j}$. 
The contraction between two distinct modes (from possibly the same
tensor) is represented in tensor network diagrams by joining the
corresponding edges (see Fig.~\ref{fig:contractions} for examples).
\begin{figure}[h]
\vspace*{-0.5cm}
     \centering
     \subfloat[$AB \!\in\! \mathbb{R}^{p\times r}$]{\includegraphics[width=0.24\columnwidth]{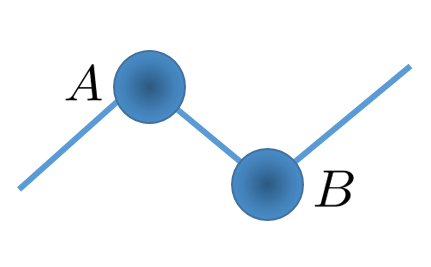}\label{subfig:contractions-a}}
     \hspace*{0.01cm}
     \subfloat[$\langle A | \mC \rangle \!\in\! \mathbb{R}^{r\times s}$]{\includegraphics[width=0.27\columnwidth]{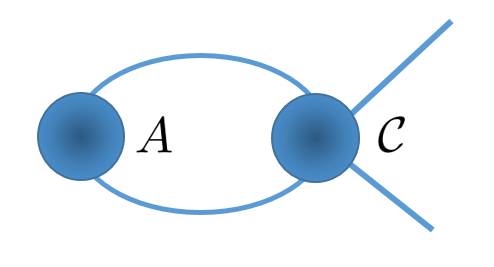}\label{subfig:contractions-b}}
     \hspace*{0.01cm}
     \subfloat[$\langle \mD , \mE \rangle \!\in\! \mathbb{R}$]{\includegraphics[width=0.22\columnwidth]{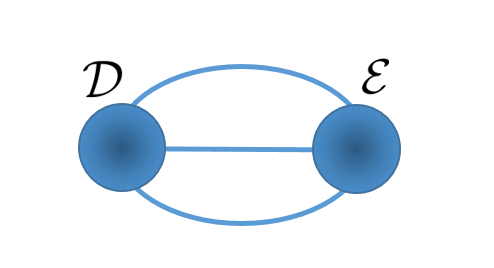}\label{subfig:contractions-c}}
     \hspace*{0.01cm}
     \subfloat[$\tr(\mC) \!\in\! \mathbb{R}^{q\times r}$]{\includegraphics[trim=0 2.5 0 2.5, clip,width=.26\columnwidth]{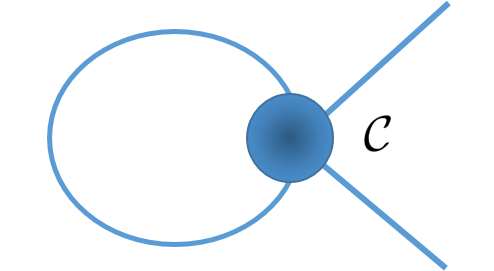}\label{subfig:contractions-d}}
     \caption{Tensor network diagrams of (a) matrix product,
       (b) partial contraction, (c) inner product
       and (d) trace. Here $A\in\mathbb{R}^{p\times q}$, $B\in\mathbb{R}^{q\times r}$,  $\mD,\mE \in \mathbb{R}^{p\times q\times r}$ and $\mC \in \mathbb{R}^{p\times q\times r \times p}$.}
     \label{fig:contractions}
\end{figure}
Also, applying the $k$th mode matricization to every mode of $\mV \in
\mathbb{R}^{\times_{j=1}^p g_j}$ with respect to (WRT) $A_i \in \mathbb{R}^{m_i\times g_i}, i=1,2,\sdots,p$ results in the TK product $\mB =  [\![ \mV; A_1, \sdots,A_p  ]\!]$ defined as
  \begin{align}\label{Tuckerform}
 \mB = \sum\limits_{i_1=1}^{m_1} \sdots \sum\limits_{i_p=1}^{m_p} \mV{(i_1 \sdots i_p)} \big( \circs\limits_{q=1}^p A_q(:,i_q) \big).
  \end{align}
A tensor $\mB$ that can be written as the product in
\eqref{Tuckerform} is said to have a {\em TK format} with {\em TK rank}
$(g_1,\sdots,g_p)$. In this case, $\mV$ is called the \textit{core
  tensor} and $A_1,\sdots,A_p$ are  the \textit{factor
  matrices}. When $g_i<<m_i$ for $i=1,\ldots,p$, the TK format
substantially reduces the number of unconstrained elements in a
tensor and 
its complexity. TK formats are  often fit by higher-order singular
value  (HOSVD)~\citep{DeLathauweretal00} or  LQ (HOLQ) decomposition~\citep{gerardandhoff16}. For a diagonal core tensor $\mV$,
as in \eqref{eq:diag}, the TK format reduces to the CP format of rank $r$. This reduction is equivalent to setting the
tensor $\mB$ as the sum of $r$ vector outer products, where the
vectors correspond to the columns of matrix factors $A_i \in
\mathbb{R}^{m_i\times r}, i= 1,\ldots,p$, 
\begin{equation}\label{CPform}
 \mB =[\![  \boldsymbol{\lambda} ; A_1, \sdots,A_p  ]\!]
=\sum_{i=1}^{r} \boldsymbol{\lambda}(i) 
\big( \circs\limits_{q=1}^p A_q(:,i) \big).
\end{equation}
The vector $\boldsymbol{\lambda}\in \mathbb{R}^r$ contains the
diagonal entries of the core tensor, and is often set to the
proportionality constants that make the matrix factors have unit
column norms. When $\boldsymbol{\lambda}$ is ignored in the
specification of \eqref{CPform}, we assume that $\boldsymbol{\lambda} = [1,1,\sdots,1]'$. 
A tensor $\mB$ is said to have an OP format if it can be written as $\mB = \circ [\![ A_1,A_2 \sdots,A_p  ]\!] $, or
\begin{align}\label{OPform}
\mB = 
\sum\limits_{\substack{i_1,\sdots i_p\\j_1\sdots,j_p}}
(\prod\limits_{q=1}^p A_q[i_q,j_q])
\Big\{
\big( \circs\limits_{q=1}^p \boldsymbol{e}_{j_q}^{h_q} \big) \circ
\big( \circs\limits_{q=1}^p \boldsymbol{e}_{i_q}^{m_q} \big)
\Big\},
\end{align}
where for all $q=1,\sdots,p$, we have  $A_q \in \mathbb{R}^{m_q\times
  h_q}$ and the summation over $i_q$  is
from 1 through $m_q$, and that over $j_q$ is from 1 through $h_q$.
 Our novel OP format is essentially the outer product of multiple matrices, and is useful for expressing the TK product of \eqref{Tuckerform} as a partial contraction between $\mV$ and $\circ [\![ A_1,A_2 \sdots,A_p  ]\!]$, as we shortly state and prove in Theorem~\ref{thm:mat_outer_product}\ref{thm:mat_outer_product:y}.
The derivation needs some properties of tensor products and
reshapings that we prove first in Lemma~\ref{lemma1}, along with
several other properties that are useful for tensor 
manipulations. (Lemma~\ref{lemma1}\ref{lemma1:z},\ref{lemma1:x} and
\ref{lemma1:w} have been stated without proof in \citet{kolda06},
\citet{koldaandbader09}, and \citet{cichockietal17} but we provide
proofs here for completeness.)
\begin{lemma}\label{lemma1} 
Let $\mathcal{X} \in \mathbb{R}^{\times_{j=1}^p m_j}$. Then using the notation of Tables \ref{table:1} and \ref{table:2}, where $k = 1,\ldots,p$,
\begin{enumerate}[label=(\alph*)]
\item \label{lemma1:z} $\mathcal{X}_{<p-1>} = \mathcal{X}_{(p)}'$.
\item \label{lemma1:y} $\vecc(\mathcal{X})
= \vecc(\mathcal{X}_{(1)}) 
= \vecc(\mathcal{X}_{<1>}) 
= \ldots 
=\vecc{(\mathcal{X}_{<p>}}).
$
\item  \label{lemma1:x} $\langle \mathcal{X} , \mathcal{Y} \rangle 
\!=\!  (\vecc{\mathcal{X})}' (\vecc{\mathcal{Y})}
\!=\! \tr{(\mathcal{X}_{(k)}\mathcal{Y}_{(k)}')}$,
 $\mathcal{Y} \in \mathbb{R}^{\times_{j=1}^p m_j}$.
 \item \label{lemma1:w}
$
\vecc[\![ \mathcal{X}; A_1, \sdots,A_p  ]\!] 
=  \big(\bigotimes_{i=p}^1 A_i\big) \vecc( \mathcal{X})
$,\quad where $A_i \in R^{n_i \times m_i}$ for any $n_i \in \mathbb{N} $.

\item \label{lemma1:v}
$
\vecc\langle \mathcal{X} | \mathcal{B} \rangle
= \mathcal{B}_{<p>}' \vecc{ \mathcal{X}}
,
\mathcal{B}\in \mathbb{R}^{(\times_{j=1}^p m_j)\times(\times_{j=1}^q h_j)}.
 $
\item \label{lemma1:u} $
\vecc\mathcal{X}_{(k)} = K_{(k)}
\vecc{(\mathcal{X})}
,\quad where
K_{(k)} \!=\! \big( I_{\prod_{i=k+1}^{p}m_{i}} \otimes K_{\prod_{i=1}^{k-1}m_i,m_k}\big).
$

\end{enumerate}
\end{lemma}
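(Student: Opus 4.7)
The unifying strategy is to write every tensor using the basis-vector expansion \eqref{tens} and every reshaping using \eqref{eq:mat}, then push all reorderings into the Kronecker products of unit vectors $\boldsymbol{e}_{i_q}^{m_q}$. The only algebraic facts needed are the mixed-product property $(A\otimes B)(C\otimes D)=AC\otimes BD$, the identity $\vecc(\boldsymbol{u}\boldsymbol{v}')=\boldsymbol{v}\otimes\boldsymbol{u}$, and the definition of the commutation matrix $K_{m,n}$ as the permutation satisfying $K_{m,n}\vecc(A)=\vecc(A')$ for $A\in\mathbb{R}^{m\times n}$. All six parts will be proved by specializing \eqref{eq:mat} to the relevant partition $(\mathscr{S},\mathscr{T})$ of $\mathscr{M}$ and comparing basis expansions on both sides.

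For \ref{lemma1:z}, I apply Definition~\ref{def:matdef} with $(\mathscr{S},\mathscr{T})=(\{1,\ldots,p-1\},\{p\})$ for $\mathcal{X}_{<p-1>}$ and with the swapped partition $(\{p\},\{1,\ldots,p-1\})$ for $\mathcal{X}_{(p)}$; transposition interchanges the row and column Kronecker factors in \eqref{eq:mat}, which is exactly the effect of swapping $\mathscr{S}$ and $\mathscr{T}$. For \ref{lemma1:y}, in each matricization one side of \eqref{eq:mat} consists of (at most) one mode and the other collects the remaining modes in reverse lexicographic order; stacking columns via $\vecc$ absorbs the row Kronecker factor on top of the column factor, reproducing the full Kronecker product over $\{1,\ldots,p\}$ in reverse lexicographic order, which is $\vecc(\mathcal{X})$. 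For \ref{lemma1:x}, the first equality is the definition $\langle\mathcal{X},\mathcal{Y}\rangle=\sum_{\bi}\mathcal{X}(\bi)\mathcal{Y}(\bi)$ combined with \ref{lemma1:y}, and the trace equality follows from $\tr(AB')=\sum_{i,j}A_{ij}B_{ij}$ together with the fact that the entries of $\mathcal{X}_{(k)}$ and $\mathcal{Y}_{(k)}$ are a bijective relabelling of the tensor entries of $\mathcal{X}$ and $\mathcal{Y}$.

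For \ref{lemma1:w}, I start from \eqref{Tuckerform}, apply $\vecc$ to both sides, and use $\vecc(\circs_{q=1}^p A_q(:,i_q))=\bigotimes_{q=p}^{1}A_q(:,i_q)$ (reverse lex convention). The mixed-product property lets me factor $\bigotimes_{q=p}^{1}A_q(:,i_q)=\bigl(\bigotimes_{q=p}^{1}A_q\bigr)\bigl(\bigotimes_{q=p}^{1}\boldsymbol{e}_{i_q}^{m_q}\bigr)$, and summing the remaining $\mathcal{X}(\bi)\bigotimes_{q=p}^{1}\boldsymbol{e}_{i_q}^{m_q}$ over $\bi$ is $\vecc(\mathcal{X})$ by \ref{lemma1:y}. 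For \ref{lemma1:v}, applying the definition of partial contraction componentwise gives $\langle\mathcal{X}|\mathcal{B}\rangle(\boldsymbol{j})=\sum_{\bi}\mathcal{X}(\bi)\mathcal{B}(\bi,\boldsymbol{j})$; by Definition~\ref{def:matdef} with $\mathscr{S}=\{1,\ldots,p\}$, $\mathscr{T}=\{p+1,\ldots,p+q\}$, the matrix $\mathcal{B}_{<p>}$ has rows indexed by $\bi$ and columns by $\boldsymbol{j}$ in reverse-lex order, so $\mathcal{B}_{<p>}'\vecc(\mathcal{X})$ returns precisely this sum, stacked in the order that matches $\vecc\langle\mathcal{X}|\mathcal{B}\rangle$ by \ref{lemma1:y}.

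The main obstacle is \ref{lemma1:u}, which requires identifying the explicit permutation that carries $\vecc(\mathcal{X})$ into $\vecc(\mathcal{X}_{(k)})$. My plan is to expand $\vecc(\mathcal{X}_{(k)})$ using \eqref{eq:mat} with $\mathscr{S}=\{k\}$, $\mathscr{T}=\{1,\ldots,k-1,k+1,\ldots,p\}$ (in reverse lex order), apply \ref{lemma1:y} to convert this matricization's vectorization into a single Kronecker product of basis vectors, and then compare with $\vecc(\mathcal{X})=\sum_{\bi}\mathcal{X}(\bi)\bigotimes_{q=p}^{1}\boldsymbol{e}_{i_q}^{m_q}$. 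The difference lies solely in the position of the $k$th factor: in $\vecc(\mathcal{X})$ it sits among modes $1,\ldots,k-1$, whereas in $\vecc(\mathcal{X}_{(k)})$ it has been pulled to the innermost slot of the first $k$ factors, which is the action of $K_{\prod_{i=1}^{k-1}m_i,\,m_k}$ on the first $m_k\prod_{i=1}^{k-1}m_i$ coordinates. The modes $k+1,\ldots,p$ are untouched, which accounts for the outer factor $I_{\prod_{i=k+1}^{p}m_i}$. Keeping the reverse-lexicographic bookkeeping straight while tracking which block the commutation matrix acts on is the delicate point; I would handle it by writing $\vecc(\mathcal{X})=\bigl(I_{\prod_{i=k+1}^{p}m_i}\otimes[\text{inner Kronecker block}]\bigr)$ and applying $K_{m,n}\vecc(\boldsymbol{u}\boldsymbol{v}')=\vecc(\boldsymbol{v}\boldsymbol{u}')$ to the inner block.
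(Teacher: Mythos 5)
Your proposal is correct and follows essentially the same route as the paper: every part is handled by expanding tensors and their reshapings in the unit-basis form of \eqref{tens}--\eqref{eq:mat} and manipulating the resulting Kronecker products with the mixed-product rule, $\vecc(\boldsymbol{u}\boldsymbol{v}')=\boldsymbol{v}\otimes\boldsymbol{u}$, and orthonormality of the basis Kronecker products. The only cosmetic difference is in part \ref{lemma1:u}, where the paper organizes the same permutation argument by slicing $\mathcal{X}$ into sub-tensors along modes $k+1,\ldots,p$ and applying parts \ref{lemma1:z} and \ref{lemma1:y} to each slice before stacking, whereas you identify the inner block on which $K_{\prod_{i=1}^{k-1}m_i,\,m_k}$ acts directly from the two full Kronecker expansions; both yield the same $K_{(k)}$.
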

\begin{proof}
See Section \ref{proof:lemma1}. 
\end{proof}  
We now use Lemma \ref{lemma1} to state and prove the following
\begin{thm}\label{thm:mat_outer_product}
 Consider a $p$th-order tensor $\mX$ and matrices $M_1,\ldots,M_p$ such that the TK product with $\mX$ can be formed. Then 
\begin{enumerate}[label=(\alph*)]
\item\label{thm:mat_outer_product:z}$\circ [\![ M_1, \ldots,M_p  ]\!]_{<p>} = \bigotimes_{q=p}^1 M_q'$.
\item\label{thm:mat_outer_product:y}
$
 \langle \mX | \circ [\![ M_1, \ldots,M_p  ]\!] \rangle 
 = 
 [\![ \mX; M_1, \sdots,M_p  ]\!] .
$

\item\label{thm:mat_outer_product:x} For any $k=1,\sdots p$, let  $\mathscr{S} = \{ k,p+k\}$. Then
\vspace*{-0.3cm}
\begin{equation}\label{eq:mat_outer_product}
\begin{split}
\circ [\![ &M_1, \ldots,M_p  ]\!]_{(\mathscr{S}\times \mathscr{S}^\mathsf{c})} 
\\&=(\vecc M_k')(\vecc \circ [\![ M_1, \ldots,M_{k-1},M_{k+1},\ldots,M_p  ]\!])'.
\end{split}
\end{equation}

\end{enumerate}
\end{thm}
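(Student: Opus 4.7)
The three claims all boil down to expanding each side into a sum of basis outer products (or basis Kronecker products) using \eqref{eq:mat} and \eqref{OPform} and matching coefficients. Part (a) is the workhorse: once it is in hand, (b) becomes a one-line vectorization argument via Lemma~\ref{lemma1}, while (c) reduces to recognizing a separation of indices in the OP element.

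For Part (a), the plan is to substitute the defining expression \eqref{OPform} of $\circ [\![M_1,\ldots,M_p]\!]$ into the $p$-canonical matricization from Definition~\ref{def:matdef}. Because the row modes are $\{1,\ldots,p\}$ and the column modes are $\{p+1,\ldots,2p\}$, the resulting matrix is a sum over $(i_1,\ldots,i_p,j_1,\ldots,j_p)$ of $\prod_q M_q[i_q,j_q]$ times $(\bigotimes_{q=p}^{1}\boldsymbol{e}_{j_q}^{h_q})(\bigotimes_{q=p}^{1}\boldsymbol{e}_{i_q}^{m_q})'$. Using the Kronecker identity $(\bigotimes_q u_q)(\bigotimes_q v_q)' = \bigotimes_q u_q v_q'$ in the reverse-lex order, the double sum factors mode-by-mode, and $\sum_{i_q,j_q} M_q[i_q,j_q]\, \boldsymbol{e}_{j_q}^{h_q}(\boldsymbol{e}_{i_q}^{m_q})' = M_q'$, yielding $\bigotimes_{q=p}^{1} M_q'$. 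For Part (b), the plan is to show that both sides have the same vectorization. Lemma~\ref{lemma1}\ref{lemma1:v} gives $\vecc\langle \mX\,|\,\circ[\![M_1,\ldots,M_p]\!]\rangle = \bigl(\circ[\![M_1,\ldots,M_p]\!]_{<p>}\bigr)'\vecc \mX$; substituting Part (a) and using $(A\otimes B)' = A'\otimes B'$ turns this into $\bigl(\bigotimes_{q=p}^{1} M_q\bigr)\vecc \mX$, which by Lemma~\ref{lemma1}\ref{lemma1:w} equals $\vecc [\![\mX;M_1,\ldots,M_p]\!]$.

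For Part (c), the plan is to use the simple but decisive observation that the OP entry $\prod_q M_q[b_q,a_q]$ separates as $M_k[b_k,a_k]\cdot\prod_{q\neq k} M_q[b_q,a_q]$, with the first factor depending only on the modes in $\mathscr{S}=\{k,p+k\}$ and the second on the modes in $\mathscr{S}^\mathsf{c}$. This forces $\circ[\![M_1,\ldots,M_p]\!]_{(\mathscr{S}\times \mathscr{S}^\mathsf{c})}$ to be a rank-one outer product $\boldsymbol{f}\boldsymbol{g}'$. To identify $\boldsymbol{f}$, note that the reverse-order row Kronecker in $\mathscr{S}$ is $\boldsymbol{e}_{b_k}^{m_k}\otimes \boldsymbol{e}_{a_k}^{h_k}$, placing $M_k[b_k,a_k]$ at position $(b_k-1)h_k + a_k$, which is exactly how $\vecc M_k'$ stores that entry. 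To identify $\boldsymbol{g}$, the column Kronecker in reverse order of $\mathscr{S}^\mathsf{c}$ is $\bigotimes_{q=p,q\neq k}^{1}\boldsymbol{e}_{b_q}^{m_q}\otimes \bigotimes_{q=p,q\neq k}^{1}\boldsymbol{e}_{a_q}^{h_q}$, and the position of its nonzero entry matches the reverse-lex (column-major) position of $(a_1,\ldots,\widehat{a_k},\ldots,a_p,b_1,\ldots,\widehat{b_k},\ldots,b_p)$ in $\vecc\circ[\![M_1,\ldots,\widehat{M_k},\ldots,M_p]\!]$, whose value there is $\prod_{q\neq k}M_q[b_q,a_q]$. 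The main obstacle is this last bookkeeping step: verifying that the Kronecker ordering induced by $\mathscr{S}^\mathsf{c}$ on the row side of the matricization lines up with the vectorization ordering of the $2(p-1)$-mode reduced OP, since both orderings involve interleaving $a$- and $b$-indices across the skipped mode $k$. Once the two index layouts are matched, the rank-one decomposition gives \eqref{eq:mat_outer_product} directly.
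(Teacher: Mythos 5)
Your proposal follows the paper's own proof essentially step for step: part (a) by substituting \eqref{OPform} into the $p$-canonical matricization and factoring the sum mode-by-mode into $\bigotimes_{q=p}^1 M_q'$, part (b) by equating vectorizations via Lemma~\ref{lemma1}\ref{lemma1:v}, part (a), and Lemma~\ref{lemma1}\ref{lemma1:w}, and part (c) by separating the index $k$ from the rest and matching the basis-vector layouts to $\vecc M_k'$ and the vectorized reduced OP tensor. The argument is correct and no meaningful gap remains; your index bookkeeping in (c) is, if anything, spelled out more explicitly than in the paper.
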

\begin{proof}
See Section \ref{proof:mat_outer_product}.
\end{proof}

\underline{Remark}: If $\mB = \circ [\![X',X'  ]\!]$ for some matrix
$X$, then
Theorem \ref{thm:mat_outer_product}\ref{thm:mat_outer_product:z}
implies that $\mB_{<2>} = X \otimes X $ 
while $\mB_{(1,3)\times (2,4)} = (\vecc X)(\vecc X)'$ by
Theorem~\ref{thm:mat_outer_product}\ref{thm:mat_outer_product:x}. In other words, $X \otimes X$ and $(\vecc X)(\vecc X)'$ are different
matricizations of the same OP tensor $\mB$, which
formalizes our intuition
because both $X \otimes X$ and $(\vecc X)(\vecc X)'$ have the same number of 
elements, and it motivates naming the format \textit{outer product}. 

Finally, a tensor $\mB$ is said to have a Tensor Ring (TR) format with TR rank $(g_1,\sdots,g_p)$ if it can be expressed as
\begin{equation}\label{TRform}
\mB
= \tr (
\mG_1\times^1\sdots \times^1\mG_p
),
\end{equation}
where $\mG_{i}\in \mathbb{R}^{g_{i-1} \times m_i \times g_i}$ for
$i=1,\sdots,p$ and $g_0 = g_{p}$. The TR format is called the
Matrix Product State (MPS) with closed boundary conditions in many-body
physics \citep{orus14}. When exactly one of the TR ranks is 1, the TR
format is the same as the Tensor Train (TT) format \citep{oseledets11}
and is the MPS with open boundary conditions.

We conclude our discussion of low-rank tensor formats by using tensor
network diagrams to illustrate in Fig.~\ref{fig:formats}, a fourth-order
tensor in the TK, CP and TR formats.
\begin{figure}[h]
     \centering
    \subfloat[{\scriptsize $\tuckL \mV;A_1,A_2,A_3,A_4 \tuckR$}]{\includegraphics[trim=5 0 5 0, clip,width=.32\columnwidth]{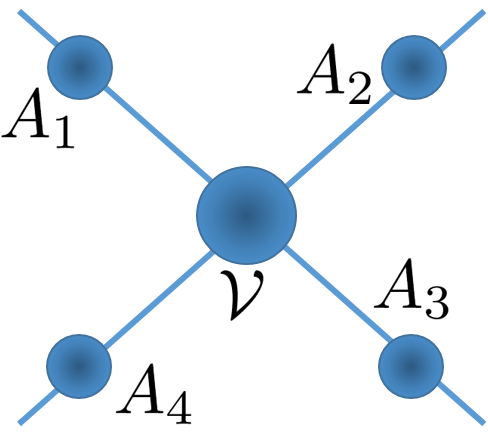}\label{subfig:format-a}}
    \hspace*{0.1cm}
    \subfloat[{\scriptsize $\tuckL B_1,B_2,B_3,B_4\tuckR$}]{\includegraphics[width=.30\columnwidth]{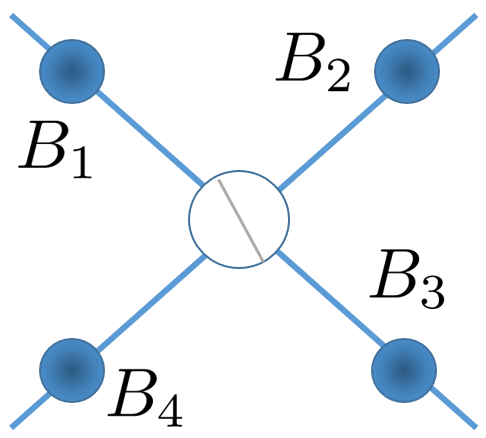}\label{subfig:format-b}}
    \hspace*{0.1cm}
     \subfloat[{\scriptsize $\tr(\mC_1\!\times^1\!\mC_2\!\times^1\!\mC_3\!\times^1\!\mC_4)$}]
{\includegraphics[width=.32\columnwidth]{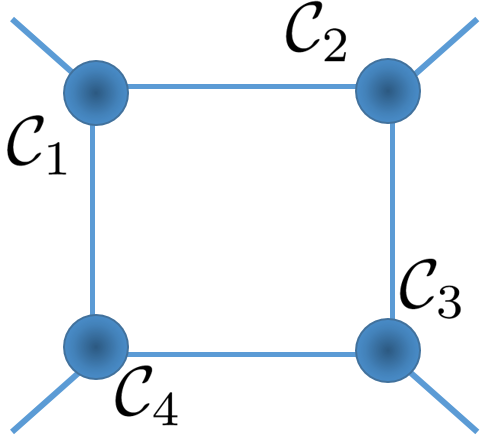}\label{subfig:format-d}}
     \caption{Tensor network diagrams of example fourth-order tensor of
       (a)  TK format \eqref{Tuckerform}, (b) CP format
       \eqref{CPform}, and (c) TR format \eqref{TRform}.} 
     \label{fig:formats}
\end{figure}
\subsubsection{The TVN Distribution}
The matrix-variate normal distribution, abbreviated in
\citet{thompsonetal20} as MxVN to distinguish it from the
vector-variate multinormal distribution (MVN), was studied extensively
in  \citet{guptaandnagar99}. A random matrix $\bX\in
\mathbb{R}^{m_1\times m_2}$  follows a MxVN distribution if
$\vecc(\bX)$ is MVN with covariance matrix $\Sigma_2\otimes \Sigma_1$,
where $\Sigma_k \in \mathbb{R}^{m_k\times m_k}$ for $ k=1,2$. Our TVN
distribution, formulated in Definition \ref{def:tvn}, extends this
idea to the case of higher-order random tensors. For simplicity, we
define the following notation for use in the rest of the paper:
\begin{equation*}
m=\prod_{i=1}^p m_i, \hspace{.1cm}
m_{-k}=m/m_k,\hspace{.1cm}
\Sigma = \bigotimes_{i=p}^1 \Sigma_i,\hspace{.1cm}
\Sigma_{-k}=\bigotimes_{\substack{i=p\\i\neq k}}^1 \Sigma_i.
\end{equation*}
 \begin{defn}\label{def:tvn}
A random tensor $\bmX \in \mathbb{R}^{\times_{j=1}^p m_j}$ follows a
$p$th-order TVN distribution with mean $\mM \in
\mathbb{R}^{\times_{j=1}^p m_j}$ and non-negative definite scale
matrices $\Sigma_i \in \mathbb{R}^{m_i \times m_i}$ for $i =
1,2,\ldots,p$ ({\em i.e.}, $\bmX \sim \N_{\bbm}\big(\mM,\Sigma_1,\Sigma_2\ldots,\Sigma_p\big)$ where $\bbm = [m_1,m_2,\ldots,m_p]'$ ) if $\vecc(\bmX) \sim \N_{m} \big(\vecc(\mM), \Sigma \big). $
\end{defn}

The Kronecker product in Definition \ref{def:tvn} is in reverse order
because we have defined vectorization in reverse lexicographic order. 
Definition \ref{def:tvn} defines the TVN distribution in terms of a
vectorization. We state and prove the distribution of other tensor
reshapings in Theorem \ref{thm:tensnorm_reshap}. These results are
essential in the development of ToTR models
with TVN errors, as they allow us to model the vectorized tensor
errors in terms of the MVN distribution. 
\begin{thm}\label{thm:tensnorm_reshap}
The following statements are equivalent:
\begin{enumerate}[label=(\alph*)]
\item \label{tensnorm_reshap:z}$\bmY \sim \N_{\bbm}(\mM,\Sigma_1,\Sigma_2,\sdots,\Sigma_p)$
\item \label{tensnorm_reshap:y}$\vecc(\bmY) \sim \N_m ( \vecc(\mM), \Sigma)$
\item \label{tensnorm_reshap:x}
$
\bmY_{(k)} \sim \N_{[m_k,m_{-k}]'} ( \mM_{(k)}, \Sigma_k,  \Sigma_{-k}),\quad k=1,2,\sdots,p$
\item For $k=1,2,\ldots,p$ and $\bbm_k = [\prod_{i=1}^k m_i,\prod_{i=k+1}^p m_i ]'$,\label{tensnorm_reshap:w}

$
\bmY_{<k>} \sim \N_{\bbm_k} ( \mM_{<k>},  \bigotimes_{i=k}^1 \Sigma_i, \bigotimes_{i=p}^{k+1} \Sigma_i)
$
\end{enumerate}

\end{thm}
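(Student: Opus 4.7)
My plan is to prove the chain $(a)\Leftrightarrow(b)\Leftrightarrow(d)\Leftrightarrow(c)$ by reducing each claim about the TVN distribution of a reshaping of $\bmY$ to a statement about the MVN distribution of its vectorization, using Lemma \ref{lemma1} to relate these vectorizations, and then manipulating Kronecker products to recover the advertised scale matrices. The equivalence $(a)\Leftrightarrow(b)$ is immediate from Definition \ref{def:tvn}, so this step is free.

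For $(b)\Leftrightarrow(d)$, I would invoke Lemma \ref{lemma1}\ref{lemma1:y} to note that $\vecc(\bmY_{<k>})=\vecc(\bmY)$ and $\vecc(\mM_{<k>})=\vecc(\mM)$, so the canonical matricization does not change the underlying vector. Applying Definition \ref{def:tvn} at order two to the candidate MxVN distribution in $(d)$, the claim becomes $\vecc(\bmY)\sim\N_m\bigl(\vecc(\mM),\bigl(\bigotimes_{i=p}^{k+1}\Sigma_i\bigr)\otimes\bigl(\bigotimes_{i=k}^{1}\Sigma_i\bigr)\bigr)$. Associativity of $\otimes$ collapses this Kronecker product to $\bigotimes_{i=p}^{1}\Sigma_i=\Sigma$, matching $(b)$ exactly.

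For $(b)\Leftrightarrow(c)$, Lemma \ref{lemma1}\ref{lemma1:u} gives $\vecc(\bmY_{(k)})=K_{(k)}\vecc(\bmY)$ with $K_{(k)}=I_{\prod_{i=k+1}^{p}m_i}\otimes K_{\prod_{i=1}^{k-1}m_i,\,m_k}$; since $K_{(k)}$ is a permutation matrix, the transformed vector is $\N_m\bigl(\vecc(\mM_{(k)}),K_{(k)}\Sigma K_{(k)}'\bigr)$ (using the same lemma for $\mM$). Invoking Definition \ref{def:tvn} at order two, $(c)$ reduces to the identity $K_{(k)}\Sigma K_{(k)}'=\Sigma_{-k}\otimes\Sigma_k$. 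To establish this, I would split $\Sigma=\bigl(\bigotimes_{i=p}^{k+1}\Sigma_i\bigr)\otimes\Sigma_k\otimes\bigl(\bigotimes_{i=k-1}^{1}\Sigma_i\bigr)$ and note that the identity block in $K_{(k)}$ acts trivially on the leading factor, while the commutation matrix $K_{\prod_{i=1}^{k-1}m_i,\,m_k}$ uses the standard identity $K_{p,q}(A\otimes B)K_{p,q}'=B\otimes A$ to swap $\Sigma_k$ past $\bigotimes_{i=k-1}^{1}\Sigma_i$, yielding $\bigl(\bigotimes_{i=p}^{k+1}\Sigma_i\bigr)\otimes\bigl(\bigotimes_{i=k-1}^{1}\Sigma_i\bigr)\otimes\Sigma_k=\Sigma_{-k}\otimes\Sigma_k$.

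The main obstacle is this last bookkeeping step: aligning the block structure of $K_{(k)}$ with the Kronecker factorization of $\Sigma$ and verifying that the reordering produced by the commutation identity matches the convention used in the definition of $\Sigma_{-k}$. Everything else follows from applying Lemma \ref{lemma1} and Definition \ref{def:tvn} twice each, so if this index manipulation is handled cleanly, the four statements chain together without further work.
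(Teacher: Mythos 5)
Your proposal is correct and follows essentially the same route as the paper: (a)$\Leftrightarrow$(b) from Definition~\ref{def:tvn}, (b)$\Leftrightarrow$(d) from Lemma~\ref{lemma1}\ref{lemma1:y} plus associativity of the Kronecker product, and (b)$\Leftrightarrow$(c) from Lemma~\ref{lemma1}\ref{lemma1:u} together with the identity $K_{(k)}\Sigma K_{(k)}'=\Sigma_{-k}\otimes\Sigma_k$. The only difference is that the paper asserts this last identity without derivation, whereas you correctly sketch it via the commutation-matrix property (Property~\ref{prop:Com}\ref{prop:Com:b} in the supplement).
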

\begin{proof} 
  \ref{tensnorm_reshap:z} and \ref{tensnorm_reshap:y} are equivalent,
  following Definition~\ref{def:tvn} while \ref{tensnorm_reshap:y} and
  \ref{tensnorm_reshap:x} are so from
  Lemma~\ref{lemma1}\ref{lemma1:u} with $K_{(k)}\Sigma K_{(k)}' = 
  \Sigma_{-k}\otimes \Sigma_k$. Further, \ref{tensnorm_reshap:y} and
  \ref{tensnorm_reshap:w} are  equivalent because of
  Lemma~\ref{lemma1}\ref{lemma1:y}.  
\end{proof}
The density of $\bmY \sim \N_{\bbm}(\mM,\Sigma_1,\Sigma_2,\sdots,\Sigma_p)$ is
$
f(\mY;\mM,\Sigma)  
= |2\pi\Sigma|^{-1/2}\exp\Big\{-\frac{1}{2} D_\Sigma^2(\mY,\mM)\Big\},
$
where $D_\Sigma^2(\mY,\mM)$ is the squared Mahalanobis distance
between $\mY$ and $\mM$, and
has the equivalent representation
\begin{align}\label{eq:mahalanobis1} 
\begin{split}
D_\Sigma^2(\mY,\mM) 
&=  \vecc(\mY-\mM)'\Sigma^{-1}\vecc(\mY-\mM)\\
&\equiv  \langle(\mY-\mM),[\![(\mY-\mM); \Sigma_1^{-1},\Sigma_2^{-1}, \sdots,\Sigma_p^{-1}  ]\!]\rangle,
\end{split}
\end{align}
by Lemmas
\ref{lemma1}\ref{lemma1:x} and \ref{lemma1}\ref{lemma1:w}.
Property~\ref{prop:Kron} provides similar alternative expressions for the
determinant $\det{(\Sigma)}$ of $\Sigma$.
\subsection{Tensor-variate Linear Models with TVN Errors}
\subsubsection{Tensor-on-Tensor Regression}
We formulate the ToTR model as
\begin{equation}\label{eq:model}
\bmY_i =\Upsilon + \langle  \mX_i|  \mB \rangle +\bmE_i
,\quad
i = 1,2\sdots,n,
\end{equation}
where the response $\bmY_i\!\in\!\mathbb{R}^{\times_{j=1}^p m_j}$
and the covariate $\mX_i\in \mathbb{R}^{\times_{j=1}^l h_j}$ are both
tensor-valued, $\bmE_i \overset{iid}{\sim} \N_{\bbm} (0, \sigma^2
\Sigma_1,\Sigma_2,\ldots,\Sigma_p)$ is the TVN-distributed error, $\mB\in
\mathbb{R}^{ (\times_{j=1}^l h_j)\times (\times_{j=1}^p m_j)}$ is the (tensor-valued) regression
parameter and $\Upsilon$ 
is the (tensor-valued) intercept. This model is essentially the classical
MVMLR model but exploits the
tensor-variate structure of the  covariates and responses to reduce the
total number of parameters. To see this, we apply
Lemma~\ref{lemma1}\ref{lemma1:v} and
Theorem~\ref{thm:tensnorm_reshap}\ref{tensnorm_reshap:y} to
vectorize \eqref{eq:model} as  
$
\vecc(\bmY_i)  = \vecc(\Upsilon) + \mB_{<l>}'\vecc(\mX_i)+ \boldsymbol{e}_i,
$
where $\boldsymbol{e}_i\overset{iid}{\sim}\N_m(0,\sigma^2\Sigma)$ is the error.
 This formulation leads to a MVMLR model with intercept, which can be incorporated into the
covariates as $ [\vecc(\Upsilon)\mB_{<l>}'][1,\vecc(\mX_i)']'$. But the
covariate $[1,\vecc(\mX_i)']'$ is then no longer a vectorized tensor and 
we can not exploit the tensor structure of $\mX_i$. To obviate this
possibility,~\eqref{eq:model} includes a separate intercept term $\Upsilon$.

Imposing a low-rank format on $\mB$ proffers several
advantages. First, it makes the regression model practical to use, as
accurate estimation of an unstructured version of $\mB$ may otherwise
be prohibitive when dimensionality is high relative to sample
size. Second, $\mB$ can be interpreted, in spite of its high
dimensions, as being explainable through a few lower-dimensional
tensor factor matrices (Fig.~\ref{fig:regressors}).  
\begin{figure}[h]
\vspace*{-0.5cm}
     \centering
     \subfloat[][]{\includegraphics[trim=0 0 20 0, clip,width=0.25\linewidth]{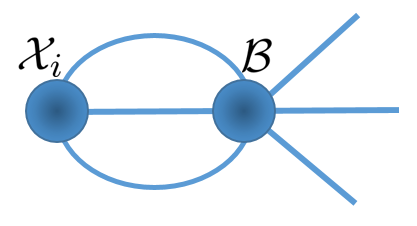}\label{subfig:regress-a}}
     \subfloat[][]{\includegraphics[width=0.25\linewidth]{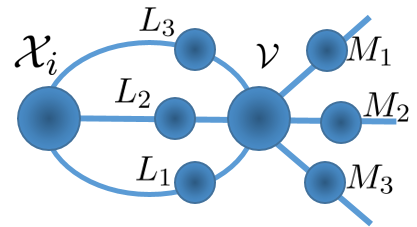}\label{subfig:regress-b}}
     \subfloat[][]{\includegraphics[width=0.25\linewidth]{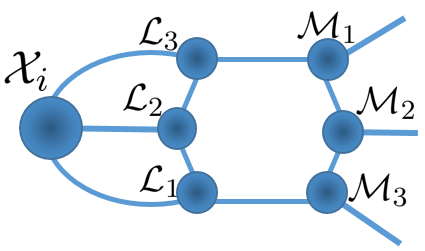}\label{subfig:regress-c}}
     \subfloat[][]{\includegraphics[width=0.25\linewidth]{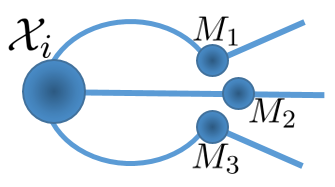}\label{subfig:regress-d}}
     \caption{Tensor network diagrams of tensor-on-tensor regression
       when both the response and the covariates are third-order
       tensors and for (a) $\mY_i = \langle \mX_i | \mB\rangle$ and (b-d)
       the special cases when the $\mB$ in $\langle \mX_i |
       \mB\rangle$ are of (b) Tucker, (c) TR and (d) OP formats (illustrated in Fig. \ref{fig:formats}). }
     \label{fig:regressors}
\end{figure}
These explanations mirror the many-body problem in physics, where
weakly-coupled degrees of freedom are often embedded in
ultra-high-dimensional Hilbert spaces
\citep{orus14,wolfetal08}. 
We now turn to the problem of learning $\mB$ in the given setup. 

The dispersion matrix $\Sigma$ in the TVN distribution specification
of the $\bmE_i$s in~\eqref{eq:model} 
is Kronecker-separable and leads to 
the number of unconstrained parameters from $(\prod_{i=1}^p m_i)\times
(\prod_{i=1}^p m_i +1)/2 $ to $\sum_{i=1}^pm_i (m_i +1)/2$. 
Further, this Kronecker-separable structure is intuitive because it
assigns a covariance matrix to each tensor-response dimension. This
allows us to separately incorporate different dependence contexts that
exist within a tensor. For example, a tensor may have temporal and
spatial contexts in its modes. Kronecker separability allows us to
assign a covariance matrix to each of these different
contexts. However, Kronecker separability results in unidentifiable
scale matrices $(\Sigma_i)$, as $cA \otimes 
B = A \otimes cB$ for any matrices $A, B$ and $c\neq 0$, so we 
 constrain the scale matrices to each have
$\Sigma_i(1,1) = 1$, and introduce a parameter $\sigma^2$ to
capture an overall proportional scalar variance. This
approach further reduces the number of parameters by $p-1$ and imposes
a curved exponential family distribution on the errors~\citep{srivastavaetal08}.  
\subsubsection{The TANOVA model}\label{subseq:TANOVA}
Suppose that  we observe  independent tensors
$\bmY_{j_1,\sdots,j_l,i}\in \mathbb{R}^{\times_{j=1}^p m_j}$ $(i =1,\ldots, n_{j_1,\ldots,j_l})$,  where $j_k$
 $(k=1,\sdots,l)$, indexes the $k$th categorical (factor) variable
 of $h_k$ levels, that is, $j_k\in\{1,\sdots,h_k\}$. 
 The tensor-valued parameter $\mB$ encoding the dimensions of $\bmY_{i,j_1,\sdots,j_l}$ and all the possible factor classes is of size $h_1\times\sdots\times
h_l\times m_1\times\sdots\times m_p$. To see this, consider
the $l$th ordered single-entry tensor $\mX_{j_1,\sdots,j_l} =
\circs_{q=1}^l \boldsymbol{e}_{i_q}^{h_q}$ that is unity at
$(j_1,\sdots,j_l)$ and zero everywhere else. Then $\langle \mX_{j_1,\sdots,j_l} | \mB \rangle \in \mathbb{R}^{\times_{j=1}^p m_j}$ and
\begin{equation}\label{eq:TANOVA}
\langle \mX_{j_1,\sdots,j_l} | \mB \rangle = \mB[j_1,\sdots,j_l,:,:,\sdots,:].
\end{equation}
Therefore, modeling $\mathbb{E}(\bmY_{i,j_1,\sdots,j_l})$ as $\langle \mX_{j_1,\sdots,j_l} | \mB \rangle$ results in each factor combination $(j_1,\sdots,j_p)$ getting assigned its own mean parameter as a sub-tensor of $\mB$. This high-dimensional parameter $\mB$ is identical to  cell-means MANOVA if we vectorize \eqref{eq:TANOVA} using Lemma \ref{lemma1}\ref{lemma1:v} as
$
\mB'_{<l>}\vecc(\mX_{j_1,\sdots,j_l}) = \vecc(\mB[j_1,\sdots,j_p,:,:,\sdots,:]),
$
in which case $\vecc(\mX_{j_1,\sdots,j_p}) = \otimes_{q=l}^1 \boldsymbol{e}_{i_q}^{h_q}$ corresponds to a row in the MANOVA design matrix. 
Although this formulation is fundamentally the same as \eqref{eq:TANOVA}, the latter helps us visualize and formulate a low-rank format on $\mB$.
Based on the format, we refer to tensor-valued response regression
models with the mean in (\ref{eq:TANOVA}) as TANOVA$(l,p)$, where $l$
is the number of different factors and $p$ is the order of the tensor-valued
response. In this way, because scalar and vector variables are tensors of
order 0 and 1, ANOVA and MANOVA correspond to TANOVA$(1,0)$
and TANOVA$(1,1)$, respectively. In general, TANOVA$(l,p)$ with TVN errors can be expressed as 
\begin{equation}\label{eq:model_noM}
\bmY_i =\langle  \mX_i|  \mB \rangle +\bmE_i
,\quad
\bmE_i \overset{iid}{\sim} \N_{\bbm}
(0, 
\sigma^2
\Sigma_1,\Sigma_2,\sdots,\Sigma_p),
\end{equation}
where $\mX_i$ is the single-entry tensor that contains all the
assigned factors of $\bmY_i$, for $i = 1,\ldots,n$. Model
\eqref{eq:model_noM} is a ToTR model as in \eqref{eq:model} with no
intercept. This is analogous to ANOVA and MANOVA being special cases
of univariate and multivariate multiple linear regressions,
respectively. Further, the log-likelihood function
of~\eqref{eq:model_noM} is  
\begin{equation}\label{eq:likel_noM}
\ell =
-\dfrac{n}{2} \log|2\pi\sigma^2\Sigma|
-\dfrac{1}{2\sigma^2}\sum_{i=1}^n
D_\Sigma^2(\bmY_i,\langle  \mX_i|  \mB
\rangle).
\end{equation}

\subsection{Parameter Estimation}\label{subsec:parest}
We obtain estimators of \eqref{eq:model}  before deriving their properties.
\subsubsection{Profiling the intercept}\label{sec:est_intercept}
We first show that the intercept in \eqref{eq:model} can be profiled
out by centering the covariates and the responses. To see this, we express the loglikelihood in terms of $\Upsilon$ as
\begin{equation}\label{eq:tensnorm1}
\ell=
-\dfrac{1}{2\sigma^2}\sum_{i=1}^n
D_\Sigma^2(\bmY_i,\Upsilon+\langle  \mX_i|  \mB
\rangle).
\end{equation}
We define the tensor differential of an inner product WRT $\Upsilon$ using the matrix differential $\langle\partial \Upsilon,\mS\rangle = \tr(\partial\Upsilon_{(1)}\mS_{(1)}')$.  Applying it to \eqref{eq:tensnorm1} yields
$$
\partial \ell(\Upsilon) = \dfrac1{\sigma^2}\langle \partial \Upsilon, [\![ 
\bmS ;
\Sigma_1^{-1},\sdots,\Sigma_p^{-1} ]\!] \rangle,
$$
where $n^{-1}\bmS\!=\!\bmbY\!-\!\langle\mbX|\mB\rangle\!-\!\Upsilon$ and $\bmbY,\mbX$ are the averaged responses
and covariates. Now, $\partial \ell(\Upsilon)\!=\!0$ if $\bmS\!=\!0$,
and after profiling on $\mB$, the ML estimator (MLE) of $\Upsilon$ is
$
\hat{\Upsilon}(\mB) =\bmbY - \langle \mbX| \mB\rangle.
$
Setting  $\Upsilon$ in~\eqref{eq:tensnorm1} to be
$\hat{\Upsilon}(\mB)$  yields~\eqref{eq:likel_noM} with centered responses and covariates, so we
assume without loss of generality (WLOG) that~\eqref{eq:model_noM} has
no intercept, and estimate the other parameters. 
Our estimation uses block-relaxation~\citep{deleeuw94} to 
optimize~\eqref{eq:likel_noM}: we partition the
parameter space into blocks 
and serially optimize the parameters in each 
block while holding fixed the other parameters.
\subsubsection{Estimation of $\mB,\Sigma_1,\Sigma_2,\ldots,\Sigma_p$  given $\sigma^2$}
Our estimates simplify as per the format of $\mB$ so we
consider each case individually, before providing an overview.
\paragraph{\underline{TK format}}\label{sec:estTuck}
Let $\mB$ have TK format of rank
$(c_1,\sdots,c_l,d_1,\sdots,d_p)$:
\begin{equation}\label{eq:reg_Tuck}
\mB_{TK} =[\![  \mV ; L_1, \sdots,L_l,M_1,\sdots,M_p  ]\!],
\end{equation}
where $M_k'\Sigma_k^{-1}M_k = I_{d_{k}}$ for $k=1,\sdots,p$. Then the
number of parameters to be estimated goes down from the unconstrained 
$\prod_{i=1}^l h_i \prod_{i=1}^p m_i$  to $\prod_{i=1}^l c_i
\prod_{i=1}^p d_i + \sum_{i=1}^l c_ih_i+\sum_{i=1}^p d_im_i$. The
constraint $M_k'\Sigma_k^{-1}M_k = I_{d_{k}}$ greatly simplifies estimation
and inference. Using \eqref{eq:reg_Tuck}, we 
vectorize \eqref{eq:model_noM} for any $k =1,\sdots,l$ as
\begin{equation}\label{eq:Tucker_L}
\vecc (\mY_i) = \mH_{ik<2>}^{TK} \vecc (L_k) + \boldsymbol{e}_i
\end{equation}
where 
$\mH_{ik<2>}^{TK}$ is the 2-canonical matricization of the tensor
\begin{align}\label{eq:H_tucker}
\begin{split}
\mH_{ik}^{TK}=\mX_i\times_{1,\sdots,k-1,k+1,\sdots,l}^{1,\sdots,k-1,k+1,\sdots,l} [\![& \mV; L_1, \sdots ,L_{k-1},
\\&I_{h_k},L_{k+1},\sdots,L_{l},M_1,\sdots, M_{p}  ]\!]
\end{split}
\end{align}
and $ \boldsymbol{e}_i$s are i.i.d $\N_{m}(0,\sigma^2\Sigma)$.
Optimizing~\eqref{eq:likel_noM} WRT $L_k$ forms its own block, which
corresponds to a MVMLR model where, for $S_k^{TK}
= \sum_{i=1}^n  
 (\mH_{ik<2>}^{TK} \Sigma^{-1}  {\mH_{ik<2>}^{TK}}')$, 
 \begin{align}\label{eq:Tuckerest_L}
\vecc(\widehat{L_k}) = 
(S_k^{TK})^{-1}
\big(\sum\limits_{i=1}^n 
\mH_{ik<2>}^{TK}\Sigma^{-1} \vecc{(\mY_i )} \big).
 \end{align}
The computation of $S_k^{TK}$ is greatly simplified based on the constraint that $M_k'\Sigma_k^{-1}M_k=I_{d_k}$ for all $k=1,2,\dots,p$. 
 For fixed $L_1,\sdots,L_p,\Sigma_1,\sdots,\Sigma_p$, we estimate
 $M_1,\sdots,M_p,\mV$. We first show that 
$\mV$ can be profiled from the loglikelihood for fixed $M_1,\sdots,M_p$. To see this, we write an alternative vectorized form of \eqref{eq:model_noM} as
$$
\vecc (\mY_i)= \big( \otimes_{i=p}^1 M_i \big) \mV_{<l>}'\bw_i
   + \boldsymbol{e}_i, 
$$
 where $\bw_i = \vecc[\![  \mX_i ; L_1',L_2', \sdots,L_l']\!]$. Letting $Z\! =\! Y\!-\!M\mV_{<l>}'W$, for $M=\otimes_{k=p}^1 M_k$, $Y =
[\vecc \mY_1 \ldots \vecc \mY_n]$, and $ W\! =\! [\bw_1 \ldots
\bw_n]$ simplifies ~\eqref{eq:likel_noM} to
\begin{equation}\label{eq:Tucker_Likel1}
\ell = 
-\frac{n}{2}\log|\Sigma| 
-\frac{1}{2\sigma^2} \Big\{ \tr(Z'\Sigma^{-1}Z) \Big\}.
\end{equation}
Optimizing \eqref{eq:Tucker_Likel1} for fixed $M_1,\sdots,M_p$ yields the profiled MLE
\begin{equation}\label{eq:Tuckerest_V}
\hat{\mV}_{<l>}(M_1,\Sigma_1,\sdots,M_p,\Sigma_p) =W^{-'}Y'(\bigotimes_{k=p}^1\Sigma_k^{-1}M_k ),
\end{equation}
where $W^-$ is the right inverse of $W$. Therefore, given values of
all $M_k$s, we obtain $\mhV$ by simply inserting them in
\eqref{eq:Tuckerest_V}. To estimate $M_k$ we profile $\mhV$ out of the loglikelihood by replacing 
\eqref{eq:Tuckerest_V} into \eqref{eq:Tucker_Likel1}, and expressing it up to a constant as
\begin{equation} \label{eq:Tucker_VLikel}
\ell(M_k,\Sigma_k) = 
\frac{1}{2\sigma^2}  || M_k'\Sigma_k^{-1}Q_k||^2_2,
\end{equation}
where $Q_k\!=\![\![  \mY_{T} ; M_1'\Sigma_1^{-1},\sdots,M_{k-1}'\Sigma_{k-1}^{-1},I_{m_k},
M_{k+1}'\Sigma_{k+1}^{-1},\sdots,$ $M_p'\Sigma_p^{-1}, W^-W]\!]_{(k)}$
and $\mY_{T}\in \mathbb{R}^{(\times_{j=1}^p m_j)\times
  n}$ is such that $\mY_{T}(:,\sdots,:,i) $ $= \mY_i$ for
$i=1,\sdots,n$. From \eqref{eq:Tucker_VLikel}, the MLE of $M_k$ is obtained via generalized SVD of $Q_k$ \citep{Abdi07}:
\begin{equation}\label{eq:Tuckerest_M}
\hat{M}_k(\Sigma_k) 
= \argmax_{M_k: M_k'\Sigma_k^{-1}M_k = I_{d_k}} || M_k'\Sigma_k^{-1}Q_k||^2_2 
= \Sigma_k^{1/2}U,
\end{equation}
with the leading $d_k$ left singular vectors of $\Sigma_k^{-1/2}Q_k$
as the columns of $U$. To estimate $\Sigma_k$ at fixed $\mB_{TK}$, we
write \eqref{eq:Tucker_Likel1}  as
\begin{equation}\label{eq:Tucker_MLikel}
\ell(\Sigma_k) = 
-\frac{nm_{-k}}{2}\log|\Sigma_k| 
-\frac{1}{2\sigma^2} \tr (\Sigma_k^{-1}S_k),
\end{equation}
where $S_k = \sum_{i=1}^n \mZ_{i(k)}\Sigma_{-k}^{-1} \mZ_{i(k)}'$ and $\mZ_{i(k)} = \mY_i-  \langle\mX_i|  \mB \rangle$. 
The MLE of $\Sigma_k$ under the (TK format) constraint of \eqref{eq:reg_Tuck} is
\begin{equation}\label{eq:Tuckerest_S}
\hat{\Sigma}_k
= \argmax_{\Sigma_k: \Sigma_k[1,1] = 1} \ell(\Sigma_k) = ADJUST(nm_{-k},\sigma^2,S_k),
\end{equation}
where the \textit{ADJUST} procedure is as introduced in
\citet{glanzandcarvalho18} and that was shown to satisfy the Karush-Kuhn-TK (KKT)
conditions. Without this constraint, the MLE is
$S_k/(nm_{-k}\sigma^2)$.  Further reductions can be obtained by
imposing additional parameterized structures on $\Sigma_k$s, as needed.
Our block relaxation algorithm, detailed in Algorithm~\ref{alg:1}, is
initialized by methods discussed in
Section~\ref{subsec:computation_normal}.
\begin{algorithm}\label{alg:1}
    \SetKwInOut{Input}{Initial values} 
    \Input{$k=0$, $\hsigma^{2(0)}, \hL_1^{(0)},\sdots,\hL_l^{(0)},$ $(\hM_1^{(0)},\hSigma_1^{(0)}),\sdots,(\hM_p^{(0)},\hSigma_p^{(0)})$}
    Center the data while saving the means $\mbX$, $\mbY$.\\
    \While{convergence criteria is not met}{
    	\For{$j=1,2,\sdots,p$}{$\hM_j^{(k+1)}$ as per  \eqref{eq:Tuckerest_M}} 
	$\mhV^{(k+1)}$ as per \eqref{eq:Tuckerest_V}\\
	\For{$j=1,2,\sdots,l$}{$\hL_j^{(k+1)}$ as per \eqref{eq:Tuckerest_L}} 
	\For{$j=1,2,\sdots,p$}{
	$\hSigma_j^{(k+1)}$ as per \eqref{eq:Tuckerest_S}\\
	$\hsigma^{2(k+1)}$ as per \eqref{eq:sig2}
	} 
	$k \leftarrow k + 1$
    }
    $\hat{\Upsilon} = \mbY - \langle \mbX | \mhB_{TK} \rangle$
    \caption{Block-relaxation algorithm for ToTR~\eqref{eq:model} with TK formatted $\mB$  }
\end{algorithm}
\paragraph{\underline{CP  format}}\label{sec:estCP}
We now optimize \eqref{eq:likel_noM} when $\mB$ is 
\begin{equation}\label{eq:reg_CP}
\mB_{CP} =[\![  \boldsymbol{\lambda} ; L_1,L_2, \sdots,L_l,M_1,M_2,\sdots,M_p  ]\!],
\end{equation}
that is, of CP format of rank $r$.
Then, with $\Sigma_k=I_{m_k}$ for $k=1,2,\sdots,p$, 
\eqref{eq:model_noM} reduces to the framework of
\citet{lock17}. The CP format reduces the number of parameters
in $\mB$  from $\prod_{i=1}^p m_i \prod_{i=1}^l h_i$ to
$r(\sum_{i=1}^p m_i + \sum_{i=1}^l h_i)$.  Here also, we optimize
\eqref{eq:likel_noM} 
via a  block-relaxation algorithm. The $k$th block corresponds to
$(M_k,\Sigma_k)$ for $k=1,2,\sdots,p$ and can be
estimated in a MVMLR framework by  applying Theorem
\ref{thm:tensnorm_reshap}\ref{tensnorm_reshap:x} on the $k$th mode matricized form of \eqref{eq:model_noM}: 
\begin{equation}\label{eq:reg_CPkth}
{\mY_i}_{(k)} =  M_kG_{ik}^{CP}+E_i
, \quad E_i \overset{iid}{\sim} 
\N_{[m_k, m_{-k}]'}(0,\sigma^2\Sigma_k,\Sigma_{-k}),
\end{equation}
where $G_{ik}^{CP}\equiv\mG_{ik(k)}^{CP}$ is the $k$th mode matricization of 
$
\mG_{ik}^{CP} \!=\!
[\![\langle  [\![\mX_i;L_1',\sdots,L_l']\!]| \mathcal{I}_r^{p+l} \rangle;M_1,\sdots,M_{k-1},I_{m_k},
M_{k+1},\sdots,M_p]\!].
$
Additional simplifications of $G_{ik}^{CP}$ are possible, for example,
using (Section ~\ref{supp:CP})  the Khatri-Rao product
($\odot$)~\citep{koldaandbader09}. When all parameters except $(M_k,\Sigma_k)$ are held fixed, \eqref{eq:reg_CPkth} matches a MVMLR model with
loglikelihood  
\begin{equation}\label{eq:reg_MLECP}
\ell(\Sigma_k,M_k) = \dfrac{nm_{-k}}{2}\log|\Sigma_k^{-1}|-\dfrac{1}{2\sigma^2}\tr(\Sigma_k^{-1} S_k),
\end{equation}
with $S_k  = \sum_{i=1}^n Z_{ik}\Sigma_{-k}^{-1}Z_{ik}'$, where $Z_{ik} =
{\mY_i}_{(k)} -  M_kG_{ik}^{CP}$. Then 
the MLEs are 
\begin{align}\label{eq:CPM}
\begin{split}
&\hat{M}_k = \sum\limits_{i=1}^n {\mY_i}_{(k)} \Sigma_{-k}^{-1}
 {G_{ik}^{CP}}'\Big[\sum\limits_{i=1}^n G_{ik}^{CP}
\Sigma_{-k}^{-1}   {G_{ik}^{CP}}' \Big]^{-1},
\\
&\hat{\Sigma}_k(M_k) = ADJUST(nm_{-k},\sigma^2,S_k).
\end{split}
\end{align}
The matrices $\sum_{i=1}^n {\mY_i}_{(k)} \Sigma_{-k}^{-1}
 {G_{ik}^{CP}}'$, $\sum_{i=1}^n G_{ik}^{CP}
\Sigma_{-k}^{-1}{G_{ik}^{CP}}'$ are substantially simplified in Section~\ref{supp:CP}. We estimate $\Sigma_k$ 
by directly optimizing~\eqref{eq:reg_MLECP}. The other $l$
blocks in the block-relaxation algorithm correspond to
$L_1,\sdots,L_l$ and are each MVMLR models obtained by
vectorizing \eqref{eq:model_noM} as:
 \begin{equation}\label{eq:reg_CPvec}
 \vecc (\mY_i) =   H_{ik}^{CP} \vecc (L_k) + \boldsymbol{e}_i
, \quad \boldsymbol{e}_i\overset{iid}{\sim} 
\N_m(0,\sigma^2\Sigma),
 \end{equation}
 where $H_{ik}^{CP}=\mH_{ik<2>}^{CP}$ and $\mH_{ik}^{CP}$ is identical
 to the $\mH_{ik}^{TK}$ of \eqref{eq:H_tucker}, but for the fact
 that $\mV$ is the diagonal tensor
 $\mathcal{I}_r^{p+l}$. (Fig.~\ref{fig:tot_overview} displays
 $\mH_{ik}^{CP}$ for when $p\!=\!l\!=\!3$.)
 For $k=1,\sdots,l$, holding all parameters except $L_k$ fixed makes
 \eqref{eq:reg_CPvec} a MVMLR model with the MLE of $L_k$ obtained as 
 \begin{equation}\label{eq:CPL}
\vecc(\hat{L}_k) \!=\! 
\Big(
\sum\limits_{i=1}^n 
 H_{ik}^{CP} \Sigma^{-1}  {H_{ik}^{CP}}'
\Big)^{-1}
\Big(\sum\limits_{i=1}^n 
H_{ik}^{CP}\Sigma^{-1} \vecc{(\mY_i )} \Big).
 \end{equation}
 The matrices $\sum_{i=1}^n 
H_{ik}^{CP}\Sigma^{-1} \vecc{(\mY_i )}$,  $\sum_{i=1}^n 
 H_{ik}^{CP} \Sigma^{-1}  {H_{ik}^{CP}}'$ are substantially
 simplified in Section \ref{supp:CP}.
As summarized in Fig.~\ref{fig:tot_overview}, the tensors
$\mH_{ik}^{CP}$ and $\mG_{ik}^{CP}$ play a critical role  in the
estimation of $M_k$ and $L_k$  through \eqref{eq:reg_CPkth} and
\eqref{eq:reg_CPvec}, permitting the use of standard MVMLR and MLR
estimation methods. From~\eqref{CPform}, we deduce that the 
$j$th columns of all the factor matrices in the CP
decomposition~\eqref{eq:reg_CP} are identifiable only up to a
constant. So 
we constrain the columns to have unit norm. 
The MLEs of our parameters are obtained using  a block-relaxation
algorithm, as 
outlined in Algorithm \ref{alg:CP}.
\begin{algorithm}\label{alg:CP}  
    \SetKwInOut{Input}{Initial Values} 
    \Input{$k=0, \hsigma^{2(0)}, \hL_2^{(0)},\hL_3^{(0)},\sdots,\hL_l^{(0)},$ $\hM_1^{(0)},\hM_2^{(0)},\sdots,\hM_p^{(0)}$}
    Center the data while saving the means $\mbX$, $\mbY$.\\
    \While{convergence criteria is not met}{
    \For{$j=1,2,\sdots,l$}{$\hL_j^{(k+1)}$ as per \eqref{eq:CPL} and normalize its columns} 
    \For{$j=1,2,\sdots,p-1$}{$(\hM_k^{(k+1)},\hSigma_k^{(k+1)})$ as per \eqref{eq:CPM} and normalize the columns of $\hM_k^{(k+1)}$} 
	    $(\hM_p^{(k+1)},\hSigma_p^{(k+1)})$ as per \eqref{eq:CPM}\\
	     $\hsigma^{2(k+1)}$ as per \eqref{eq:sig2}\\
	    Normalize the columns of $\hM_p^{(k+1)}$ while setting $\boldsymbol{\hlambda}^{(k+1)}$ to those norms \\	     
	        $k=k+1$
    }
    $\hat{\Upsilon} = \mbY - \langle \mbX | \mhB_{CP} \rangle$
    \caption{Block-relaxation algorithm for ToTR~\eqref{eq:model} with CP formatted $\mB$}
\end{algorithm}

\paragraph{\underline{OP format}}\label{sec:estOP} For an OP-formatted $\mB$, {\em i.e.},
\begin{equation}\label{eq:reg_OP}
\mB_{OP} = \circ [\![ M_1,\sdots,M_p  ]\!], 
\end{equation}
 we use Theorem \ref{thm:mat_outer_product}\ref{thm:mat_outer_product:y} to express \eqref{eq:model_noM} as
\begin{equation}\label{Tucker}
\mY_i = [\![ \mX_i; M_1, \sdots,M_p  ]\!]	+\mE_i.
\end{equation}
We estimate the parameters in~\eqref{Tucker} by applying the $k$th mode matricization for each $k =1,\sdots,p$ on both sides as
$
{\mY_i}_{(k)} =  M_kG_{ik}^{OP}+E_i,
$
where $G_{ik}^{OP} = \mX_{i(k)}\Big(\bigotimes_{j=p,j\neq k}^1
M_j'\Big)$ and $E_i \overset{iid}{\sim} 
\N_{[m_k, m_{-k}]'}(0,\sigma^2\Sigma_k,\Sigma_{-k})$. Given the similarities between this formulation and \eqref{eq:reg_CPkth}, the MLEs of the factor matrices are as in
\eqref{eq:CPL} but with $G_{ik}^{OP}$ instead of $G_{ik}^{CP}$. The
optimization procedure is similar to Algorithm 2, with the difference
again that $M_1,\sdots,M_{p-1}$ are normalized to have unit Frobenius
norm. We conclude by noting that~\eqref{Tucker} is the 
multilinear tensor regression setup of~\citet{hoff14}, and for $p=2$ is the
matrix-variate regression framework of \citet{viroli12} and
\citet{dingandcook16}.  So the OP format frames existing
methodology within the ToTR framework.

\paragraph{\underline{TR format}}\label{sec:estTR}
\label{eq:reg_TC}
Let $\mB$ have TR format~\eqref{TRform}, {\em i.e.}
\begin{equation}\label{eq:reg_TR}
\mB_{TR} = \tr (
\mL_1   \times^1\!
\mL_2   \times^1\!
\sdots  \times^1\!
\mL_l  \times^1\!
\mM_1   \times^1\!
\mM_2   \times^1 \!
\sdots  \times^1\!
\mM_p
),
\end{equation}
of TR rank 
$(s_1,\sdots,s_l,g_1,\sdots,g_p)$,
where $\mL_j$ and $\mM_k$ are third order tensor of sizes $(s_{j-1}\times h_j\times s_j)$ and $(g_{k-1}\times m_k\times g_k)$ respectively, for all $j=1,\sdots,l$ and $k=1,\sdots,p$, and where $g_0\! =\! s_l$ and $s_0\! =\! g_p$. The TR format reduces the
number of unconstrained parameters in $\mB$ from $\prod_{i=1}^l h_i
\prod_{i=1}^p m_i$ to $\sum_{j=1}^l s_{j-1}h_js_j+\sum_{k=1}^p
g_{k-1}m_kg_k$.
To estimate
parameters, we apply the $k$th mode matricization for $k =1,\sdots,p$  
on both sides of~\eqref{eq:model}, yielding
\begin{equation}\label{eq:reg_TRkth}
{\mY_i}_{(k)} =  \mM_{k(2)} G_{ik}^{TR}+E_i
,  \quad E_i \overset{iid}{\sim} 
\N_{[m_k, m_{-k}]'}(0,\sigma^2\Sigma_k,\Sigma_{-k}),
\end{equation}
and the vectorization for $k =1,\sdots,l$, which gives us
 \begin{equation}\label{eq:reg_TRvec}
 \vecc (\mY_i) =   H_{ik}^{TR} \vecc (\mL_k) + \boldsymbol{e}_i
,  \quad \boldsymbol{e}_i\overset{iid}{\sim} 
\N_{m}(0,\sigma^2\Sigma),
 \end{equation}
where $G_{ik}^{TR}$ and $H_{ik}^{TR}$ are matrices as defined  in
Section~\ref{supp:TR}.
Fig. \ref{fig:tot_overview} represents tensor-variate versions of
$H_{ik}^{TR}$ and $G_{ik}^{TR}$ for when  $p\!=\!l\!=\!3$. 
Because \eqref{eq:reg_CPkth} and \eqref{eq:reg_CPvec}  are similar to
\eqref{eq:reg_TRkth} and \eqref{eq:reg_TRvec}, our ML estimators 
mirror the CP format case but by replacing
$(H_{ik}^{CP},G_{ik}^{CP})$ with $(H_{ik}^{TR},G_{ik}^{TR})$. In this
case, estimating $(M_k,\vecc(L_k))$ corresponds to estimating
$(\mM_{k(2)},\vecc(\mL_k))$. The optimization procedure is
similar to Algorithm \ref{alg:CP}, with the difference in this case
being that each factor tensor, other than $\mM_p$, is scaled to have
unit Frobenius norm.
We end here by noting that the special TR case of TT
format has been used for ToTR in \citep{liuetal20}, with $\Sigma=I$,
and hence $\Sigma_k=I$ for all $k$.

\subsubsection*{\underline{Concluding Remarks}} Fig.~\ref{fig:tot_overview}
summarizes our estimation methods for $\mB$ of different formats.
\begin{figure}[h]
\includegraphics[width=1.0\linewidth]{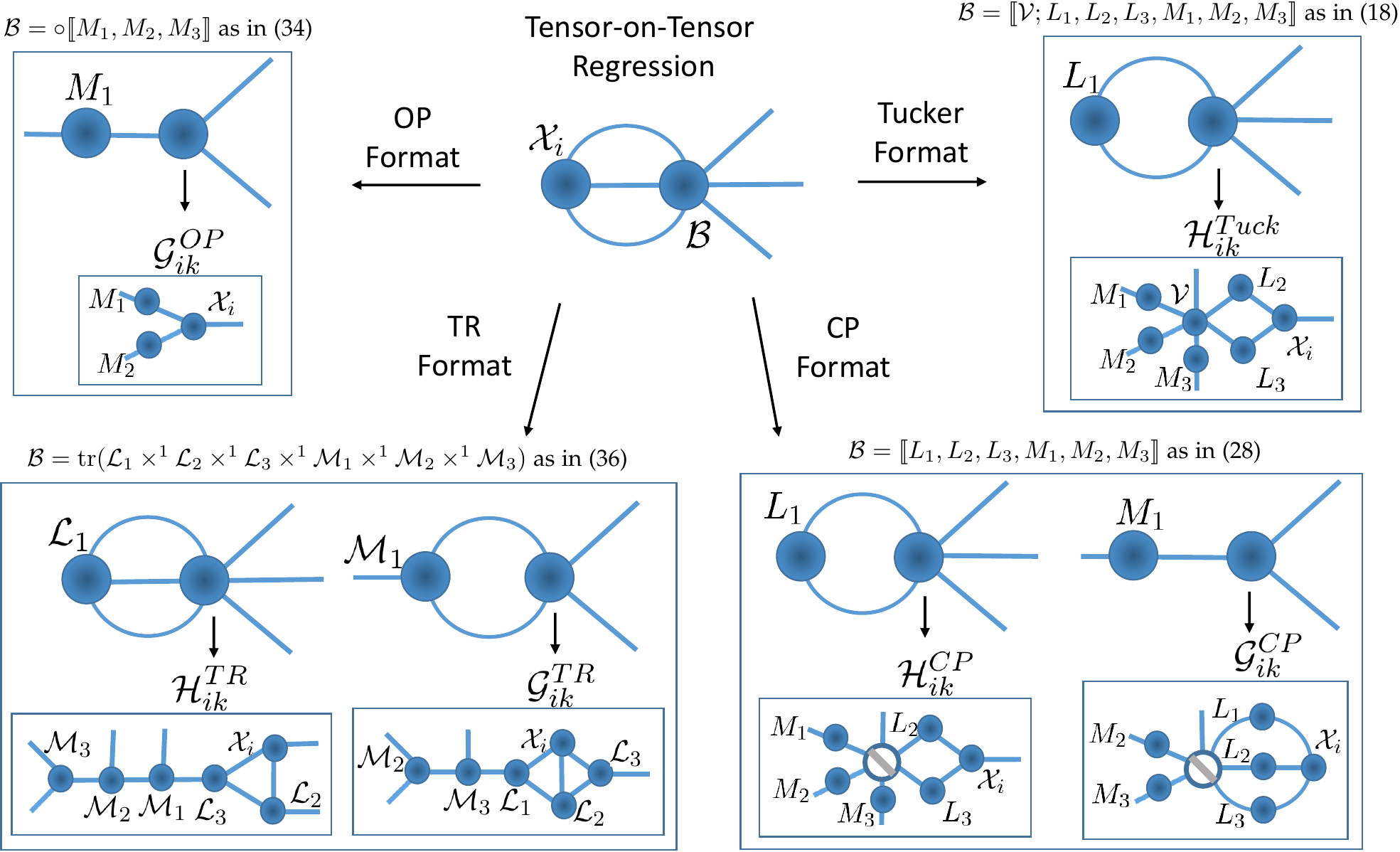}
  \centering
  \caption{Equivalent tensor-network diagrams for $\langle \mX_i, \mB \rangle$ when $p=l=3$, which  can be expressed in multiple ways depending on the tensor factor to be estimated and the type of low-rank on $\mB$ (which are illustrated in Fig. \ref{fig:regressors}). By choosing the tensors $\mG_{ik}$ and $\mH_{ik}$,  
  the tensors $M_1$ and $\mM_1$ can be estimated using multivariate multiple linear regression, and the tensors $L_1$ and $\mL_1$ can be estimated using multiple linear regression, respectively. In these cases, matricized versions of $\mG_{ik}$ and $\mH_{ik}$ are part of the design matrix.}
  \label{fig:tot_overview}
\end{figure}
We see  that in many cases, an algorithmic block can be made to
correspond to a linear model by appropriate choice of $\mG_{ik}$ or
$\mH_{ik}$. Then, fitting a tensor-response linear model involves 
sequentially fitting smaller-dimensional linear models (one for each
tensor factor) until convergence. This intuition behind the estimation
of $\mB$ is not restricted to the TK, CP, OP and TR formats, but can
also help guide estimation algorithms for other formats such as the hierarchical Tucker and the tensor tree formats \citep{hackbuschandkuhn09,grasedyck10}.

While the OP format has the advantage of parsimony, it does not allow
for the level of recovery to be adjusted through a tensor rank (as
will be illustrated in Section \ref{sec:simulation} and Fig.
\ref{fig:simu}). 
The CP format is a more attractive alternative, since it is the
natural generalization of the low-rank matrix format (a tensor with CP
rank $k$ is the sum of $k$ tensors with CP rank $1$). However, the CP
format can be too restrictive in some scenarios, such as when the
tensor modes have very different sizes. In such cases, the Tucker
format provides a more appealing alternative. Moreover, a
Tucker-formatted tensor has the interpretation of being a core tensor
that is stretched on each mode by a tall matrix. However, its
disadvantage is that the core tensor has the same number of modes as
the original tensor, making it impractical in very high-order tensor
scenarios. In such situations, the TR format is preferred because each
additional tensor-mode requires only one additional tensor factor.  

\subsubsection{Estimation of $\sigma^2$}
In all cases, the estimation of $\Sigma_k$ involves finding the sum of squared errors along the $k$-mode $S_k$, as in equations \eqref{eq:Tucker_MLikel} and \eqref{eq:reg_MLECP}. Given the estimated $\hSigma_k$ and $S_k$, the estimate of  $\sigma^2$ is very cheap and given as
\begin{equation}\label{eq:sig2}
\hat{\sigma}^2
= \dfrac{1}{nm}\tr(\hat{\Sigma}_kS_k).
\end{equation}
Thus $\hat{\sigma}^2$ is obtained alternatively within each iteration. Moreover, the log-likelihood function evaluated at the current estimated values is greatly simplified based on $\hat\sigma^2$ as
\begin{equation*}\label{est_likelihood}
\ell = -\dfrac{nm}{2} 
\left[
1 + \log(2\pi \hat{\sigma}^2) + \sum_{k=1}^p \log|\hSigma_k|/m_k
\right].
\end{equation*}

\subsubsection{Initialization and convergence}\label{subsec:computation_normal}
For local optimality, we need the two conditions that the log-likelihood
$\ell$ is jointly continuous and that the set 
$\{ \boldsymbol{\theta} :
\ell(\boldsymbol{\theta})\geq \ell(\boldsymbol{\theta}^{(0)})
\}$, for a set of initial values $\boldsymbol{\theta}^{(0)}$, is 
compact~\citep{deleeuw94}. These conditions are satisfied because
of the TVN distributional assumption on our errors, as long as the
initial values satisfy the constraints on the parameters. We
initialized $\Sigma_k=I_{m_k}$ and the tensor factor entries in $\mB$
 with draws from the $\mU(0,1)$ distribution. With the TK format,
 $M_k$ has the constraint
$M_k'\Sigma_k^{-1}M_k=I_{d_k}$ for $k=1,\sdots,p$, so we used
$\Sigma_k^{\frac12}U$ as its initializer, with $U$ having the left
singular vectors of a random matrix of the same order as $M_k$.  We also suggest using identity matrices to initialize $\Sigma_1,\dots,\Sigma_p$ and $\sigma^2$ can be initialized with 1.
Our algorithms are declared to converge when we have negligible
changes in the loglikelihood, as simplified in Section~\ref{est_likelihood}. A different criteria is the difference in norm  $||\mB||+||\sigma^2\Sigma||$, where
$||\sigma^2\Sigma|| = \sigma\prod_{k=1}^p||\Sigma_k||$ and
$||\mB||$ simplifies as per format:
\begin{itemize}
\item for $\mB_{TK}$, with $A^q$ as the Q matrix from the LQ decomposition of $A$ \citep{kolda06}, we have $||\mB_{TK}|| = ||[\![  \mV ; L_1^q,\sdots,L_l^q,M_1^q,\sdots,M_p^q  ]\!]||$.
\item for $\mB_{CP}$, $||\mB_{CP}||^2 = \sum_{k,l=1}^{R} \big\{diag(\boldsymbol{\lambda})[*_{i=1}^l (L_i'L_i)]*[*_{i=1}^p(M_i'M_i)]\big\}(k,l)$, where ``$*$'' is the Hadamard, or entry-wise product \citep{kolda06}.
\item for $\mB_{OP}$, $||\mB_{OP}|| = \prod_{i=1}^p ||M_i||$.
\end{itemize}
\subsection{Properties  of our estimators}
\subsubsection{Computational complexity}
We derive the computational complexity of our estimation algorithms. Recall that in all cases the response $\mY_i\in\mathbb{R}^{\times_{k=1}^p m_k}$ and the covariate $\mX_i\in\mathbb{R}^{\times_{k=1}^l h_k}$ where $l$ and $p$ are considered fixed.   WLOG, we assume that $m_1\! =\! \max\{m_1,\dots,m_p\}$ and $h_1\! =\! \max\{h_1,\dots,h_l\}$.
\begin{thm}\label{theo:comp}
The computational complexity of our ToTR algorithms when $\mB$ has 
\begin{enumerate}
\item the TK format of Section \ref{sec:estTuck}, with $d = \prod_{q=1}^pd_q$, $d_{-1} = d/d_1$, $c = \prod_{q=1}^lc_q$,  $d_1 = \max\{d_1,\dots,d_p\}$ and $c_1 = \max\{c_1,\dots,c_l\}$ and implemented in Algorithm \ref{alg:1} is 
$\mO\big(
 nhc_1+
 n^2c +
 n^2m_1d_{-1}+
 nm_1^2d_{-1}
 \big) + \mO\big(
nmd_1 
 \big) + \mO\big(
ncdh_1+
nc_1^2h_1^2d+
h_1^3c_1^3
 \big) + \mO\big(
m_1^3 +nmm_1
\big)$.
\item  the CP format of Section \ref{sec:estCP} and implemented in Algorithm \ref{alg:CP} is 
$\mO\big(
nh_1^2r^2+
nrm+
rm_1^2+
m_1^3+
h_1^3r^3+
  nrh+m_1r^2+
  nmm_1
\big).$
\item  the TR format, as described in Section \ref{sec:estTR} and with $g_0g_1 = \max\{g_{k-1}g_k:k=1,2,\dots p\}$, $s_0s_1 = \max\{s_{k-1}s_k:k=1,2,\dots l\}$, $g = \max\{g_0,g_1\}$,  $g_0g_1\leq m_1$ is 
$\mO\big(
mg_1g_0g_p + hnss_l + mh_1^2s_0^2s_1^2+h_1^3s_0^3s_1^3
  \big) +\mO\big(
  hs_{1}s_0s_l+m_1^3+
  g_0^3g_1^3+nmm_1
\big).$
\end{enumerate}
\end{thm}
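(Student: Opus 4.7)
My plan is to establish each of the three bounds in Theorem~\ref{theo:comp} by analyzing the per-sweep cost of the corresponding block-relaxation algorithm (Algorithms~\ref{alg:1}, \ref{alg:CP}, and the unstated TR analogue); since the number of outer sweeps is treated as input-independent, the per-sweep cost \emph{is} the stated bound. Within a sweep I proceed block by block, using the template: (i) bound the cost of assembling the effective design tensor ($Q_k$, $\mG_{ik}$, or $\mH_{ik}$, as appropriate) from the data and the current parameter estimates; (ii) bound the cost of forming the normal-equation matrices implicit in the MVMLR/MLR updates (\ref{eq:Tuckerest_L})--(\ref{eq:Tuckerest_S}), (\ref{eq:CPM})--(\ref{eq:CPL}) and (\ref{eq:reg_TRkth})--(\ref{eq:reg_TRvec}); (iii) add the cost of the associated linear solve or generalized SVD; and (iv) add the $\mathcal{O}(m_k^3)$ cost of the ADJUST update of $\Sigma_k$ together with the trivial cost of (\ref{eq:sig2}). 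Taking the maximum of the dominant terms over the $l+p$ blocks yields the claimed complexity.

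For the TK format I group the accounting so that it matches the four $\mathcal{O}(\cdot)$ summands in part (1). The first group, $nhc_1+n^2c+n^2 m_1 d_{-1}+nm_1^2 d_{-1}$, tracks the assembly of $\mH_{ik}^{TK}$ in (\ref{eq:H_tucker}) and of $S_k^{TK}$ over all $i,k$: the partial contractions against the $L_j$'s produce the $nhc_1$ term; the $W^{-}W$-type projector needed via (\ref{eq:Tuckerest_V}) costs $n^2c$; and the Kronecker-structured contraction against $M_1,\dots,M_p$, done mode-by-mode in decreasing-cost order, contributes $n^2 m_1 d_{-1}+nm_1^2 d_{-1}$. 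The $\mathcal{O}(nmd_1)$ group is the cost of applying $\bigotimes_{k=p}^{1}M_k$ to $\hat{\mV}_{<l>}$ via single-mode contractions in (\ref{eq:Tuckerest_V}). The $L_k$ update contributes $ncdh_1$ (forming the design), $nc_1^2 h_1^2 d$ (normal equations), and $h_1^3 c_1^3$ (solve). Finally, the generalized SVD for $\hat{M}_k$ via (\ref{eq:Tuckerest_M}) and the residual-driven $S_k$ in the $\hat{\Sigma}_k$ update contribute $m_1^3+nmm_1$.

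For the CP format the calculation is analogous but substantially cheaper: replacing the TK core $\mV$ by the diagonal $\mathbb{I}_r^{p+l}$ lets me invoke the Khatri--Rao identities of Section~\ref{supp:CP} so that $\sum_i H_{ik}^{CP}\Sigma^{-1}{H_{ik}^{CP}}'$ and $\sum_i G_{ik}^{CP}\Sigma_{-k}^{-1}{G_{ik}^{CP}}'$ can be assembled without ever materializing $H_{ik}^{CP}$ or $G_{ik}^{CP}$ separately for each $i$. This replaces the TK-style $d$-dependent terms by the $r$-dependent terms $nh_1^2 r^2$, $h_1^3 r^3$, $nrh$, $nrm$, $rm_1^2$, and $m_1^3+nmm_1$ for the $\Sigma_k$ and residual pieces, proving (2). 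The TR bound (3) repeats this argument with $r$ replaced by the rank-pairs $g_{k-1}g_k$ on the response side and $s_{j-1}s_j$ on the covariate side, using the hypothesis $g_0g_1\le m_1$ to absorb a would-be intermediate into the $m_1^3+nmm_1$ term and yielding the $g_0g_1g_p$, $ss_l$, $h_1^2 s_0^2 s_1^2$, $h_1^3 s_0^3 s_1^3$, and $g_0^3 g_1^3$ summands.

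The main obstacle is in the TK case: one has to pick a good order for the partial contractions defining $\mH_{ik}^{TK}$ in (\ref{eq:H_tucker}), because naive left-to-right evaluation inflates intermediates by stray factors of $m$, $h$, or $d$ and breaks the tightness of the bound. In effect the proof solves a small dynamic-programming problem on the contraction graph of $\mH_{ik}^{TK}$ and then reads off its cost; the hypothesis $g_0 g_1 \le m_1$ plays the analogous role in the TR case. Once those orderings are fixed, the remaining work across all three formats is routine matrix-multiplication and linear-solve accounting, and the full bookkeeping is deferred to the appendix.
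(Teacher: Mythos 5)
Your overall strategy --- cost one sweep of the block-relaxation algorithm block by block, as (assemble the effective design) $+$ (normal equations) $+$ (solve/SVD) $+$ (the $S_k$ and \textit{ADJUST} step), then keep the dominant terms --- is exactly what the paper does in Sections \ref{Ssec:computTK}--\ref{Ssec:computTR}, which walk through lines 4, 6, 8 and 11 of Algorithm \ref{alg:1} and lines 4 and 7 of Algorithm \ref{alg:CP}. Your CP and TR accounting, including the use of $g_0g_1\le m_1$ to absorb the cost of $\sum_i G_{ik}^{TR}\Sigma_{-k}^{-1}{G_{ik}^{TR}}'$ into the $nmm_1$ term, matches the paper's.

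The gap is in your derivation of the first $\mO(\cdot)$ group of part (1). That group is \emph{not} the cost of assembling $\mH_{ik}^{TK}$ and $S_k^{TK}$ --- those belong to the $L_k$-block and are precisely your third group, $ncdh_1+nc_1^2h_1^2d+h_1^3c_1^3$. The first group is the cost of the $M_k$-update: $nhc_1$ for forming $[\![\mX_i;L_1',\dots,L_l']\!]$, $n^2c$ for the projector $W'(WW')^{-1}W$, and $n^2m_1d_{-1}+nm_1^2d_{-1}$ for assembling $Q_k$ (defined after \eqref{eq:Tucker_VLikel}) and taking the truncated SVD of $\Sigma_k^{-1/2}Q_k$. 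Your claim that $n^2m_1d_{-1}+nm_1^2d_{-1}$ arises from ``the Kronecker-structured contraction against $M_1,\dots,M_p$ done mode-by-mode'' cannot produce these terms: that contraction touches the sample index only linearly, so no $n^2$ factor can appear there --- its first and most expensive step already costs $\mO(nmd_1)$, which is the second group. The $n^2$ enters only through multiplying the reduced $(m_1d_{-1})\times n$ data by the $n\times n$ matrix $W^-W$ inside $Q_k$, and $nm_1^2d_{-1}$ is the SVD of the resulting $m_1\times nd_{-1}$ matrix --- a cost you instead fold into $m_1^3$, which the paper reserves for inverting $\Sigma_1,\dots,\Sigma_p$. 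As written, your plan for the first group would reproduce the third group and fail to generate two of its four terms; the repair is a relabeling of which algorithmic block each term comes from, but it must be made before the bookkeeping closes.
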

\begin{proof}
See Sections \ref{Ssec:computTK} - \ref{Ssec:computTR} for the proofs. 
\end{proof}

In all cases we have the term $\mO(nmm_1)$, which is the complexity of
obtaining the sum of square errors $S_k$ of equation \eqref{eq:Tucker_MLikel} across the largest tensor-response
mode, and it is necessary for obtaining the scale matrices
$\Sigma_1,\sdots,\Sigma_p$. In many cases this term will dominate the
computational complexity. However, $O(nmm_1)$ is considerably smaller
than $O(nm^2)$, which would be the case where our complexity increases
quadratically with the dimensionality of the tensor response. We also
note that in all cases, the cubic terms are WRT the tensor ranks,
which can be considered negligible because such ranks are often chosen
to be small, in the spirit of scientific parsimony. Finally,  for the TK
format, some of the factors can assumed to be identity 
matrices, allowing us to further reduce the complexity. 
(See Section \ref{Ssec:computOP} for the computational complexity of the
OP format under specific conditions.)

\subsubsection{Asymptotic sampling distributions}\label{subsec:sampdist}
We now derive the asymptotic distributions of our model-estimated
parameters, specifically, the linear  component and the covariance 
component (Section~\ref{inference:covariance}).

We first explore the limiting distribution of the estimated linear components $\vecc(\mhB)$, which in all cases is multivariate normal with mean $\vecc(\mB)$ satisfying the same low-rank format of $\vecc(\mhB)$. 
For the Tucker format, we first show that the vectorized core tensor $\vecc(\mhV)$ follows a non-singular multivariate normal distribution, and therefore by Slutsky's theorem $\vecc(\mhB)$ follows a singular multivariate normal distribution, where the singularity of the covariance matrix constraints the limiting distribution to the original low-rank Tucker format.
For the CP, TR and OP cases we first show that the low-rank format factors in $\mB$ are jointly normally distributed. Therefore, by the Delta method the estimated tensors $\mhB$ in vectorized form are asymptotic normally distributed. In this case the resulting multivariate normal distribution is also singular, but these are only approximations to the CP, TR or OP formats and not constraints on the limiting distributions like it is the case for the Tucker format.

For the remainder of this paper, we define $
h \doteq \prod_{i=1}^l h_i$, $
M \doteq \bigotimes_{i=p}^1 M_i$ and $
L \doteq \bigotimes_{i=l}^1 L_i$.
We first assume that \eqref{eq:model_noM} holds without an intercept.
\begin{thm}\label{thm:inference_Tucker}
Let \eqref{eq:model_noM} hold with $\mB\equiv \mB_{TK}$ of Tucker
format as in \eqref{eq:reg_Tuck} and let $X = [
vec(\mX_1)\sdots vec(\mX_n)]$ and $\mhB_{TK}
=[\![  \mhV ; \hL_1,\hL_2, \sdots,\hL_l,\hM_1,\hM_2,\sdots,\hM_p  ]\!]$. Then as $n\rightarrow \infty$
$$
\vecc(\mhB_{TK})
\! \overset{d}{\rightarrow} \!
\N_{mh} \!\!
\big(
\vecc(\mB_{TK}),
\sigma^2 (MM') \otimes (P_L(XX')^{-1}P_L)
\big),
$$
where
$
P_L = 
\bigotimes_{i=l}^1 P_i
$
and
$
P_i= L_i(L_i'L_i)^{-1}L_i'.
$
\end{thm}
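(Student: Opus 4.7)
The plan is to reduce the problem to the asymptotic distribution of the vectorized core tensor $\vecc(\mhV)$ and then transport it through the Tucker factorization by Slutsky's theorem. By Lemma~\ref{lemma1}\ref{lemma1:w}, $\vecc(\mB_{TK}) = (M \otimes L)\vecc(\mV)$ and the same identity holds for the estimators, so the limit law of $\vecc(\mhB_{TK})$ is determined by the joint behaviour of the core and the factor matrices.

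I would first derive the limit law of $\vecc(\mhV)$ from the profile MLE in \eqref{eq:Tuckerest_V}, holding $L, M, \Sigma$ at their true values (to be justified by Slutsky in the next step). With $W = L'X$, that expression reduces to $\mhV_{<l>} = (L'XX'L)^{-1}L'XY'\Sigma^{-1}M$. Substituting the data-generating equation $Y = M\mV_{<l>}'L'X + E$, where $E$ has i.i.d.\ columns $\sim \N_m(0,\sigma^2 \Sigma)$, and simplifying with the constraint $M'\Sigma^{-1}M = I_d$ (a consequence of $M_k'\Sigma_k^{-1}M_k = I_{d_k}$), yields
\begin{equation*}
\mhV_{<l>} - \mV_{<l>} = (L'XX'L)^{-1}L'X E'\Sigma^{-1}M.
\end{equation*}
Vectorizing this identity and using $\Var(\vecc E) = \sigma^2 I_n \otimes \Sigma$ together with the orthogonality of $M$ in the $\Sigma^{-1}$ metric (so that $M'\Sigma^{-1}\Sigma\Sigma^{-1}M = I_d$) gives a non-singular multivariate normal limit for $\vecc(\mhV)$ with mean $\vecc(\mV)$ and a Kronecker-structured covariance involving $(L'XX'L)^{-1}$ and $I_d$.

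Next I would apply Slutsky's theorem. Under standard regularity (e.g. $n^{-1}XX' \to \Gamma \succ 0$), the factor MLEs $\hat L, \hat M, \hat\Sigma$ are consistent (up to the Tucker gauge), so $(\hat M \otimes \hat L)\vecc(\mhV)$ and $(M \otimes L)\vecc(\mhV)$ share the same weak limit. A direct Kronecker-algebra calculation then evaluates $(M \otimes L)\Omega(M\otimes L)'$, with $\Omega$ the covariance from the previous step: the mean becomes $(M\otimes L)\vecc(\mV) = \vecc(\mB_{TK})$, and the $M$-side variance collapses through $M M'\Sigma^{-1}\Sigma\Sigma^{-1}M M' = MM'$ (again using the constraint), delivering the stated Kronecker product form.

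The main obstacle is the gauge freedom of the Tucker decomposition: the triple $(\hat L,\hat M,\mhV)$ is not jointly identifiable, so their marginal ``limits'' are only defined up to a gauge choice. The constraint $M_k'\Sigma_k^{-1}M_k = I_{d_k}$ pins $M$ down only up to a rotation, and analogous ambiguity remains on the $L$ and $\mV$ sides. The key observation is that only $\mB_{TK}$ itself is an identifiable functional of the parameters, so it suffices to check that the final variance formula is invariant under the residual gauge. This can be verified by reinterpreting the limiting covariance as the information-metric projection of the unconstrained variance $\sigma^2\Sigma \otimes (XX')^{-1}$ onto the Tucker tangent space at $\mB_{TK}$, which depends only on $\mB_{TK}$ and the column spans of $M$ and $L$.
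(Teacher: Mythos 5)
Your overall architecture is exactly the paper's: for fixed factor matrices, the profiled core estimator $\vecc(\mhV)$ from \eqref{eq:Tuckerest_V} is an exact linear function of the Gaussian data and hence exactly normal; consistency of $\hL_k,\hM_k$ plus Slutsky then transports this law through $\vecc(\mhB_{TK})=(\hat M\otimes\hat L)\vecc(\mhV)$, and the constraint $M'\Sigma^{-1}M=I_d$ collapses the $M$-side of the covariance to $MM'$. Your closing remark about the Tucker gauge is also a legitimate point that the paper treats more casually (it simply asserts $\hM_k\overset{p}{\rightarrow}M_k$, $\hL_k\overset{p}{\rightarrow}L_k$); your observation that the final covariance depends only on $MM'$ and the column span of $L$, hence is gauge-invariant, is a worthwhile supplement.

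There is, however, a concrete gap in the step you wave through as ``a direct Kronecker-algebra calculation \dots delivering the stated Kronecker product form.'' Equation \eqref{eq:Tuckerest_V} only specifies that $W^-$ is \emph{a} right inverse of $W=L'X$, and the covariance of $\vecc(\mhV)$, namely $\sigma^2 I_d\otimes(W^{-\prime}W^-)$, depends on which right inverse is used. You take the Moore--Penrose choice $W^-=W'(WW')^{-1}$, giving $\mhV_{<l>}=(L'XX'L)^{-1}L'XY'\Sigma^{-1}M$ and hence $W^{-\prime}W^-=(L'XX'L)^{-1}$; pushing this through $(M\otimes L)(\cdot)(M\otimes L)'$ yields $\sigma^2(MM')\otimes\bigl(L(L'XX'L)^{-1}L'\bigr)$, \emph{not} the stated $\sigma^2(MM')\otimes\bigl(P_L(XX')^{-1}P_L\bigr)$. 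These two matrices differ in general (e.g.\ $L=\be_1^2$ and $XX'$ with off-diagonal entries gives $\tfrac12\be_1^2{\be_1^2}'$ versus $\tfrac23\be_1^2{\be_1^2}'$); they coincide only in special cases such as $XX'\propto I_h$, which is why Corollary \ref{cor:tensnorm} is unaffected. The paper obtains the stated form by committing to the particular right inverse $W^-=X'(XX')^{-1}L(L'L)^{-1}$, for which $W^{-\prime}W^-=(L'L)^{-1}L'(XX')^{-1}L(L'L)^{-1}$ and the sandwich produces $P_L(XX')^{-1}P_L$. So as written your argument proves a theorem with a different covariance than the one claimed; to match the statement you must either adopt the paper's specific $W^-$ and redo the variance computation, or explicitly reconcile the two expressions (which cannot be done for general $X$).
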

\begin{proof}
See Section \ref{proof:inference_Tucker}.
\end{proof} 
The limiting distribution in Theorem~\ref{thm:inference_Tucker} is TVN when $XX'$ has a
Kronecker structure: examples include factorial designs and B-splines \citep{currietal06}. Here we present one such case.
\begin{cor}\label{cor:tensnorm}
When $\mhB_{TK}$ is used to estimate a balanced TANOVA with $q$ units for each factor combination, then for $\bs=[h_1,h_2,\sdots,h_l,m_1,m_2,\sdots,m_p]'$, as $n\rightarrow \infty$
$$
\mhB_{TK} 
\overset{d}{\rightarrow} 
 \N_{\bs}\big( \mB,\dfrac{\sigma^2}{q}P_1,P_2,\sdots,P_l,M_1M_1',
 M_2M_2'\sdots,M_pM_p'\big).
$$
\end{cor}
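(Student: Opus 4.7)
The plan is to derive Corollary~\ref{cor:tensnorm} as a direct specialization of Theorem~\ref{thm:inference_Tucker} by showing that in a balanced TANOVA layout the factor $P_L(XX')^{-1}P_L$ collapses to $q^{-1}P_L$, and then re-expressing the resulting multivariate normal limit as a TVN using Theorem~\ref{thm:tensnorm_reshap}.

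First I would identify the design matrix. In a TANOVA with $l$ factors, each covariate $\mX_i$ is a single-entry tensor in $\mathbb{R}^{h_1\times\cdots\times h_l}$, so $\vecc(\mX_i)\in\mathbb{R}^h$ is a standard basis vector $\be_{j(i)}^h$ whose nonzero index $j(i)$ encodes the factor combination of unit $i$. In a balanced layout with $q$ replicates per combination, each basis vector $\be_{j}^h$ appears exactly $q$ times among the $n=qh$ covariate columns of $X=[\vecc(\mX_1)\ \cdots\ \vecc(\mX_n)]$. Therefore
\begin{equation*}
XX' \;=\; \sum_{i=1}^n \vecc(\mX_i)\vecc(\mX_i)'
\;=\; q\sum_{j=1}^h \be_j^h(\be_j^h)'
\;=\; q\,I_h,
\end{equation*}
from which $(XX')^{-1}=q^{-1}I_h$.

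Next I would exploit idempotence of $P_L=\bigotimes_{i=l}^1 P_i$. Each $P_i=L_i(L_i'L_i)^{-1}L_i'$ is an orthogonal projector, and a Kronecker product of projectors is again a projector, so $P_L^2=P_L$. Hence
\begin{equation*}
P_L(XX')^{-1}P_L \;=\; q^{-1}P_L^2 \;=\; q^{-1}P_L
\;=\; q^{-1}\bigotimes_{i=l}^1 P_i .
\end{equation*}
Plugging this into Theorem~\ref{thm:inference_Tucker} gives the Kronecker-separable asymptotic covariance
\begin{equation*}
\sigma^2(MM')\otimes\bigl(P_L(XX')^{-1}P_L\bigr)
\;=\; \frac{\sigma^2}{q}\Bigl(\bigotimes_{i=p}^{1} M_iM_i'\Bigr)\otimes\Bigl(\bigotimes_{i=l}^{1}P_i\Bigr) .
\end{equation*}

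Finally I would invoke Theorem~\ref{thm:tensnorm_reshap}\ref{tensnorm_reshap:y}--\ref{tensnorm_reshap:z}: a multivariate normal whose covariance is a reverse-ordered Kronecker product of $p+l$ PSD factors is precisely the vectorization of a TVN whose scale matrices, read in natural (mode) order, are the factors $P_1,\ldots,P_l,M_1M_1',\ldots,M_pM_p'$, with overall scalar $\sigma^2/q$. Combined with Slutsky's theorem applied to $\mhB_{TK}$ (as used in Theorem~\ref{thm:inference_Tucker} to pass the limit through to the full tensor), this yields the claimed
\begin{equation*}
\mhB_{TK}\overset{d}{\to}\N_{\bs}\!\left(\mB,\tfrac{\sigma^2}{q}P_1,P_2,\ldots,P_l,M_1M_1',\ldots,M_pM_p'\right).
\end{equation*}

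The only nontrivial step is the first one, namely verifying $XX'=qI_h$; the rest is essentially bookkeeping to keep the order of the Kronecker factors consistent with the reverse-lexicographic vectorization convention used in Definition~\ref{def:tvn}. No new probabilistic tools beyond what Theorem~\ref{thm:inference_Tucker} already provides are required.
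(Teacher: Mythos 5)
Your proposal is correct and follows essentially the same route as the paper: specialize Theorem~\ref{thm:inference_Tucker} using $XX'=qI_h$ in the balanced design, note that the resulting covariance $(\sigma^2/q)(MM')\otimes(\bigotimes_{i=l}^1 P_i)$ is Kronecker-separable, and conclude via Definition~\ref{def:tvn}. The extra details you supply (counting replicates to get $XX'=qI_h$ and invoking idempotence of $P_L$) are correct and merely make explicit what the paper leaves implicit.
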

\begin{proof}
  Here $XX'=qI_{h}$ and so the variance-covariance matrix in the limiting distribution of Theorem \ref{thm:inference_Tucker} is $(\sigma^2/q)(MM')\otimes (\bigotimes_{i=l}^1 P_i)$, which is Kronecker-separable. The result follows from Definition \ref{def:tvn}.
\end{proof}
The CP format case is similar to Theorem
\ref{thm:inference_Tucker}.
\begin{thm}\label{thm:inference_others}  
Consider \eqref{eq:model_noM} with $\mB \equiv\mB_{CP}$ as in
\eqref{eq:reg_CP} and the ML estimator
$
\mhB_{CP} 
=[\![ \hL_1,\hL_2, \sdots,\hL_l,\hM_1,\hM_2,\sdots,\hM_p  ]\!].
$
Then
$$
\vecc(\mhB_{CP})
\overset{d}{\rightarrow} 
\N_{mh} \Big(
\vecc(\mB_{CP})
,
J_{CP}R_{CP}( \I_n\otimes \Sigma)R_{CP}'J_{CP}'
\Big)
$$
as $n\rightarrow \infty$.
Here $J_{CP}$ is a Jacobian matrix and is given along with the block matrix $R_{CP}$ in Section \ref{proof:inference_others}.
\end{thm}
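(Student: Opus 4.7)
The plan is to obtain the limiting distribution of $\vecc(\mhB_{CP})$ by applying the Delta method to the multilinear map that sends the vectorized CP factor matrices to $\vecc(\mB_{CP})$. Concretely, I would stack the factors into a single parameter vector $\boldsymbol{\theta}=\big(\vecc(L_1)',\sdots,\vecc(L_l)',\vecc(M_1)',\sdots,\vecc(M_p)'\big)'$, establish joint asymptotic normality of $\hat{\boldsymbol{\theta}}$, and then push this forward through the smooth map $\boldsymbol{\theta}\mapsto\vecc(\mB_{CP})$.

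For the first step, I would exploit the block-wise linear model representations in \eqref{eq:reg_CPkth} and \eqref{eq:reg_CPvec}. Holding the complementary blocks fixed, each block-estimator has the usual GLS-type asymptotic normality with variance given by the inverse of a scaled Fisher information; stitching these conditional normal limits together through the stationarity conditions of the joint log-likelihood \eqref{eq:likel_noM} at the true $\boldsymbol{\theta}$ yields a joint limit of the form $\sqrt{n}(\hat{\boldsymbol{\theta}}-\boldsymbol{\theta})\overset{d}{\to}\N(0,\Xi)$. The matrix $R_{CP}$ collects the block-wise ``design'' factors (Khatri-Rao/contraction products of the complementary CP factors, as in the matrices $G_{ik}^{CP}$ and $H_{ik}^{CP}$ arising in Section~\ref{sec:estCP}) so that $\Xi=R_{CP}(\I_n\otimes\Sigma)R_{CP}'$, which writes the covariance of $\hat{\boldsymbol{\theta}}$ in terms of the regression noise scale $\Sigma$ repeated over the $n$ observations.

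For the second step, \eqref{CPform} shows that $\vecc(\mB_{CP})$ is a multilinear polynomial in $(L_1,\sdots,L_l,M_1,\sdots,M_p)$. Differentiating with respect to each $L_k$ and each $M_k$ produces the Jacobian $J_{CP}$, whose blocks are Khatri-Rao/Kronecker products of the remaining CP factors; these can be read off directly from the vectorizations used to derive \eqref{eq:CPL}. The Delta method then yields
\begin{equation*}
\vecc(\mhB_{CP})\overset{d}{\to}\N_{mh}\!\big(\vecc(\mB_{CP}),\,J_{CP}R_{CP}(\I_n\otimes\Sigma)R_{CP}'J_{CP}'\big),
\end{equation*}
which matches the stated form.

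The main obstacle is the well-known non-identifiability of CP factors: the factor matrices are only identified up to simultaneous column-scalings (pinned down here by the unit-norm convention on columns and the use of $\boldsymbol{\lambda}$) and up to permutations of the $r$ rank-one summands. As a consequence, the Fisher information matrix in $\boldsymbol{\theta}$ is singular and the limiting covariance of $\hat{\boldsymbol{\theta}}$ need not be unique. However, $\vecc(\mB_{CP})$ is invariant under these indeterminacies, so the degenerate directions lie in the kernel of $J_{CP}$ and drop out of $J_{CP}\Xi J_{CP}'$; this is exactly why the theorem is stated for $\vecc(\mhB_{CP})$ rather than for $\hat{\boldsymbol{\theta}}$, and it also explains why the resulting covariance is singular (the distribution is constrained to the variety of rank-$r$ CP tensors to leading order). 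The remaining work is the bookkeeping of expressing the Jacobian and $R_{CP}$ in a compact closed form, which is the computation deferred to Section~\ref{proof:inference_others}.
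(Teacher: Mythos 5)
Your proposal is correct and follows essentially the same route as the paper: stack the vectorized CP factors into $\boldsymbol{\theta}$, obtain its normality with covariance $R_{CP}(\I_n\otimes\Sigma)R_{CP}'$ from the block-wise linear-model representations \eqref{eq:reg_CPkth} and \eqref{eq:reg_CPvec} (the paper does this by writing $\boldsymbol{\hat{\theta}}_{CP}=R_{CP}\boldsymbol{y}$ explicitly), and then apply the Delta method with Jacobian $J_{CP}$. Your added remarks on CP non-identifiability and the resulting singular limiting covariance are consistent with the paper's own discussion in Section~\ref{subsec:sampdist}.
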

\begin{proof}
See Section \ref{proof:inference_others}.
\end{proof}

The sampling distributions of $\mB$ under the OP or TR formats are
similar to the CP case, and  are in theorem \ref{thm:inference_TRnOP:actual} of Section
\ref{sec:inference_TRnOP:actual}. 
\begin{thm}\label{thm:inference_with_intercept}
For a model with intercept, as in \eqref{eq:model}, Theorems \ref{thm:inference_Tucker} and \ref{thm:inference_others} also hold after centering the covariates.
\end{thm}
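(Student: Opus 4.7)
My plan is to leverage the intercept-profiling result from Section~\ref{sec:est_intercept}, which reduces fitting the intercept model~\eqref{eq:model} to fitting the no-intercept model~\eqref{eq:model_noM} on the centered responses $\widetilde{\bmY}_i = \bmY_i - \bmbY$ and centered covariates $\widetilde{\mX}_i = \mX_i - \mbX$. Since the MLE derived by setting $\partial\ell(\Upsilon)=0$ is $\widehat{\Upsilon}(\mB) = \bmbY - \langle \mbX | \mB\rangle$, substituting it back produces exactly the loglikelihood of~\eqref{eq:likel_noM} in the centered variables. Hence the ML estimators $(\widehat{\mB}, \widehat{\Sigma}_1,\ldots,\widehat{\Sigma}_p, \widehat{\sigma}^2)$ for the intercept model coincide with those obtained by applying Algorithms~\ref{alg:1} or~\ref{alg:CP} to $(\widetilde{\mX}_i, \widetilde{\bmY}_i)$.

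Next I would note that if $\bmY_i = \Upsilon + \langle \mX_i | \mB\rangle + \bmE_i$, then centering gives $\widetilde{\bmY}_i = \langle \widetilde{\mX}_i | \mB\rangle + \widetilde{\bmE}_i$, where $\widetilde{\bmE}_i = \bmE_i - \bar{\bmE}$. So the centered problem satisfies the structural form of~\eqref{eq:model_noM} with the same regression tensor $\mB$; only the error distribution differs. I would therefore re-run the proofs of Theorems~\ref{thm:inference_Tucker} and~\ref{thm:inference_others} with $X$ replaced by the centered design $\widetilde{X} = [\vecc(\widetilde{\mX}_1)\,\cdots\,\vecc(\widetilde{\mX}_n)]$, checking that the two places where the data enter the asymptotic expansion transfer cleanly: (i) the Gram-type matrix $\widetilde{X}\widetilde{X}' = XX' - n\,\mbx\mbx'$ (with $\mbx=\vecc(\mbX)$) substitutes for $XX'$ in the asymptotic variances; and (ii) the centered score $\sum_i \widetilde{\mX}_i\,\widetilde{\bmE}_i = \sum_i \mX_i\bmE_i - n\,\mbX\,\bar{\bmE}$ retains the same limiting distribution, because $n\,\mbX\,\bar{\bmE}$ is $O_p(1)$ while the unadjusted sum is $O_p(\sqrt{n})$, so the asymptotic normal limit under standard moment conditions on $\{\mX_i\}$ is unaffected after rescaling by $1/\sqrt n$.

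The remaining ingredient is that the dependence induced by centering $\widetilde{\bmE}_i = \bmE_i - \bar{\bmE}$ is asymptotically negligible: $\Var(\widetilde{\bmE}_i)=(1-1/n)\Sigma\to\Sigma$ and $\Cov(\widetilde{\bmE}_i,\widetilde{\bmE}_j)=-\Sigma/n\to 0$, so any quadratic form $\sum_i \widetilde{\mX}_i\,\widetilde{\bmE}_i$ obeys the same central limit theorem as $\sum_i \mX_i\,\bmE_i$ with the same variance, after accounting for the $\mbx\mbx'$ correction in the information matrix. Feeding these into the delta-method/Slutsky arguments already used in the proofs of Theorems~\ref{thm:inference_Tucker} (via the non-singular normality of $\vecc(\mhV)$) and~\ref{thm:inference_others} (via the Jacobian $J_{CP}$), the stated limits in the two theorems continue to hold, with every occurrence of $XX'$ interpreted as formed from the centered covariates. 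Analogous substitutions handle the TR and OP cases covered by Theorem~\ref{thm:inference_TRnOP:actual}.

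The main obstacle I anticipate is bookkeeping: verifying carefully that the Jacobian-and-score construction built for the no-intercept model continues to apply when the design is random-centered, in particular that the profiled Hessian with respect to $\mB$ evaluated at $(\widetilde{\mX}_i,\widetilde{\bmY}_i)$ converges to the same limit (up to the Kronecker-centered correction $XX'-n\mbx\mbx'$) and that no extra cross terms coming from the joint estimation of $\Upsilon$ leak into the asymptotic covariance of $\vecc(\widehat{\mB})$. Once that block-diagonal structure of the limiting information between $\Upsilon$ and the $\mB$-factors is confirmed, the conclusion follows directly from Theorems~\ref{thm:inference_Tucker} and~\ref{thm:inference_others}.
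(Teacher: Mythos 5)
Your overall strategy is the same as the paper's: profile out $\Upsilon$ so that the intercept model reduces to the no-intercept model fitted to centered responses and covariates, and then re-run the proofs of Theorems~\ref{thm:inference_Tucker} and~\ref{thm:inference_others} with $X$ replaced by the centered design. However, your justification of step (ii) contains a genuine error. You claim $n\,\mbX\,\bar{\bmE}=O_p(1)$ while $\sum_i\mX_i\bmE_i=O_p(\sqrt n)$; in fact $n\,\mbX\,\bar{\bmE}=\mbX\sum_i\bmE_i=O_p(\sqrt n)$ as well, i.e.\ the correction term is of exactly the same order as the main term, so it cannot be dismissed as asymptotically negligible. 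Likewise, the argument that pairwise covariances $\mathrm{Cov}(\widetilde{\bmE}_i,\widetilde{\bmE}_j)=-\Sigma/n\to 0$ imply the quadratic form is unaffected is not valid in general: there are $O(n^2)$ such cross terms, and their sum is a priori of the same order as the diagonal contribution.

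Both problems disappear once you notice the exact cancellation that the paper's proof is built on: because the centered covariates satisfy $\sum_i\widetilde{\mX}_i=0$, one has $\sum_i\widetilde{\mX}_i\,\widetilde{\bmE}_i=\sum_i\widetilde{\mX}_i\,\bmE_i$ \emph{identically}, not just asymptotically. In the paper's notation this is the statement that $W^{-'}\boldsymbol{1}_n=\bm{0}$ (Tucker case, since $X\boldsymbol{1}_n=\bm{0}$ after centering) and $R_{CP}(\boldsymbol{1}_n\otimes I_m)=\bm{0}$ (CP case), so the estimators computed from the centered responses coincide exactly, as linear maps, with the same estimators applied to the uncentered Gaussian responses. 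Consequently the exact finite-sample normality used in the proofs of Theorems~\ref{thm:inference_Tucker} and~\ref{thm:inference_others} carries over verbatim with $XX'$ formed from centered covariates; no central limit theorem for the weakly dependent centered errors, and no moment conditions on the $\mX_i$ (which are a fixed design here, not random), are needed. Your conclusion is correct, but the route should be repaired by replacing the negligibility argument with this exact annihilation of the constant direction.
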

\begin{proof}
See Section \ref{proof:inference_with_intercept}.
\end{proof}

Section \ref{inference:covariance} also discusses inference on the scale components $\hSigma_1,\hSigma_2,\sdots,\hSigma_p$. Theorem \ref{thm:inference_covariance}  establishes the asymptotic independence of the scale and the linear components, we find the Fisher information matrix WRT the scale parameters, and establish its singularity. Our 
results on the asymptotic distribution of the scale components are not unique to our regression methodology but also generally hold for the TVN distribution.

\subsection{Model selection or rank determination}\label{subsec:rank_choos}
For the CP, TR and TK formats, we determine optimal ranks using the
Bayesian information criterion~(BIC)~\citep{kashyap77,schwarz78}. This
calculation requires the loglikelihood that we simplified in
Section~\ref{est_likelihood}. Section \ref{supp:tpars}  provides
more details on rank determination (equivalently, model selection) and
on the total number of  calculations needed to obtain the BIC.

\section{Experimental Evaluations}\label{sec:simulation}
We study estimation performance of the scale
parameters and the low-rank  linear component $\mB$ using simulation
experiments on different ToTR models. Section~\ref{simu:mat-on-mat}
assesses the consistency of our estimators, and
Section~\ref{simmulation:camelids} evaluates the amounts of recovery 
that different low-rank formats have of $\mB$, 
and the impact of noise on discrimination in a TANOVA framework. 


\subsection{A TANOVA(2,2)  model with low-rank formats}\label{simu:mat-on-mat}

\begin{figure}[ht]
\centering
\includegraphics[width = \linewidth]{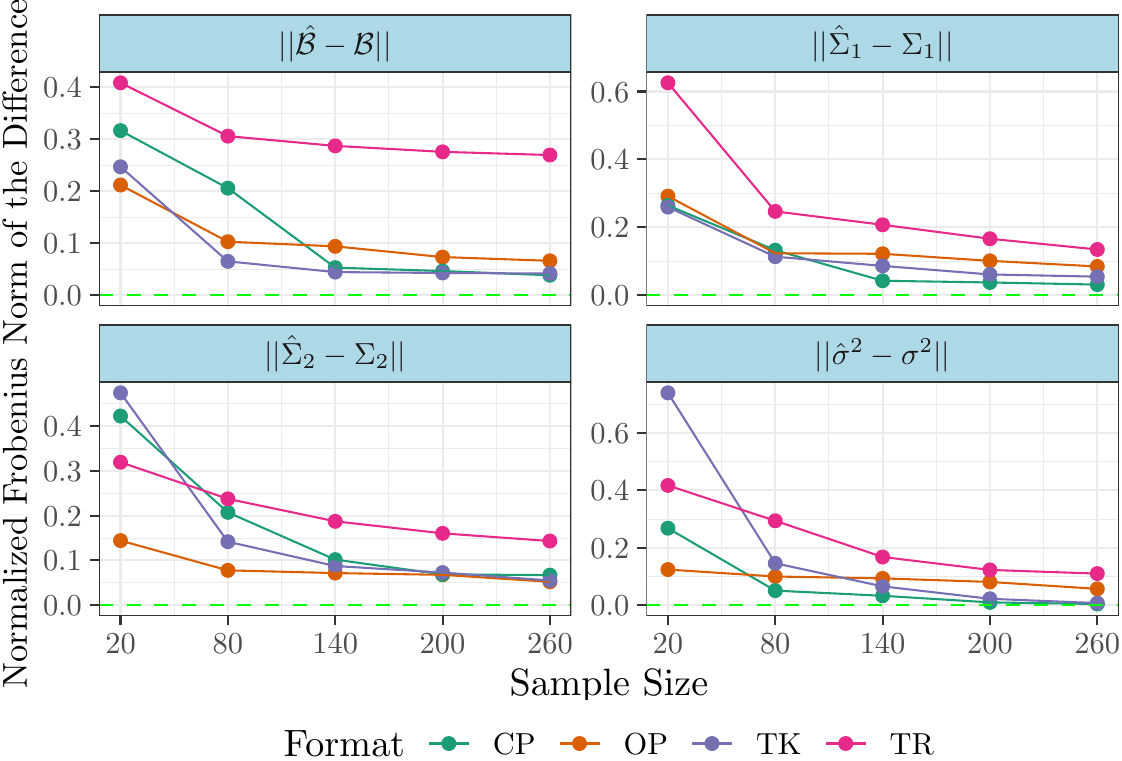}
     \caption{Performance of the four models presented in Section \ref{simu:mat-on-mat}, each corresponding to a different format on $\mB$. Each plot corresponds to the Frobenius norm of the difference between an estimated and  true population parameter, against the sample size. We observe that in all cases, an increase in sample size leads to more accurate estimates. }
     \label{fig:simu}
   \end{figure}

\begin{figure*}[ht]
\subfloat[]{
\includegraphics[width=.57\linewidth]{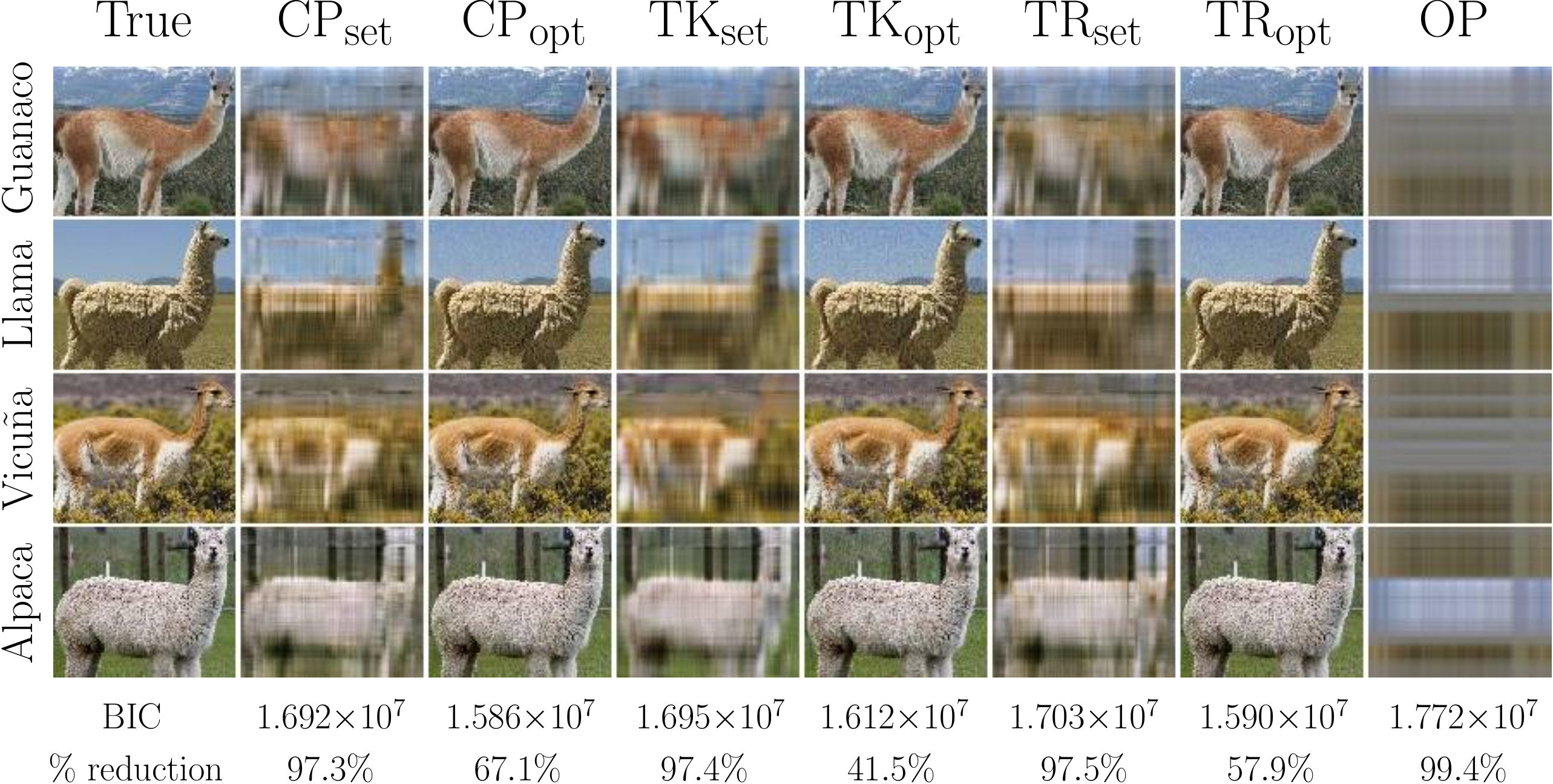}
}
\hspace{.5em}
\subfloat[]{
\includegraphics[width=.39\linewidth]{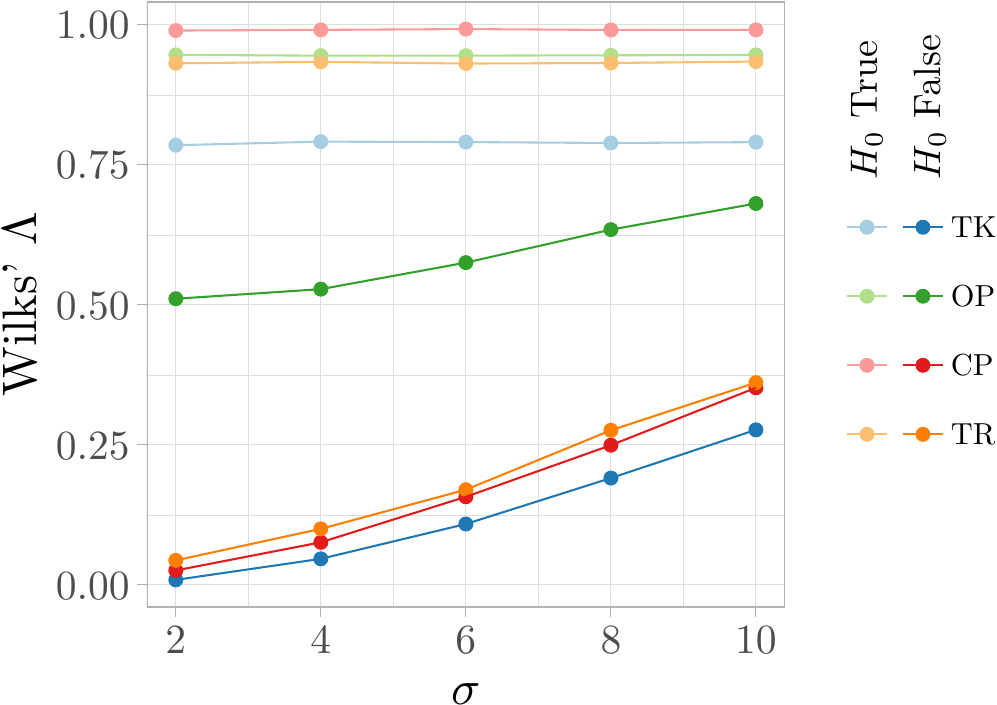}
}
\caption{
(a) 
Results of fitting seven TANOVA(2,2) models on
    data simulated from~\eqref{eq:realsim_camelids}. One model was fit assuming the OP format, and two models were fit for each of the TK, CP and TR formats: 
    one with set ranks and another one with optimal ranks, as chosen by BIC. The two factors are the type of andean camelid
    (Guanaco, Llama, Vicu{\gn}a, Alpaca) and the type of additive
    color RGB (red, green, blue). The Guanaco, Llama and Alpaca images
    are from Wikipedia Commons and the Vicu{\gn}a image is from Encyclopædia
    Britannica. It is evident that while one cannot adjust the OP rank, increased rank for the TR, TK and CP formats result in more image restoration.
(b) Monte-Carlo 95th quantiles of the Wilks' $\Lambda$ statistics that
test the set of hypothesis in equation \eqref{eq:camelids_hypothesis},
for the OP, CP, TR and TK formats, five values of scalar variance
$\sigma^2$ in the x-axis, and for both true and false null
hypotheses. Large variabilities lead to larger test statistics when
$H_0$ is false, leading to weaker evidence against the null
hypothesis. In all cases,  CP, TR and TK formatting leads to more significant statistics when compared to the OP, even when the ranks are not optimal. 
}
  \centering
\label{fig:camelids}
\end{figure*}

We simulated  observations from the matrix-on-matrix regression (MoMR)
model
\begin{equation}\label{eq:sim1}
Y_{ijk} =   \langle X_{ij} | \mB\rangle +E_{ijk}
,\quad
E_{ijk} \overset{iid}{\sim} \N_{[6,7]'}
(0,
\sigma^2
\Sigma_1,\Sigma_2)
\end{equation}
where $i=1, 2,3,4$, $j=1,2,3,4,5$ and
$X_{ij}$ 
is a $4\!\times\! 5$ matrix with 1 at the $(i,j)$th position and zeroes
everywhere else. We set $\sigma^2=1$ and obtained $\Sigma_1$
and $\Sigma_2$ independently from Wishart distributions, that is, we
obtained $\Sigma_1\sim \mW_6(6,I_6)$ and $\Sigma_2\sim \mW_7(7,I_7)$ before
scaling each 
by their $(1,1)$th element.  
We obtained realizations from four MoMR models, one
each for $\mB$ of TK, CP, TR and OP formats, and fit
appropriate models to the data using the ML estimation procedures
described in Section \ref{subsec:parest}. To study consistency 
properties of our estimators, we used $k=1,4,7,10$ and
$13$, meaning that our sample sizes ranged over $n\in\{
20,80,140,200,260\}$. An unstructured $\mB$ in this experiment would
have $4\times 5\times 6\times 7 = 840$ entries, but
$\mB$ has only 59 unconstrained parameters with the OP format and
45 unconstrained parameters when it is of CP format of rank 2. This
number is only 60 when $\mB$ has TK format with rank (2,2,2,2) and
70 when it is of the TR format with rank (2,2,2,2). Thus, our lower-rank simulation
framework had at least $91\%$ fewer unconstrained parameters in $\mB$. We simulated data
from~\eqref{eq:sim1} using $\mB$, $\sigma^2$, $\Sigma_1$ and
$\Sigma_2$ and estimated the parameters for each
replication. Fig. \ref{fig:simu}
displays the Frobenius norm of the difference between the true and
estimated parameters, and shows that as sample size increases,
$(\mhB,\hat{\sigma}^2,\hat{\Sigma}_1,\hat{\Sigma}_2)$ approach
the true parameters $(\mB,\sigma^2,\Sigma_1,\Sigma_2)$, demonstrating
consistency of the estimators. 

\subsection{Evaluating recovery and discrimination}\label{simmulation:camelids}
We simulated 600 observations from the MoMR model
\begin{equation}\label{eq:realsim_camelids}
Y_{ijk} =\langle X_{ij} | \mB\rangle + E_{ijk},\quad E_{ijk}  \overset{iid}{\sim} \N_{[87,106]'}(0,\sigma^2\Sigma_1,\Sigma_2),
\end{equation}
where $i=1,2,3,4$, $j=1,2,3$, $X_{ij}$ is a $4\!\times\! 3$ matrix
with 1 at the $(i,j)$th position  and zero elsewhere, and $\Upsilon_{ij}=\langle X_{ij} | \mB\rangle$ corresponds to the pixel-wise logit transformation of
the $j$th additive color (Red, Green, Blue) of the  $i$th Andean
camelid (Guanaco, Llama, Vicu{\gn}a, Alpaca) images of Fig.
\ref{fig:camelids}(a).  We set
$\Sigma_1$ and $\Sigma_2$ to be AR(1) 
 correlation matrices with coefficients $0.1$ and $-0.1$ respectively,
 and $\sigma^2=1$. Sans constaints, we have $3\times 4 \times 87\times
 106 = 110664$ parameters in $\mB$.  
 
We fit \eqref{eq:realsim_camelids} separately for TR-, TK-, CP- and
OP-formatted $\mB$, with ranks set to have similar number of
unconstrained parameters in $\mB$. The TR format 
with rank (3,3,5,3) had 2958 such parameters, the TK
format with rank  (4,4,9,9) had 3061 parameters, the CP format with
rank 15 had 3001 parameters in $\mB$, while the relatively inflexible
OP format had 666 parameters. In all cases, the dimension of $\mB$ was
reduced by over 97\%. The estimated tensor $\mB$ in each case
corresponds to the estimated color images of the four Andean
camelids. Fig. \ref{fig:camelids}(a) 
shows varying success of these four formats in recovering the
underlying camelid image (true $\mB$). The OP-estimated images are the
least-resolved, with the reduced number of parameters for $\mB$
inadequate for recovery. But the other formats can adjust for the quantum 
of reduction in parameters through their ranks. We illustrate this aspect
by fitting \eqref{eq:realsim_camelids} with $\mB$ having the TK, CP and TR
formats with optimal rank chosen by BIC, following Section
\ref{subsec:rank_choos}. Fig.~\ref{fig:camelids}(a) shows
very good recovery of $\mB$ by these BIC-optimized $\mhB$s, with
unappreciable visual differences in all
cases. In contrast, the model fit 
with unstructured $\mB$ and diagonal $\Sigma$, has a BIC of
$1.64\times 10^{7}$, while fitting a model with a similar $\mB$ but
 Kronecker-separable $\Sigma$ has a BIC of $1.63\times 10^{7}$. The
 CP, TK and TR formats therefore outperform these two alternatives when the ranks are tuned.

The TANOVA(2,2) formulation of  \eqref{eq:realsim_camelids} enables us to test 
\begin{align}\label{eq:camelids_hypothesis}
\begin{split}
&\text{H}_0:\mP_1=\mP_2=\mP_3=\mP_4\hspace{4em}
\mbox{ vs. }
\\&
\text{H}_\text{a}:\mP_i \neq \mP_{i^*}, \mbox{ for some } i\neq i^*\in \{1,2,3,4\}
\end{split}
\end{align}
where $\mP_i$ is a third-order tensor of size $3\times 87\times 106$
that contains the RGB slices of the $i$th Andean camelid image. 
The usual Wilks' $\Lambda$ statistic~\citep{mardiaetal79} is
$
\Lambda = |\hat\Sigma_{R}|/|\hat\Sigma_{T}|,
$
where $\hat\Sigma_{R}$ is the sample covariance matrix of the residuals and $\hat\Sigma_{T}$ is the sample covariance matrix of the simpler model's residuals, which finds a common mean across all camelids. (Section \ref{app:camelidsWilks} details the calculation of $\hat\Sigma_{R}$ and $\hat\Sigma_{T}$.) We illustrate the role of $\sigma^2$ and the low-rank
OP, TR, CP or TK formats in distinguishing the four camelids, as
measured by the Wilks' $\Lambda$ test statistic, in 
Fig.~\ref{fig:camelids}(b), for $\sigma=2,4,6,8,10$. The value of
$\Lambda$ increases with $\sigma^2$, meaning 
that larger variances decrease the power of our test. Further, the
CP, TR and TK formats yield lower-valued (more significant) test
statistics than OP. This finding illustrates the limits of the
less-flexible OP format relative to the others in recovering
$\mB$. Nevertheless, OP joins the other three formats in consistency of
estimation and discrimination, as illustrated by Wilks' $\Lambda$.

\section{Real Data Applications}\label{sec:data_application}
Having evaluated performance of our reduced-rank ToTR methodology, we apply it to the datasets of
Section~\ref{sec:examples}.

\subsection{A TANOVA(1,5) model for cerebral activity}
\label{application:suicide}
Section~\ref{subseq:suicide} laid out a TANOVA model involving 30
fMRI volumes of voxel-wise changes in activation from a baseline, each
volume  corresponding to one of ten words connoting death, positive or 
negative affects, for each of
17 subjects. 
For the $j$th subject
we have a fifth-order   tensor $\mY_{j}$ of order $3\!\times\! 10\!\times
\!43\!\times\! 56\!\times\! 20$, where the first two modes
correspond to the three kinds of word stimulus and the individual
words, and the other modes correspond to the dimensions of
the image volume. The $j$th subject has status given by 
$\boldsymbol{x}_{j}$ that is a 2D unit vector with 1 at position 
$i$ that is 1 for attempter or 2 for ideator. We model these responses and covariates as 
\begin{equation*}
\mY_{j} = \langle \boldsymbol{x}_{j}|\mB\rangle + \mE_{j},\quad  \mE_{j} \overset{iid}{\sim}
\N_{\bbm_1}(0,\sigma^2\Sigma_1,\Sigma_2,\Sigma_3,\Sigma_4,\Sigma_5),
\end{equation*}
where $\bbm_1=[3,10,43,56,20]'$ and $j=1,\sdots,17$. We let 
$\mB $ have the TK format $[\![\mV;L_1,M_1,M_2,M_3,M_4,M_5 ]\!]$
with rank $(2,3,6,15,20,7)$ chosen by BIC from 256 candidate ranks, and where $M_k'\Sigma_k^{-1} M_k\! = \!I_{d_k},k=1,2,3,4,5$.
The 77578 parameters to be estimated in our $\mB$ represent an over 
97.3\% reduction over that of the unconstrained $\mB$ of size
$2\!\times\! 3\!\times\! 10\!\times\! 43\!\times\! 56\!\times\! 20$,
or 2889600 parameters. (Our  use 
of a TK format exploits its nicer distributional
properties for easier inference, and therefore we only use this format here.) We set $\Sigma_1$ (specifying relationships
between word types) to be unconstrained, 
$\Sigma_2$ (covariances between same kinds of words) to have an
equicorrelation structure and $\Sigma_3,\Sigma_4,\Sigma_5$ with AR(1)  
correlations to capture spatial context in the image volume. 
Fitting the model with unstructured $\mB$ and diagonal $\Sigma$
yielded a BIC of $210$ million, while the fitted model with
a similar $\mB$ but Kronecker-separable $\Sigma$ reported a BIC of $164$ million. In contrast, our TK model with Kronecker-separable covariance outperformed these two alternatives with a BIC of $127$ million.

Our primary interest here is to find regions of significant
interaction between word type and subject
suicide attempter/ideator status to determine markers for suicide risk assessment and intervention. The interaction estimate can be expressed as 
$
\mhB_{*}=\mhB\times_1  \boldsymbol{c}_1' \times_2  C_2 \times_3
\boldsymbol{c}_3',
$
where $\boldsymbol{c}_1$ is a contrast
vector that finds differences between suicide attempter/ideation
status, $C_2$ is a contrast matrix for differences across word type
and $\boldsymbol{c}_3$ is a contrast vector that averages the ten
words of each type. These contrast matrices and vectors are given in
\eqref{inter:brain}. From Theorem \ref{thm:inference_Tucker},
\begin{equation*}\label{inter:asympt}
\mhB_{*}
\overset{d}{\rightarrow} 
 \N_{\bbm_2}( \mB_{*},\tau^2C_2M_1M_1'C_2',M_3M_3',M_4M_4',M_5M_5'),
\end{equation*}
where $\bbm_2=[3,20,43,56]'$, 
$\mB_{*}=\mB\times_1  \boldsymbol{c}_1' \times_2  C_2 \times_3
\boldsymbol{c}_3'$ and
$\tau^2$ is as in Section \ref{sec:app:brain_interaction}. Using
the asymptotic distribution of $\mhB_{*}$, we  marginally
standardize it to obtain $\mhZ_*$ as shown in \eqref{stand_statistic}, which also follows the TVN distribution but with correlation matrices as scale parameters. In Section \ref{sec:app:brain_interaction} we detail its derivation, interpretation and asymptotic distribution.
 For the $i$th level interaction, consider the set of hypotheses at the $(k,l,m)$th voxel
\begin{equation*}\label{hypotheses}
H_o: \mB_{*}(i,k,l,m) = 0
\quad
\text{ vs }
\quad
H_a: \mB_{*}(i,k,l,m) \neq 0 .
\end{equation*}
Under the null hypothesis of no $i$th interaction effect at the  $(k,l,m)$th voxel, the marginal distribution of $\mhZ_{*}
(i,k,l,m)$ is asymptotically $N(0,1)$.
\begin{figure}[h]
\vspace{-0.3in}
  \mbox{
    \subfloat{
      \hspace{-0.25in}
    \begin{minipage}[b][][t]{0.99\linewidth}
      \mbox{
    \setcounter{subfigure}{-2}
  \subfloat[death-negative]{
    \begin{minipage}[b][][t]{\linewidth} 
      \mbox{\subfloat{\includegraphics[width=.485\linewidth]{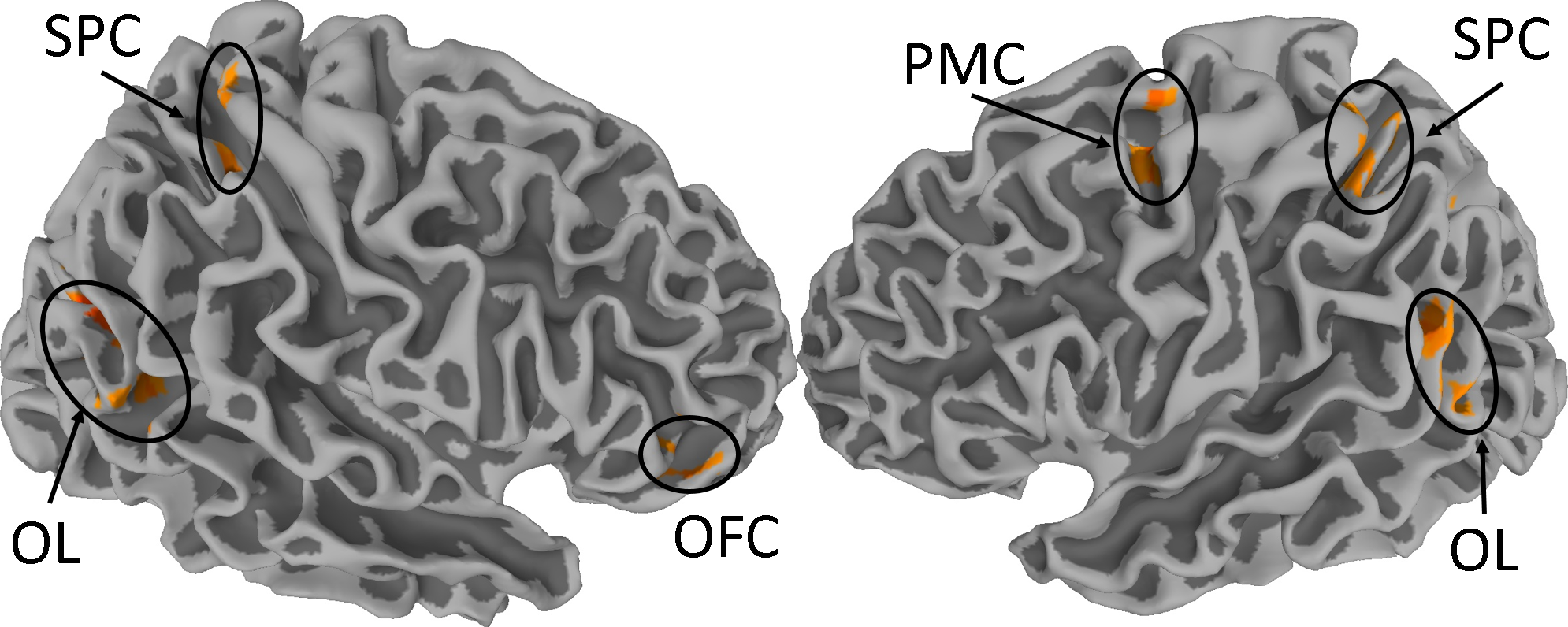}}
        \subfloat{\includegraphics[width=.485\linewidth]{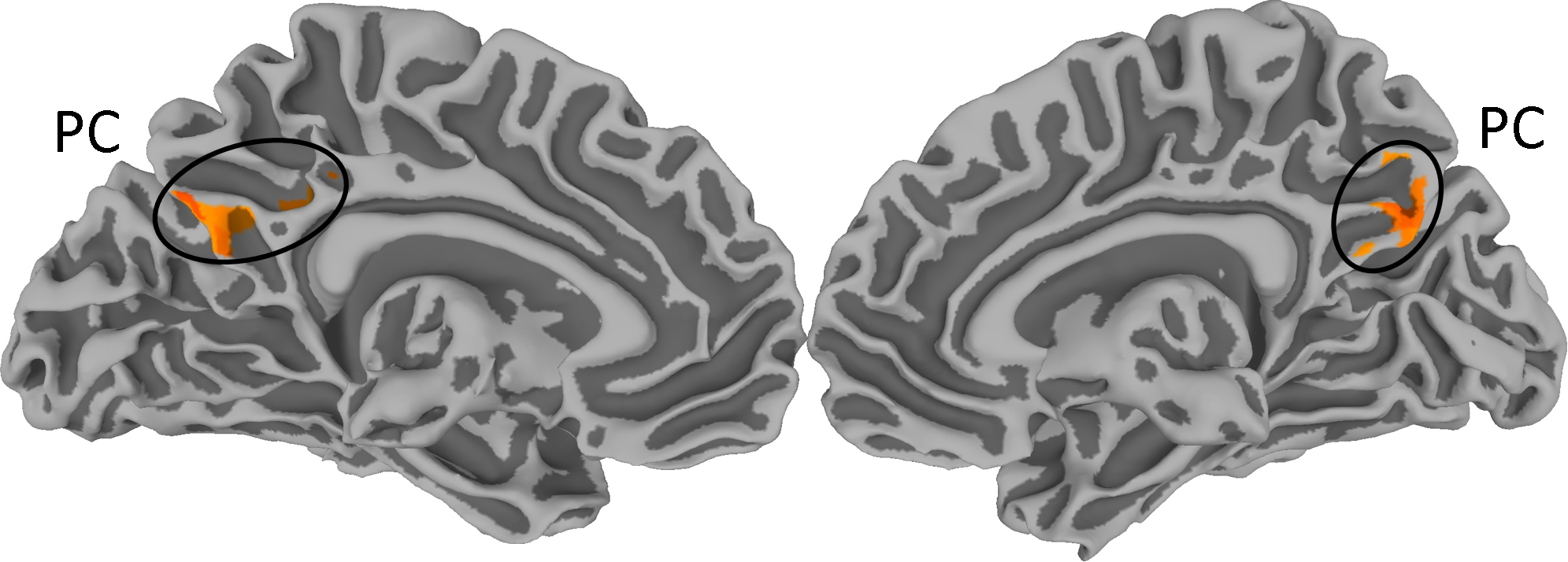}}
      }
\vspace{-0.1in}
      \end{minipage}}%
  }
\mbox{
  \setcounter{subfigure}{-1}
\vspace{-0.1in}
  \subfloat[death-positive]{
    \begin{minipage}[b][][t]{\linewidth} 
      \mbox{\subfloat{\includegraphics[width=.485\linewidth]{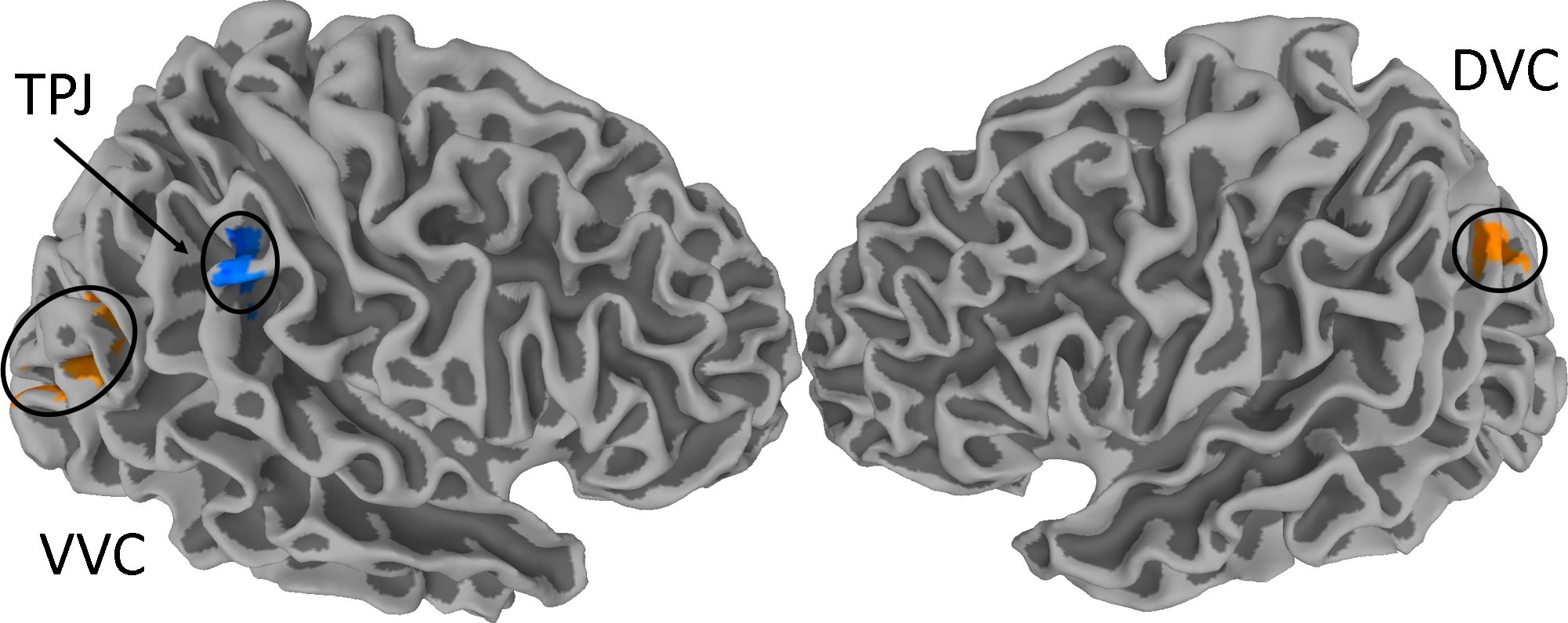}}
        \subfloat{\includegraphics[width=.485\linewidth]{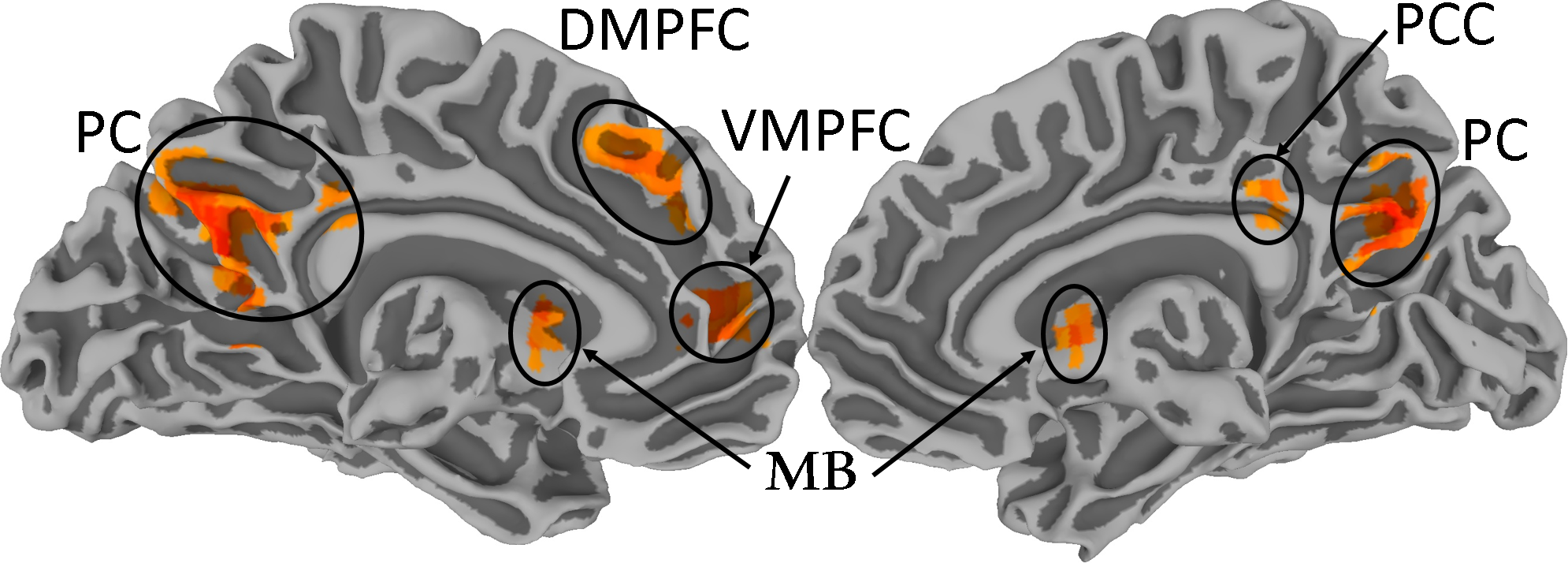}}
      }
\vspace{-0.1in}
      \end{minipage}}%
  }
\mbox{
\vspace{-0.1in}
  \setcounter{subfigure}{0}
  \subfloat[negative-positive]{
    \begin{minipage}[b][][t]{\linewidth} 
      \mbox{\subfloat{\includegraphics[width=.485\linewidth]{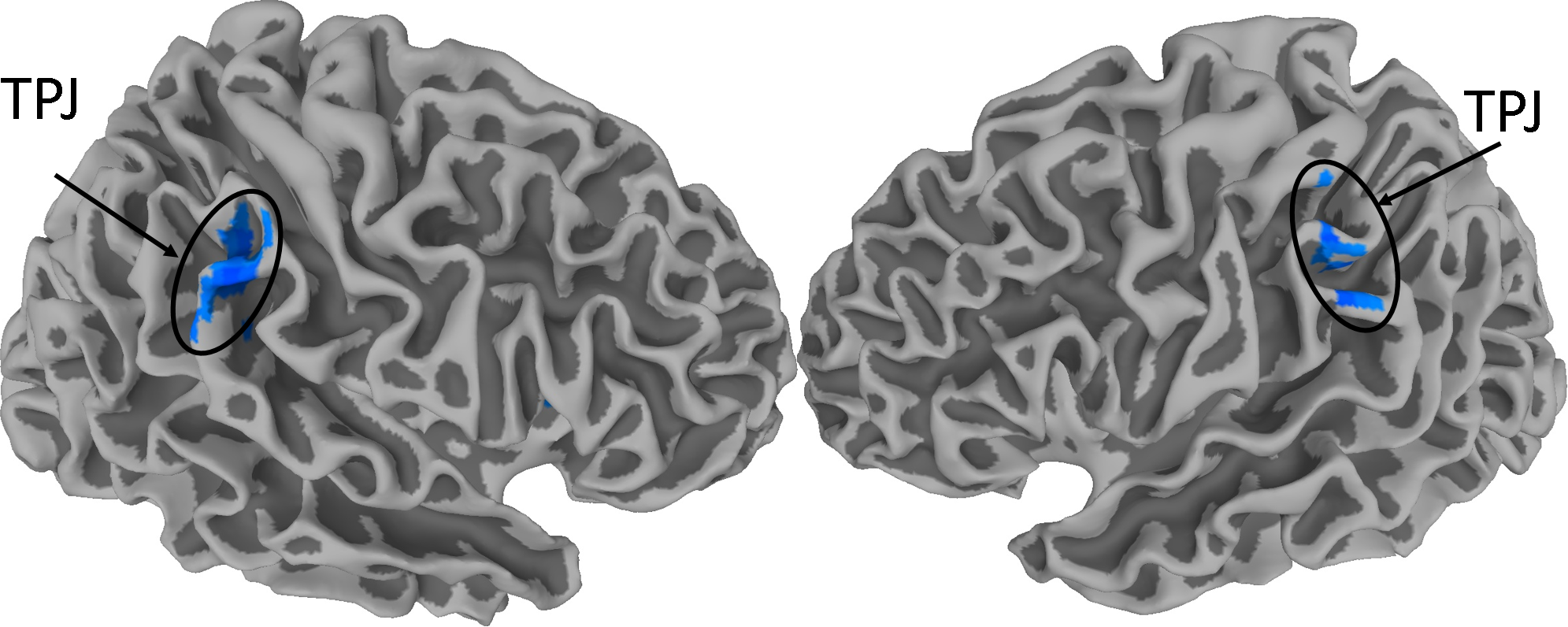}}
        \subfloat{\includegraphics[width=.485\linewidth]{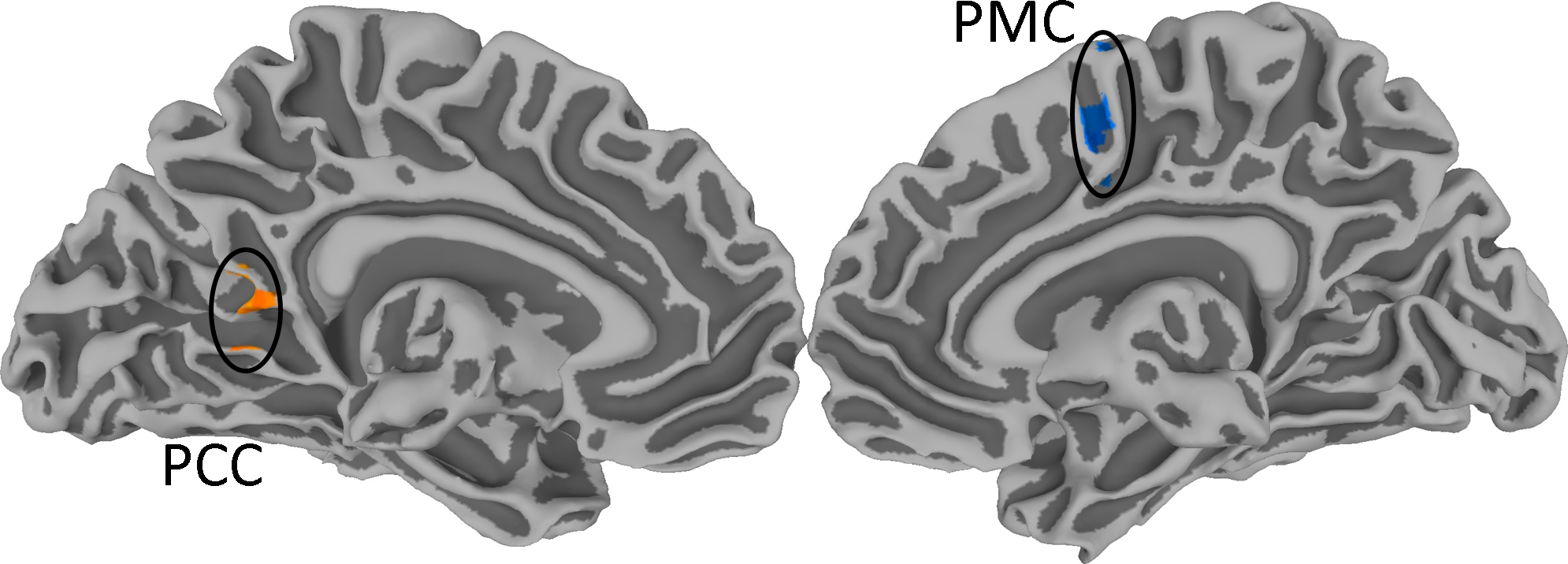}}
      }
\vspace{-0.1in}
      \end{minipage}}%
\vspace{-0.1in}
}
\end{minipage}
}
\hspace{-0.05\linewidth}
 \raisebox{.05\height}{\subfloat{\includegraphics[width=.075\linewidth]{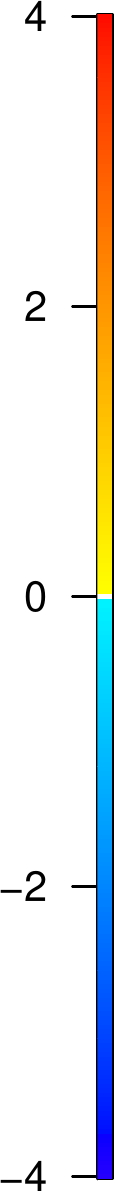}}}
}
\vspace{-0.1in}
  \caption{The test statistic $\mhZ_{*}$ of the interaction between
    subject's attempter/ideator status  and (a) death-negative, (b)
    death-positive and (c) negative-positive words at voxels
    identified significant by cluster thresholding at the 5\%
    level. These voxels are in the precuneus (PC), temporal-parietal
    junction (TPJ), orbital frontal cortex (OFC), premotor cortex (PMC),
    superior parietal cortex (SPC), ventral visual cortex (VVC),
    dorsal visual cortex (DVC),  dorsal medial frontal cortex (DMPFC),
    ventral medial prefrontal cortex (VMPFC),  mamillary bodies (MB),
    posterior cingulate cortex (PCC) and occipital lobe (OL).}
  \label{fig:brain}
\end{figure}
Fig.~\ref{fig:brain} displays 3D maps of the brain with significant
values of  $\mhZ_*$ overlaid for each of the three pairs of
interactions.
Significant voxels were decided using
cluster thresholding~\citep{helleretal06} ($\alpha=0.05$), with
clusters of at least 12 contiguous (under a second-order 
neighborhood specification) voxels, with this minimum cluster size determined by the
Analysis for Neuroimaging (AFNI) software~\citep{cox96,coxandhyde97}.
There are many methods~\citep{genoveseetal02,benjaminiandheller07,smithandfahrmeir07,smithandnichols09,tabelowetal06,polzehletal10,almodovarandmaitra19}
for significance detection in fMRI studies 
but we use cluster thresholding here as an illustration and also because it is the most popular method. We now briefly discuss the results.

Fig.~\ref{fig:brain}(a) identifies significant 
interactions between death- and negative-connoting words on the one hand
and suicide attempters vis-a-vis ideators on the other. All
significant interactions are positive and dominated by the precuneus and the
orbital frontal cortex. The precuneus is associated with depression
and rumination \citep{chengetal18,jacobetal20,zhouetal20}, while the
orbital frontal cortex is associated with the influence that emotions
and feelings have on decision-making \citep{becharaetal00}, as well as
with suicide attempters' reactions to external stimuli
\citep{jollantetal08}. Both regions are also associated with the
Default Mode Network (DMN) that plays a role in representing emotions
\citep{satputeetal19}. These results indicate more differential rumination and
emotions (between attempters and ideators) caused by death-related
words, as compared to negative-connoting words. These findings 
are reinforced by the significance detected in the occipital lobe, the
premotor cortex (PMC) and the superior parietal cortical regions that are 
related to working memory and depression
\citep{simonetal02,koenigsetal09,malleretal14}. Fig.~\ref{fig:brain}(b)
displays significant interactions between the positive and
death-related words and suicide attempters and ideators. The
precuneus is more pronounced here relative to
Fig.~\ref{fig:brain}(a), indicating that death-related words are
more salient than words that have negative and positive connotations
among attempters vis-a-vis ideators. This observation is reinforced
with the detected significance in the dorsal and ventral visual medial
prefrontal 
cortex, the mammillary bodies, and the posterior cingulate cortex (PCC) that
are all involved in processing emotional information
\citep{smithetal18,rolls19}. The PCC is also involved in memory, emotion,
and decision-making \citep{bubbetal17,heilbronner11} and is 
connected to the temporal-parietal junction \citep{zhaoetal16a}
which is involved with emotions and perception~\citep{zaitchiketal10,lettierietal19}. High $\mhZ_*$ values in the
ventral and dorsal visual cortices are commensurate with their
association with working memory tasks~\citep{ungerleideretal98}. Also,
the low values of $\hat\mZ_*$ in the temporal parietal junction point
to needed additional processing of  death-related versus
positive-connoting words 
among attempters relative to ideators.
Fig.~\ref{fig:brain}(c) shows significant interactions between
negative and positive-emoting words and suicide attempters and
ideators. The low values in the left and right temporal-parietal
junctions and the PMC indicate that words conveying negative thoughts
don't need as much processing as do positive-connoting words among
attempters relative to ideators.  The significant assocication of the
PCC in both 
Figs.~\ref{fig:brain}(b) and (c) supports our hypothesis that
death-related words are more salient than negative or positive words
in differentiating attempters from ideators. 
In summary, the two groups of subjects have 
positive- and negative-connoting words result in neurally similar
significant brain regions when compared to death-related words, which
show further significance in areas associated with the processing of
emotional feelings and planning. Our conclusions here are on an
experiment with only 9 attempters and 8 ideators and so are
preliminary, but are intrepretable, providing some confidence in the practical
reductions afforded by TANOVA when coupled with the use of the
Tucker-formatted $\mB$ for this application. 
\subsection{A TANOVA(3,3) model for the LFW face database}\label{application:LFW}

We return to the LFW database of Section~\ref{subseq:lfw} that is a
compendium of over 13,000 face images. Using the steps detailed in Section
\ref{sec:app:lfw}, we selected 605 images with unambiguous genders,
age group and ethnic origin, and such that there are at most 33 images
for each factor combination. 
This dataset was also used by \citet{lock17}  with the goal of
classification, leading to a vector-variate response of attributes
and tensor-valued covariates of color images, for which a CP format
was assumed. In contrast, our objective is to distinguish the
characteristics of different attributes, leading to a 
TANOVA(3,3) model with color images as the response and  
gender, ethnic origin and gender as 
covariates. Our model is as per \eqref{eq:model_noM} and specifically
$$
\mY_{ijkl} = \langle \mX_{ijk}|\mB\rangle + \mE_{ijkl}
,\quad 
\mE_{ijkl} \sim \N_{\bbm_3}(0,\sigma^2\Sigma_1,\Sigma_2,\Sigma_3),
$$
where 
$
i=1,2
,\quad j=1,2,3
,\quad k=1,2,3,4
,\quad l=1,\sdots,n_{ijk},
$
$\bbm_3 = [151,111,3]'$, and the $(i,j,k,l)$th response $\mY_{ijkl}$ is the color image of
 size $151\!\times\! 111\!\times\! 3$ for the $l$th  person
 of the $i$th gender, $j$th ethnic
 origin and $k$th age group.  Here $\mX_{ijk}\!=\!\be_i^2
 \circ\be_j^3 \circ\be_k^4 $ is the 
 tensor-valued covariate for a TANOVA(3,3) model with $(h_1,h_2,h_3)\!=\! (2,3,4)$, as described in Section \ref{subseq:TANOVA}, encoding the
 genders$\times$ethnic-origin$\times$age-group 
 attributes of  $\mY_{ijkl}$.
\begin{figure}[h]
\includegraphics[width=\linewidth]{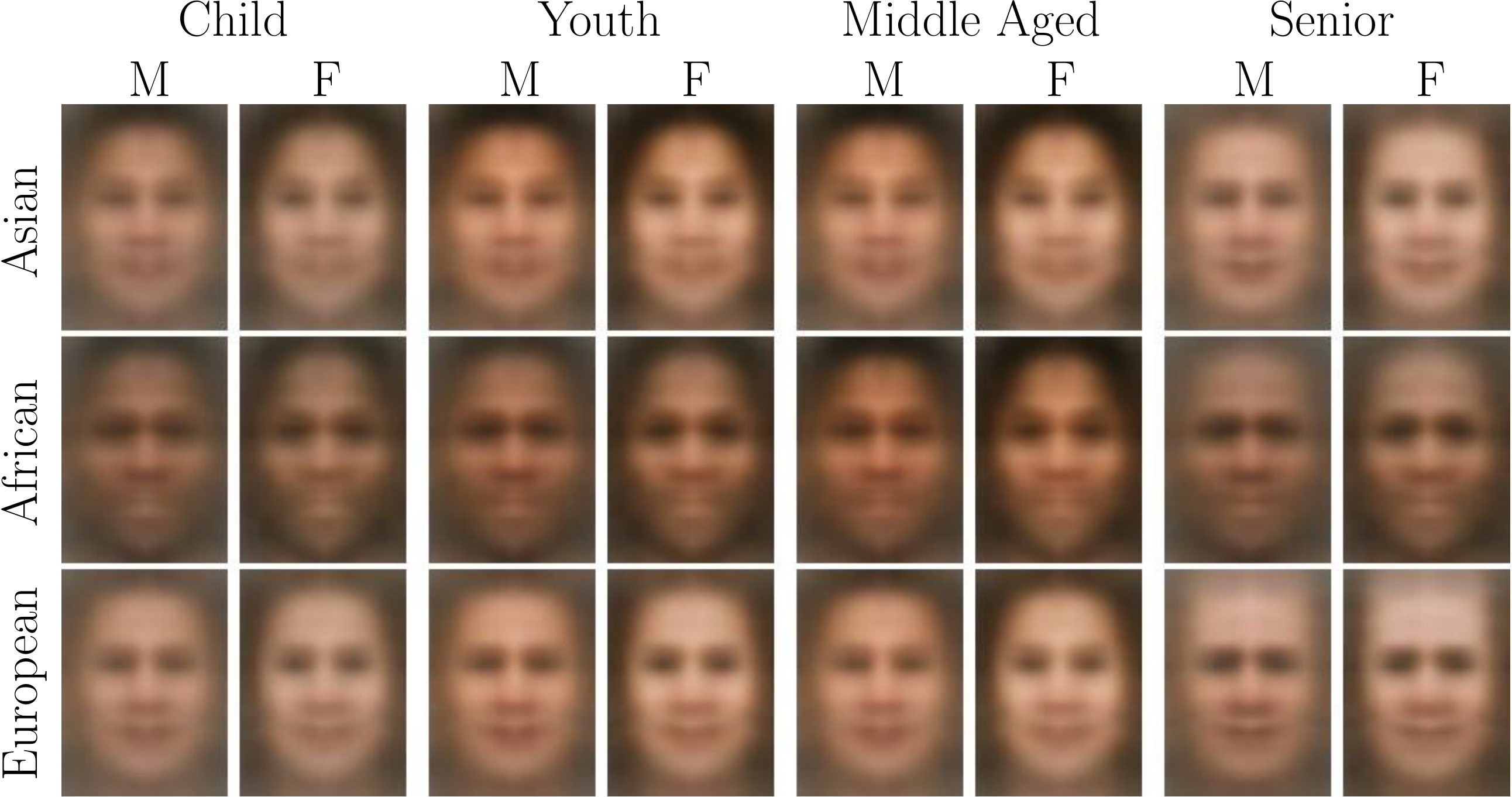}
  \caption{Different slices of the resulting factorized tensor $\mB$ that results from fitting a TANOVA(3,3) model on the LFW dataset using the TT format. The results are compressed mean images across genders (male, female), ethnic origin (Asian, African, European) and age groups (child, youth, middle aged and senior) from 605 central LFW images. We can observe  that the TT format preserved vital information regarding the factor-combination of age group, gender and ethnic-origin.  }
  \label{fig:lfw}
\end{figure}
The corresponding TANOVA parameter $\mB$ is of size
$2\!\times\!3\!\times\!4\!\times\!151\!\times\!111\!\times\!3$ and contains all the group
means. We constrained $\mB$ to have a tensor train (TT) format of TR rank
$(1,3,3,4,10,3)$, chosen using BIC out of a total of 64 candidate
ranks. (In terms of the BIC, the TT format also bested the TK, CP
and OP formats.) 
The number of parameters involved in $\mB$ is 6393 due to the TT
restriction, which is a reduction in the number of unconstrained
parameters of around $99\%$ from the unconstrained $\mB$
that has more than 1.2 million parameters. Fig. \ref{fig:lfw}
displays the estimated  $\mhB$, from where we observe that the TT
format preserved visual information regarding ethnic origin, gender,
and age-group.   
Fitting the model with unstructured $\mB$ and diagonal $\Sigma$
resulted in a BIC of $1.02\times 10^{9}$, while the fitted model with
a similar $\mB$ but Kronecker-separable $\Sigma$ reported a BIC of $-1.22\times 10^{7}$. In contrast, our TT model with Kronecker-separable covariance outperformed these alternatives with a BIC of $-1.87\times 10^{7}$.

\section{Discussion}\label{sec:discussion}
We have provided a multivariate regression and ANOVA framework
that exploits the tensor-valued structure of the explanatory and
response variables using four different low-rank formats on the
regression coefficient and a Kronecker-separable structure on the
covariance matrix. These structures are imposed for context but more so
for practical reasons,
as the number of parameters involved in the classical MVMLR model grows exponentially with the tensor
dimensions. Different stuctures can be compared between each other
using criteria such as BIC. We provided algorithms for ML
estimation, derived their computational complexity, and evaluated them via simulation experiments. We also studied the asymptotic properties of our estimators and applied our
methodology to  identify brain regions associated with suicide attempt
or ideation status and  death- negative- or positive-connoting words. Finally, we 
also used our methods to distinguish facial characteristics in the LFW dataset.

There are several other avenues for further investigation. For
instance, we can perform additional dimension reduction by adding an
$L_1$ penalty on the likelihood optimization. Also, the number of
parameters in the intercept can potentially grow when the tensor response is
high-dimensional, which motivates specifying a low-rank structure on the intercept
term. Similarly, the independent and identically distributed assumption
on the errors is not feasible when
external factors group data-points into units that are similar to one
another. For these cases, a mixed-effects model is more
appropriate.
Also, \citet{papadogeorgouetal21} recently demonstrated that in the
scalar-on-tensor regression case, the use of the CP or TK format can
induce a block structure along the direction of the modes of the
tensor, a phenomenon that we also saw in Fig.~\ref{simu:mat-on-mat}
when the ranks were set, but that went away when BIC was allowed to
tune these ranks. Some of our further 
investigations~(Section~\ref{evaluation-arxiv}) support this latter
finding in a 2D regression framework, however additional 
investigations and palliative measures may be needed.
Further, it would be worth investigating generalization of the Kronecker separable structure of the dispersion matrix or the
normality assumption to  incorporate more general  
distributional forms. Finally, it would be interesting to study the exact distribution of Wilks' $\Lambda$ statistic or other statistic that can be used for testing hypothesis in our TANOVA framework without the need to do simulation.  These are some issues that may benefit from
further attention and that we leave for future work.

\section*{Acknowledgments}
The authors are very grateful to B. Klinedinst and A. Willette of the
Program of Neuroscience and the Department of Food Sciences and Human
Nutrition at Iowa State University for help with the interpretation of
Fig.~\ref{fig:brain}. This research was supported in part by the
      National Institute of Justice (NIJ) under Grants
      No. 2015-DN-BX-K056 and 2018-R2-CX-0034. The research of the
      second author was also supported in part by the National
      Institute of Biomedical Imaging and Bioengineering (NIBIB) 
of the National Institutes of Health (NIH) under Grant R21EB016212,
and the United States Department of Agriculture (USDA) National
Institute of Food and Agriculture (NIFA) Hatch project IOW03617.
The content of this paper is however solely the responsibility of the
authors and does not represent the official views of the NIJ, the NIBIB, the NIH, the NIFA or the USDA.

\section*{\Large Supplementary Appendix}

\renewcommand\thefigure{S\arabic{figure}}\setcounter{figure}{0}
\renewcommand\thetable{S\arabic{table}}\setcounter{table}{0}
\renewcommand\thesection{S\arabic{section}}\setcounter{section}{0}
\renewcommand\theequation{S\arabic{equation}}\setcounter{equation}{0}

\section{Supplement to Section \ref{sec:methodology}}
\label{app:multilinear_statistics}
\subsection{Some Matrix Algebra Properties}
In this section we provide identities and detail notation used in the
main paper, in preparation for the proof of Lemma~\ref{lemma1} and Theorem \ref{thm:mat_outer_product}. While the $\vecc(.)$ operator stacks the columns of a matrix into a vector, the commutation matrix $K_{k,l} \in \mathbb{R}^{kl \times kl}$ matches the elements of $\vecc(A)$ and $\vecc(A')$, 
\begin{equation}\label{supp:vec}
\vecc{(A')} = K_{k,l}\vecc{(A)}, \quad
A \in \mathbb{R}^{k\times l}.
\end{equation}
The commutation matrix plays a critical role in tensor algebra, as it allows us to reshape tensors inside their vectorization. 
If $A$ is a symmetric $n\times n$ matrix, then $\vecc(A)$ has $n^2$
elements of which $n(n-1)/2$ are repetitions, while the half-vectorization $\vech(A)$ contains the $n(n+1)/2$ unique elements of $A$, since it does not take into account the elements above the diagonal. The duplication matrix $D_n$ maps the elements of $\vecc(A)$ and $\vech(A)$ as
$
D_n\vech(A) = \vecc(A),
$
and is a full-column rank matrix of size $n^2\times(n(n+1)/2)$.
The following is from \citet{magnusandneudecker99}:
\begin{property}\label{prop:Com}
Let $A_1 \in \mathbb{R}^{m \times n}$ and $A_2 \in \mathbb{R}^{p
  \times q}$ and $K_{\cdot,\cdot}$ be a commutation matrix. Then
\begin{enumerate}[label=\alph*.]
\item \label{prop:Com:a}
$K_{\cdot,\cdot}$ is orthogonal and flipping the arguments
  results in its transpose. That is, 
$$K_{m,n}' = K_{m,n}^{-1} = K_{n,m}.$$
\item \label{prop:Com:b}
$K_{\cdot,\cdot}$ can be used to change the order of the
  Kronecker product. That is, 
$$K_{p,m}(A_1 \otimes A_2)K_{n,q} = A_2 \otimes A_1.$$
\item \label{prop:Com:c}
$K_{\cdot,\cdot}$ can be used to split the Kronecker product
  inside the vectorization. That is, 
$$\vecc(A_1\otimes A_2) = R_{A_1}\vecc(A_2),\quad R_{A_1} = (I_n \otimes K_{q,m})(\vecc A_1 \otimes I_q).$$
\end{enumerate}
\end{property}
The Kronecker product ($\otimes$) between $A$ and $B$ results in the following block matrix:
\begin{equation}
A\otimes B = 
\begin{bmatrix}
A(1,1) B&\hdots&A(1,m) B\\
\vdots&\ddots&\vdots\\
A(n,1) B&\hdots&A(n,m) B
\end{bmatrix}.
\end{equation}
The following properties follow recursively from the two-matrix cases in \citet{petersenandpedersen12}.
\begin{property}\label{prop:Kron} Let $A_1,A_2,\hdots A_p$ be matrices of any size and $\Sigma_1,\Sigma_2,\hdots,\Sigma_p$ be positive definite matrices of any size. Then
\begin{enumerate}[label=\alph*.]
\item \label{prop:Kron:a}
$
\big| \bigotimes\limits_{i=p}^1 \Sigma_i |  
= \prod\limits_{i=1}^p | \Sigma\_i|^{m_{-i}}
,\quad
m_{-i} = m/m_i
$
\item \label{prop:Kron:b}
$
\big(\bigotimes\limits_{i=p}^1 \Sigma_i \big)^{-1} 
=\big(\bigotimes\limits_{i=p}^1 \Sigma_i^{-1} \big) 
$,\quad
$
\big(\bigotimes\limits_{i=p}^1 A_i \big)' 
=\big(\bigotimes\limits_{i=p}^1 A_i' \big).
$
\item \label{prop:Kron:c}
$
\bigotimes\limits_{i=p}^{1} A_i = \big( \bigotimes\limits_{i=p}^{l+1} A_i )
\otimes \big( \bigotimes\limits_{i=l}^{1} A_{i} )
$,\quad
$l = 2,\hdots, p-2$.
\item \label{prop:Kron:d}
$
\big( \bigotimes\limits_{i=p}^1 A_i\big)\big( \bigotimes\limits_{i=p}^1 B_i \big)
 =  \bigotimes\limits_{i=p}^1 (A_iB_i)$
,\quad where $A_i$ has as many columns as the rows of $B_i$.
\end{enumerate}
\end{property}
\subsection{Some Tensor Algebra Properties}\label{tensoralgebra}
The vectorization, $k$th-mode matricization and the $k$th canonical
matricization of a $p$-th order tensor $\mX$ of size $m_1\times
m_2\times \ldots\times m_p$ as defined in Table \ref{table:1} and Equation~\eqref{eq:mat} are
\begin{equation}\label{defvec}
\vecc (\mathbf{\mX})
= \sum\limits_{i_1=1}^{m_1} \hdots \sum\limits_{i_p=1}^{m_p} \mX(i_1 \hdots i_p)
\big( \bigotimes\limits_{q =p}^1 \be_{i_q}^{m_q} \big),
\end{equation}
\begin{equation}\label{kmode}
\mX_{(k)} 
=\sum\limits_{i_1=1}^{m_1} \hdots \sum\limits_{i_p=1}^{m_p} \mX(i_1 \hdots i_p)
\be_{i_k}^{m_k} \big( \bigotimes\limits_{\overset{q=p}{q\neq k}}^1 \be_{i_q}^{m_q} \big)',
\end{equation}
and
\begin{equation}\label{canonical}
\mX_{<k>}  
= \sum\limits_{i_1=1}^{m_1} \hdots \sum\limits_{i_p=1}^{m_p} \mX(i_1 \hdots i_p)
\big( \bigotimes\limits_{q =k}^1 \be_{i_q}^{m_q} \big)
\big( \bigotimes\limits_{q =p}^{k+1} \be_{i_{q}}^{m_{q}} \big)',
\end{equation}
respectively. To illustrate these reshapings, consider the third-order
tensor $\mX\in \mathbb{R}^{3 \times 4 \times 2}$ illustrated in
Fig.~\ref{fig:tensim} and that
\begin{figure}[h]
\includegraphics[width = 0.35\textwidth]{./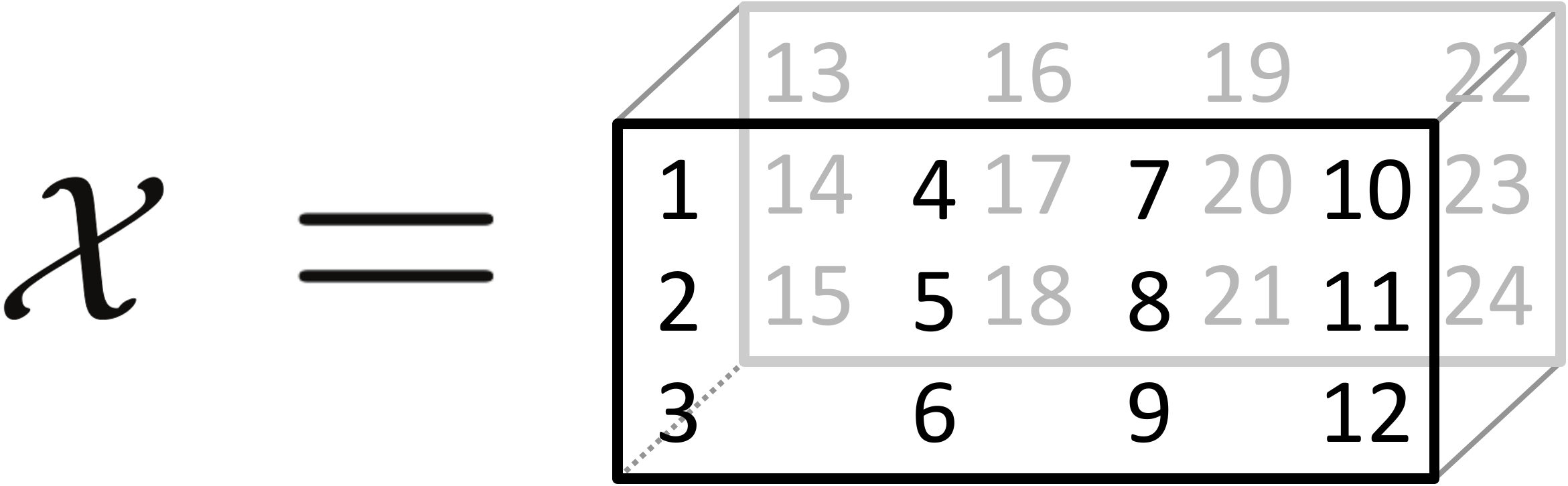}
  \centering
  \caption{A third order tensor $\mX\in \mathbb{R}^{3 \times 4 \times 2}$ with elements 1 through 24. }
  \label{fig:tensim}
\end{figure}
was also used in \cite{koldaandbader09}. 

The $k$th-mode matricizations are
$$ 
\mX_{(1)} = \begin{bmatrix}
1&4&7&10&13&16&19&22\\
2&5&8&11&14&17&20&23\\
3&6&9&12&15&18&21&24
\end{bmatrix},
\hspace{0.5cm}
\mX_{(2)} = \begin{bmatrix}
1&2&3&13&14&15\\
4&5&6&16&17&18\\
7&8&9&19&20&21\\
10&11&12&22&23&24
\end{bmatrix}
\hspace{0.1cm},$$
$$
\mX_{(3)} = 
\left[
\begin{array}{*{12}c}
1  & 2  & 3 & 4  & 5  & 6  & 7  & 8  & 9  & 10 & 11 & 12 \\
13 & 14 & 15 & 16 & 17 & 18 & 19 & 20 & 21 & 22 & 23 & 24  
\end{array}
\right]
$$
and the $k$ canonical matricizations are
$$ 
\mX_{<1>} = \mX_{(1)}
,\quad
\mX_{<2>} = \mX_{(3)}'
,\quad
\mX_{<3>} = \vecc(\mX) = [ 1 \quad 2\quad  \hdots\quad 24]'.
$$

The last-mode with first-mode contraction, denoted as $\times^1$ and
defined in Table~\ref{table:2} and
Equation~\eqref{def:modecontraction}, is used to define the TR format
in equation \eqref{TRform}. One special case of this contraction is
the matrix product between $X$ and $Y$, which contracts the columns of
$X$ with the rows of $Y$, and can be expressed as $X\times^1 Y$. In
Fig.~\ref{fig:tensim2} we further ilustrate this contraction, for
the $\mX$ of Figure \ref{fig:tensim}.
\begin{figure}[h]
\includegraphics[width = 0.5\textwidth]{./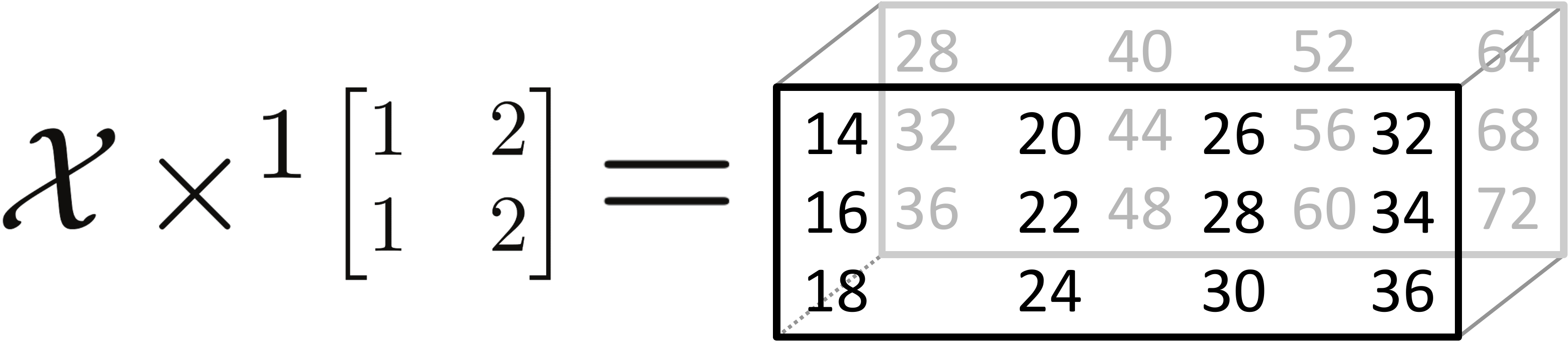}
  \centering
  \caption{The third order tensor that results from applying the
    last-mode with first-mode contraction between $\mX$ and $Y$, where
    $\mX$ is as in Figure \ref{fig:tensim}.}
  \label{fig:tensim2}
\end{figure}

We can reshape tensors by manipulating the vector outer product
($\circ$) applied to unit basis vectors, as in Equation~\eqref{tens}. If $\{\be_k^{m_k}\}_{k=1}^p$ are unit basis vectors, then $\bigotimes_{k=p}^1 \be_k^{m_k} $ is also so, and so
\begin{equation}\label{eq:unitbasis}
\Big( {\bigotimes_{k=p}^1 \be_k^{m_k}} \Big)' \Big( \bigotimes_{k=p}^1 \be_k^{m_k} \Big)=  \bigotimes_{k=p}^1({\be_k^{m_k}}' \be_k^{m_k}) = 1.
\end{equation}
Equation \eqref{eq:unitbasis} is helpful in simplifying matrix products between matricized tensors. 
\subsection{Proof of Lemma \ref{lemma1}}\label{proof:lemma1}
\begin{proof}\hfill
\begin{enumerate}[label=(\alph*)]
\item We have
 \begin{equation*}
    \begin{split}
\mX_{<p-1>} '
&=\sum\limits_{i_1=1}^{m_1} \hdots \sum\limits_{i_p=1}^{m_p} \mX(i_1,i_2,\hdots,i_p)
\Big(
\big( \bigotimes\limits_{q =p-1}^1 \be_{i_q}^{m_q} \big) \big(\be_{i_p}^{m_p}\big)'
\Big)'
\\&=
\sum\limits_{i_1=1}^{m_1} \hdots \sum\limits_{i_p=1}^{m_p} \mX(i_1,i_2,\hdots,i_p)
\Big( \big(\be_{i_p}^{m_p}\big)
\big( \bigotimes\limits_{q =p-1}^1 \be_{i_q}^{m_q} \big)'
\Big) = \mX_{(p)}.
\end{split}
\end{equation*}
\item For any $l=1,2,\ldots,p$, we have
  \begin{equation*}
    \begin{split}
\vecc(\mX_{<l>})
&=\sum\limits_{i_1=1}^{m_1} \hdots \sum\limits_{i_p=1}^{m_p} \mX(i_1,i_2,\hdots,i_p)
 \vecc\Big(\big( \bigotimes\limits_{q =l}^1 \be_{i_q}^{m_q} \big) 
 \big( \bigotimes\limits_{q =p}^{l+1} \be_{i_q}^{m_q} \big)' \Big)
\\&=
\sum\limits_{i_1=1}^{m_1} \hdots \sum\limits_{i_p=1}^{m_p} \mX(i_1,i_2,\hdots,i_p)
\big( \bigotimes\limits_{q =p}^1 \be_{i_q}^{m_q} \big)
=\vecc(\mX).
\end{split}
\end{equation*}
\item Using the definition of vectorization in Table~\ref{table:1}, we have
  \begin{equation*}
    \begin{split}
(\vecc{\mX})' (\vecc{\mY})
& =
\Big[\sum\limits_{i_1=1}^{m_1} \hdots \sum\limits_{i_p=1}^{m_p}        \mX(i_1,i_2,\hdots,i_p) 
\big( \bigotimes\limits_{q =p}^1 \be_{i_q}^{m_q} \big)'\Big]
\Big[\sum\limits_{i_1=1}^{m_1} \hdots \sum\limits_{i_p=1}^{m_p} \mY(i_1,i_2,\hdots,i_p)
\big( \bigotimes\limits_{q =p}^1 \be_{i_q}^{m_q} \big)\Big]
\\& =
\sum\limits_{i_1=1}^{m_1} \hdots \sum\limits_{i_p=1}^{m_p}
\mX(i_1,i_2,\hdots,i_p)\mY(i_1,i_2,\hdots,i_p) \big(
\bigotimes\limits_{q =p}^1 \be_{i_q}^{m_q} \big)'\big(
\bigotimes\limits_{q =p}^1 \be_{i_q}^{m_q} \big) 
\\&=
\sum\limits_{i_1=1}^{m_1} \hdots \sum\limits_{i_p=1}^{m_p} \mX(i_1,i_2,\hdots,i_p)\mY(i_1,i_2,\hdots,i_p)= \langle \mX , \mY \rangle .
\end{split}
\end{equation*}
Similarly if $\bi = [i_1,\dots,i_p]'$ then
  \begin{equation*}
   \begin{split}
\tr(\mX_{(k)}\mY_{(k)}')
&=
\tr \Bigg\{ \Big[\sum\limits_{i_1=1}^{m_1} \hdots \sum\limits_{i_p=1}^{m_p} \mX(\bi)
(\be_{i_k}^{m_k}) \big( \bigotimes\limits_{\overset{q=p}{q\neq k}}^1 \be_{i_q}^{m_q} \big)'\Big]
\Big[\sum\limits_{i_1=1}^{m_1} \hdots \sum\limits_{i_p=1}^{m_p} \mY(\bi)
 \big( \bigotimes\limits_{\overset{q=p}{q\neq k}}^1 \be_{i_q}^{m_q} \big)\big(\be_{i_k}^{m_k}\big)'\Big]\Bigg\}
\\&=
\sum\limits_{i_1=1}^{m_1} \hdots \sum\limits_{i_p=1}^{m_p} \mX(i_1,i_2,\hdots,i_p)\mY(i_1,i_2,\hdots,i_p)
\tr \Bigg\{\be_{i_k}^{m_k}\big( \bigotimes\limits_{\overset{q=p}{q\neq k}}^1 \be_{i_q}^{m_q} \big)'\big( \bigotimes\limits_{\overset{q=p}{q\neq k}}^1 \be_{i_q}^{m_q} \big) \big(\be_{i_k}^{m_k}\big)'\Bigg\}
\\&=
\sum\limits_{i_1=1}^{m_1} \hdots \sum\limits_{i_p=1}^{m_p} \mX(i_1,i_2,\hdots,i_p)\mY(i_1,i_2,\hdots,i_p) = \langle \mX , \mY \rangle .
\end{split} 
\end{equation*}
\item   
\begin{equation*}
    \begin{split}
\vecc[\![ \mX; A_1, \hdots,A_p  ]\!] 
&=
\sum\limits_{i_1=1}^{m_1} \hdots \sum\limits_{i_p=1}^{m_p} \mX(i_1,i_2,\hdots,i_p)
 \vecc\Big( \circs\limits_{q=1}^p A_q(:,i_q) \Big)
\\&=
\sum\limits_{i_1=1}^{m_1} \hdots \sum\limits_{i_p=1}^{m_p} \mX(i_1,i_2,\hdots,i_p)
\Big( \bigotimes\limits_{q=p}^1 A_q(:,i_q) \Big)
=
 \big(\bigotimes\limits_{q=p}^1 A_q\big) \vecc( \mX).
\end{split}
\end{equation*}
\item We have for $\bi = [i_1,\dots,i_p]'$ and $\bj = [j_1,\dots,j_q]'$ that
  \begin{equation*}
    \begin{split}
\mB_{<p>}' \vecc{ \mX}
&=
 \Big[\sum\limits_{i_1=1}^{m_1} \hdots \sum\limits_{i_p=1}^{m_p}\sum\limits_{j_1=1}^{h_1} \hdots \sum\limits_{j_q=1}^{h_q}
\mB([\bi'\bj']')
\Big(
\big( \bigotimes\limits_{k =q}^1 \be_{j_k}^{h_k} \big)
\big( \bigotimes\limits_{k =p}^1 \be_{i_k}^{m_k} \big) '
\Big)\Big] \Big[
\sum\limits_{i_1=1}^{m_1} \hdots \sum\limits_{i_p=1}^{m_p} \mX(\bi)
\big( \bigotimes\limits_{k =p}^1 \be_{i_k}^{m_k} \big)\Big]
\\&=
\sum\limits_{i_1=1}^{m_1} \hdots \sum\limits_{i_p=1}^{m_p}\sum\limits_{j_1=1}^{h_1} \hdots \sum\limits_{j_q=1}^{h_q}
\mB([\bi'\bj']')\mX(\bi)
\big( \bigotimes\limits_{k =q}^{1} \be_{j_k}^{h_k}\big)\big( \bigotimes\limits_{k =p}^1 \be_{i_k}^{m_k} \big) '\big( \bigotimes\limits_{k =p}^1 \be_{i_k}^{m_k} \big) 
\\&=
\sum\limits_{i_1=1}^{m_1} \hdots \sum\limits_{i_p=1}^{m_p}\sum\limits_{j_1=1}^{h_1} \hdots \sum\limits_{j_q=1}^{h_q}
\mB([\bi'\bj']')\mX(\bi)
\big( \bigotimes\limits_{k =q}^{1} \be_{j_k}^{h_k}\big)
=\vecc\langle \mX | \mB \rangle.
\end{split}
\end{equation*}
\item Because of Lemma~\ref{lemma1}\ref{lemma1:z} and Lemma~\ref{lemma1}\ref{lemma1:y} and the commutation matrix, if $\mY \in \mathbb{R}^{m_1\times m_2\times\hdots\times m_k}$, then
\begin{equation}\label{eq:subtens_vec}
\vecc (\mY)  = \vecc (\mY_{<k-1>}) = \vecc(\mY_{(k)}') = K_{m_k,\prod_{i=1}^{k-1}m_i}\vecc(\mY_{(k)}).
\end{equation} Therefore, if we let ${\mX}^{(j_{k+1},\hdots,j_{p})}\in \mathbb{R}^{m_1\times m_2\times\hdots\times m_k}$ be the sub-tensor of $\mX$  defined as 
$
{\mX}^{(j_{k+1},\hdots,j_{p})} = {\mX}{(:,\dots,:,j_{k+1},\hdots,j_{p})}
$, then from~\eqref{eq:subtens_vec},
  \begin{equation*}
    \begin{split}
\vecc{(\mX)}
&=
\begin{bmatrix}
\vecc{{\mX}^{(1,\hdots,1,1)}}\\
\vecc{{\mX}^{(1,\hdots,1,2)}}\\
\vdots\\
\vecc{\mX^{ (m_{k+1},\hdots,m_{p})}}
\end{bmatrix} 
=
\begin{bmatrix}
K_{m_k,\prod_{i=1}^{k-1}m_i}\vecc{(\mX^{(1,\hdots,1,1)}_{(k)})}\\
K_{m_k,\prod_{i=1}^{k-1}m_i}\vecc{(\mX^{(1,\hdots,1,2)}_{(k)})}\\
\vdots\\
K_{m_k,\prod_{i=1}^{k-1}m_i}\vecc{(\mX^{ (m_{k+1},\hdots,m_{p})}_{(k)})}
\end{bmatrix}
\\&=
K_{(k)}'
\vecc(\mX_{(k)}).
\end{split}
\end{equation*}
The result follows by left-multiplying both sides by the orthogonal
matrix $K_{(k)}$.\vspace*{-.5cm} 
\end{enumerate}
\end{proof}
\subsection{Proof of Theorem \ref{thm:mat_outer_product}} \label{proof:mat_outer_product}
\begin{proof}\hfill
\begin{enumerate}[label=(\alph*)]
\item We have based on property \ref{prop:Kron}
\begin{equation*}
    \begin{split}
(\circ [\![ M_1,M_2 \hdots,M_p  ]\!])_{<p>}
&= \sum\limits_{i_1=1}^{m_1} \sum\limits_{j_1=1}^{h_1} 
 \hdots 
 \sum\limits_{i_p=1}^{m_p}  \sum\limits_{j_p=1}^{h_p}
\big(\prod\limits_{q=1}^p M_q(i_q,j_q)\big)
\Big\{
\big( \circs\limits_{q=1}^p \be_{j_q}^{h_q} \big) \circ
\big( \circs\limits_{q=1}^p \be_{i_q}^{m_q} \big)
\Big\}_{<p>}
\\&=
 \sum\limits_{i_1=1}^{m_1} \sum\limits_{j_1=1}^{h_1} 
 \hdots 
 \sum\limits_{i_p=1}^{m_p}  \sum\limits_{j_p=1}^{h_p}
\big(\prod\limits_{q=1}^p M_q(i_q,j_q)\big)
\Big\{
\big( \bigotimes\limits_{q=p}^1 \be_{j_q}^{h_q} \big) 
\big( \bigotimes\limits_{q=p}^1 \be_{i_q}^{m_q} \big)'
\Big\}
\\&=
\bigotimes\limits_{q=p}^1 \left( \sum\limits_{j_q=1}^{h_q}  \sum\limits_{i_q=1}^{m_q}
M_q(i_q,j_q) \be_{j_q}^{h_q} {\be_{i_q}^{m_q}}' 
\right)
=\bigotimes\limits_{q=p}^1 M_q'.\hspace{7.9cm}
\end{split}
\end{equation*}
\item Because of Theorem~\ref{thm:mat_outer_product}\ref{thm:mat_outer_product:z} and Lemma~\ref{lemma1}\ref{lemma1:v},
  \begin{equation*}
    \begin{split}
\vecc\langle \mX | \circ [\![ M_1,M_2 \hdots,M_p  ]\!] \rangle 
&= (\circ [\![ M_1,M_2 \hdots,M_p  ]\!])'_{<p>}\vecc(\mX)
=
\big(\bigotimes\limits_{q=p}^1 M_q \big) \vecc(\mX) 
= \vecc[\![ \mX; M_1, \hdots,M_p  ]\!].
\end{split}
\end{equation*}
Therefore $[\![ \mX; M_1, \hdots,M_p  ]\!]$ and $\langle \mX | \circ [\![ M_1,M_2 \hdots,M_p  ]\!] \rangle$ are of the same size and vectorization, and must be the same.
\item Using definition \ref{def:matdef} and  property \ref{prop:Kron}
\begin{equation*}
    \begin{split}
\circ [\![ M_1,M_2 \hdots,M_p  ]\!]_{(\mathscr{S}\times \mathscr{S}^\mathsf{c})} 
&=
\sum\limits_{i_1=1}^{m_1} \sum\limits_{j_1=1}^{h_1} 
 \hdots 
 \sum\limits_{i_p=1}^{m_p}  \sum\limits_{j_p=1}^{h_p}
\big(\prod\limits_{q=1}^p M_q(i_q,j_q)\big)
\Big\{
\big(  
\be_{j_k}^{h_k} 
\otimes 
\be_{i_k}^{m_k} 
\big) 
\big( 
\bigotimes\limits_{q=p,q\neq k}^1 \be_{j_q}^{h_q} 
\otimes
\bigotimes\limits_{q=p,q\neq k}^1 \be_{i_q}^{m_q} 
\big)'
\Big\}
\\&=
(\vecc M_k')(\vecc \circ [\![ M_1,M_2 \hdots,M_{k-1},M_{k+1},\hdots,M_p  ]\!])'.
\end{split}
\end{equation*}
\end{enumerate}
\end{proof}

\subsection{Some matrices involved in our methodology}\label{app:closedforms}

In this Section we provide and simplify some expressions involved in Sections \ref{subsec:parest} and \ref{subsec:sampdist}. If $\mV$ is a tensor of size $r\times r\times\dots\times r$, then its super-diagonal $\boldsymbol{v} \in \mathbb{R}^r$ contains the elements with coincident indices in each dimension of $\mV$, such that
\begin{equation}\label{eq:super-diagonal}
\boldsymbol{v}(i) = \mV(i,i,\dots,i).
\end{equation}
We assume vectors are their own super-diagonals.
Now we provide simplifications for the expressions involved in the CP format using the Hadamard (or element-wise) matrix-product $(\Ast)$, which is computationally more efficient than the traditional matrix product.

\subsubsection{The CP format}\label{supp:CP}

\textbf{Simplifying Equation \eqref{eq:CPM}}:
First, let $\bw_i$ be the super-diagonal of $[\![\mX_i;L_1',L_2'\hdots,L_l']\!]$, as defined in equation \eqref{eq:super-diagonal}, and let $W_i$ be the $r\times r$ diagonal matrix with diagonal elements in $\bw_i$. Then the matrix $G_{ik}^{CP}$ in \eqref{eq:reg_CPkth} can be written as
\begin{equation}\label{supp:CPsimp1}
G_{ik}^{CP}= W_i
\Big(
\bigodot\limits_{q=p,q\neq k}^1 M_q
\Big)',
\end{equation}
where $\odot$ is the Khatri-Rao matrix product of \citep{petersenandpedersen12}.  This formulation allows us to simplify the matrix 
$
\sum_{i=1}^n G_{ik}^{CP}
\Sigma_{-k}^{-1}   {G_{ik}^{CP}}'$ from Equation~\eqref{eq:CPM} using Proposition 3.2 from \citet{kolda06} as
\begin{equation}\label{supp:CPsimp2}
\sum\limits_{i=1}^n G_{ik}^{CP}
\Sigma_{-k}^{-1}   {G_{ik}^{CP}}'
=
\Big[
\sum\limits_{i=1}^n \bw_i\bw_i'
\Big]
\Ast
\Big[
\Ast\limits_{q=1,q\neq k}^{p} 
(M_k'\Sigma_k^{-1}M_k)
\Big].
\end{equation}
Further, we can simplify equation \eqref{eq:CPM} based on equation \eqref{supp:CPsimp2} and the following identity
 $$ {\bmY_i}_{(k)} \Sigma_{-k}^{-1}
 {G_{ik}^{CP}}'=
 [\by_{1i}\dots\by_{m_ki}]'
W_i,
 $$
 where $\by_{si}$ is the super-diagonal of $
[\![\bmY_{si};M_1'\Sigma_{1}^{-1},\dots,M_{k-1}'\Sigma_{k-1}^{-1},M_{k+1}'\Sigma_{k+1}^{-1},\dots,M_{p}'\Sigma_{p}^{-1}]\!],
$
and $\bmY_{si}$ is the tensor of size $m_1\times\dots\times m_{k-1}\times m_{k+1}\times\dots\times m_p$ obtained upon fixing the $k$-th mode of $\bmY_i$ at the position $s$. After having estimated $\widehat{M}_k$ in equation \eqref{eq:CPM}, we can also simplify $S_k$ as 
$$
S_k = \sum\limits_{i=1}^n \bmY_{i(k)}\Sigma_{-k}^{-1}\bmY_{i(k)}' 
-
\widehat{M}_k 
\left(\sum\limits_{i=1}^n G_{ik}^{CP}
\Sigma_{-k}^{-1}   {G_{ik}^{CP}}'\right)
\widehat{M}_k',
$$
where $\sum\limits_{i=1}^n G_{ik}^{CP}
\Sigma_{-k}^{-1}   {G_{ik}^{CP}}'$ is already simplified in equation \eqref{supp:CPsimp2}.

\textbf{Simplifying Equation \eqref{eq:CPL}}:
The matrix
$
\sum\limits_{i=1}^n 
 H_{ik}^{CP} \Sigma^{-1}  {H_{ik}^{CP}}' 
$
from equation \eqref{eq:CPL} can be written as $\mH^2_{<2>}$, which is the $2$-canonical matricization of the tensor $\mH^2$ of size $h_k\times r\times h_k\times r$, with sub-tensors  
$$
\mH^2(s,:,t,:) = (\Ast\limits_{k=1}^pM_k'\Sigma_k^{-1}M_k) * (\sum_i^n \bw_{si}\bw_{ti}')
\quad \text{ for all }\quad
s,t = 1,2,\dots,h_k, 
$$
where $\bw_{si}\in \mathbb{R}^r$ is the super-diagonal of 
$
[\![\mX_{si};L_1',\dots,L_{k-1}',L_{k+1}',\dots,L_l]\!],
$
and $\mX_{si}$ is the tensor of size $h_1\times\dots\times h_{k-1}\times h_{k+1}\times\dots\times h_l$ obtained upon fixing the $k$-th mode of $\mX_i$ at the position $s$. 
Finally, equation \eqref{eq:CPL} can be greatly simplified using the above simplification along with the following identity
$$
H_{ik}^{CP}\Sigma^{-1} \vecc{(\bmY_i )}
=
[(\bw_{1i}*\by_i)'\dots (\bw_{h_ki}*\by_i)']',
$$
where $\by_{i}$ is the super-diagonal of $
[\![\bmY_{i};M_1'\Sigma_{1}^{-1},\dots,M_{p}'\Sigma_{p}^{-1}]\!]$.
\subsubsection{The TR format}\label{supp:TR}

To define the matrices $G_{ik}^{TR}$ for $k=1,2,\dots,l$, let
\begin{equation}\label{supp:TRnoM}
\mB_{-\mM_k}
=
\mM_{k+1}\times^1\hdots \times^1 \mM_{p}
\times^1 
\mL_{1}\times^1\hdots \times^1 \mL_{l}
\times^1
\mM_{1}\times^1\hdots \times^1 \mM_{k-1},
\end{equation}
be the $(p+1)$th order TT-formatted tensor of dimension $(g_k\times m_{k+1}\times \dots\times m_{p}\times h_1\times \dots \times h_l\times m_1\times \dots\times m_{k-1}\times g_{k-1})$ that is generated similar to the TR tensor in Equation \eqref{eq:reg_TR} , but is missing the TR factor $\mM_{k}$.
The tensor $\mB_{-\mM_{k}}$ in Equation~\eqref{supp:TRnoM} is defined for the cases where $k=2,3,\dots,p-1$, however $\mB_{-\mM_1}$ and $\mB_{-\mM_{p}}$ are similar: the tensor $\mB_{-\mM_p}$ has dimensions $(g_p\times h_1\times\dots\times h_l\times m_1\times \dots\times m_{p-1}\times g_{p-1})$ because $\mL_1$ is the first factor in the train, and the tensor $\mB_{-\mM_1}$ has dimensions $(g_1\times m_2\times \dots\times m_{p}\times h_1\times\dots\times h_l\times g_0)$ because $\mL_l$ is the last factor in the train. We define the matrix $G_{ik}^{TR}$ involved in Equation~\eqref{eq:reg_TRkth} as
$$
G_{ik}^{TR}= \Big\{
\mX_i
\times_{1,2,\dots,l}^{p-k+2,p-k+3,\dots, p-k+l+1}
\mB_{-\mM_k}
\Big\}_{(\mathscr{R}\times \mathscr{C})},
$$
where $\mathscr{R} = \{1,p+1\}$, 
and $
\mathscr{C}$ is the ordered set of size $p-1$ that indexes the modes of size $(m_1,m_2,\dots,m_p)$, but skipping $m_k$ and in reverse order (so that the order of the modes matches with the columns of $\bmY_{i(k)}$ in Equation~\eqref{eq:reg_TRkth}).
Similarly, to define the matrices $H_{ik}^{TR}$ for $k=1,2,\dots,l$, let
\begin{equation}\label{supp:TRnoL}
\mB_{-\mL_k}
=
\mL_{k+1}\times^1\hdots \times^1 \mL_l
\times^1 
\mM_1\times^1\hdots \times^1 \mM_p
\times^1
\mL_1\times^1\hdots \times^1 \mL_{k-1},
\end{equation}
be the $(p+1)$-th order TT-formatted tensor of dimensions $(s_k\times h_{k+1}\times \dots\times h_{l}\times m_1\times \dots \times m_p\times h_1\times \dots\times h_{k-1}\times s_{k-1})$ that is generated similar to the TR tensor in Equation~\eqref{eq:reg_TR} , but is missing the TR factor $\mL_k$.
The tensor $\mB_{-\mL_k}$ in~\eqref{supp:TRnoL} is defined for the cases where $k=2,3,\dots,l-1$, however $\mB_{-\mL_1}$ and $\mB_{-\mL_l}$ are similar: the tensor $\mB_{-\mL_l}$ has dimensions $(s_l\times m_1\times\dots\times m_p\times h_1\times \dots\times h_{l-1}\times s_{l-1})$ because $\mM_1$ is the first factor in the train, and the tensor $\mB_{-\mL_1}$ has dimensions $(s_1\times h_2\times \dots\times h_{l}\times m_1\times\dots\times m_p\times s_0)$ because $\mM_p$ is the last factor in the train. We define the matrix $H_{ik}^{TR}$ involved in Equation~\eqref{eq:reg_TRvec} as
\begin{equation}\label{supp:TR_H}
H_{ik}^{TR}= \Big\{
\mX_i
\times_{1,\dots,k-1,k+1,\dots,l}^{l+p+2-k,\hdots,l+p,2,\hdots,l-k+1}
\mB_{-\mL^{(k)}}
\Big\}_{(\mathscr{R}\times \mathscr{C})},
\end{equation}
where $\mathscr{R} = \{2,1,p+3\}$, 
and $
\mathscr{C} = \{p+2,p+1,\dots,3\}$ is the ordered set of size $p-1$ that indexes the modes of size $(m_1,m_2,\dots,m_p)$. Note that we define the contraction in Equation~\eqref{supp:TR_H} as $(\times_{2,\dots,l}^{2,\hdots,l})$ when $k=1$ and as $(\times_{1,\dots,l-1}^{p+2,\hdots,l+p})$ when $k=l$.

\subsection{Rank determination and number of parameters}\label{supp:tpars}
In Section \ref{subsec:rank_choos} we propose choosing ranks using the BIC, defined as
$$
BIC= K\log(n) - 2\ell,
$$
where $n$ is the sample size and $\ell$ is the loglikelihood that we
simplified in Section~\ref{est_likelihood}. We decompose the total
number of parameters $K$ as $K=K_\Sigma+K_\mB$, where $K_\Sigma$ is
the number of parameters involved in the covariance matrix and $K_\mB$
is the total of parameters involved in the low-rank format of
$\mB$. The value of $K_\Sigma$  depends on the correlation structure
that is imposed on all the scale matrices
$\Sigma_1,\dots,\Sigma_p$. For the general unconstrained model of
\eqref{eq:model_noM}, we have 
$$
K_\Sigma=1+\sum_{k=1}^p \left(
\dfrac{m_k(m_k+1)}{2} -1
\right)
$$ 
to account for $\sigma^2$ as well as for the constraint that $\Sigma_k(1,1) = 1$ for all $k=1,2,\dots,p$.
To determine $K_\mB$ for the constrained cases, we need to consider each low-rank format separately:
\begin{itemize}
\item For $\mB_{TK}$ as in equation \eqref{eq:reg_Tuck} we have
$$
K_\mB = \prod_{j=1}^l c_k\prod_{j=1}^p d_k + 
\sum_{k=1}^l \left(
h_kc_k - \dfrac{c_k(c_k+1)}{2}
\right)+
\sum_{k=1}^p \left(
m_kd_k - \dfrac{d_k(d_k+1)}{2}
\right)
$$
to account for the parameters involved in $\mV$ as well as those involved in each $L_1,\dots,L_l,M_1,\dots,M_p$ and their orthogonality constraints.

\item For $\mB_{CP}$ as in equation \eqref{eq:reg_CP} we have
$$
K_\mB = R\times 
\left(
\sum_{k=1}^l 
h_k
+
\sum_{k=1}^p 
m_k
-l-p+1
\right)
$$
to account for each $L_1,\dots,L_l,M_1,\dots,M_p$ and their column-scale constraints.
\item For $\mB_{TR}$ as in equation \eqref{eq:reg_TR} we have
$$
K_\mB =  
\sum_{k=1}^l s_{k-1}h_ks_k
+
\sum_{k=1}^p g_{k-1}m_kg_k
-l-p+1
$$
to account for each $\mL_1,\dots,\mL_l,\mM_1,\dots,\mM_p$ and their scale indeterminacy.

\item For $\mB_{OP}$ as in equation \eqref{eq:reg_OP} we have
$$
K_\mB =  
\sum_{k=1}^p h_km_k
-p+1
$$
to account for each $M_1,\dots,M_p$ and their scale indeterminacy.
\end{itemize}

\subsection{Computational Complexity and Proof of Theorem~\ref{theo:comp}}\label{Ssubsec:computat}

\subsubsection{Computational complexity for the Tucker format}\label{Ssec:computTK}

\begin{proof}
The complexity is expressed in terms of four terms, which corresponds to lines 4, 6, 8 and 11 in Algorithm \ref{alg:1}, respectively. We now derive the complexity of each of these lines:

\textbf{Line 4:} The computational complexity of line 4 corresponds to implementing equation \eqref{eq:Tuckerest_M} by obtaining:
\begin{enumerate}[4a)]
\item 
$
[\![\bmY_i ; I_{m_1}, M_2'\Sigma_2^{-1},\dots, M_p'\Sigma_p^{-1}]\!]
$
for all $i=1,2,\dots,n$, which will be dominated later by 6a)
\item 
$
\mW_i=[\![\mX_i ; L_1', L_2',\dots, L_p']\!]
$
for all $i=1,2,\dots,n$, which has
$\mO(nhc_1)$ complexity.
\item $W'(WW')^{-1}W$ where $W=[\vecc(\mW_1) ,\cdots,\vecc(\mW_n)]$, with $\mO(n^2c)$ complexity.
\item  $Q_1$ of eq. (24), the product $\Sigma_1^{-1/2}Q_1$ and the truncated singular value decomposition of $\Sigma_1^{-1/2}Q_1$, which have $\mO(n^2m_1d_{-1}+nm_1^2d_{-1})$ complexity.
\end{enumerate}

\textbf{Line 6:} The computational complexity of line 6 corresponds to implementing equation \eqref{eq:Tuckerest_V} by obtaining:
\begin{enumerate}[6a)]
\item 
$\bmZ_i=[\![\bmY_i ;  M_1'\Sigma_2^{-1}, M_2'\Sigma_2^{-1},\dots, M_p'\Sigma_p^{-1}]\!]
$
for all $i=1,2,\dots,n$, which has
$\mO(nmd_1)$ complexity.
\item 
$W^{-}$, which is the generalized inverse of $W$. This computation is dominated by 4c).
\item $[\vecc(\bmZ_1) ,\cdots,\vecc(\bmZ_n)]W^{-'}$, which has $\mO(cnd)$ complexity and will be dominated by 8b).
\end{enumerate}

\textbf{Line 8:} The computational complexity of line 8 corresponds to implementing equation \eqref{eq:Tuckerest_L} by obtaining:
\begin{enumerate}[8a)]
\item $\mG_i=[\![\mX_i ; I_{h_1}, L_2',\dots, L_l']\!]$, which is dominated by 4b).
\item 
$
\mH_i=\mG_i \times_{2,\cdots,l}^{2,\cdots,l} \mV
$ for all $i=1,2,\dots,n$, which has
$\mO(ncdh_1)$ complexity. 
\item 
$
\mH_{i<2>}\mH_{i<2>}'
$ and $\mH_{i<2>}\vecc(\bmZ_i)$ for all $i=1,2,\dots,n$ to form matrices in equation \eqref{eq:Tuckerest_L}, with
$\mO(nc_1^2h_1^2d)$ complexity.
\item inverse and final product in equation \eqref{eq:Tuckerest_L}, with complexity $\mO(h_1^3c_1^3)$.
\end{enumerate}

\textbf{Line 11:} The computational complexity of line 11 corresponds to implementing equation \eqref{eq:Tuckerest_S} by obtaining:
\begin{enumerate}[$\eleven$a)]
\item inverse of $\Sigma_1,\Sigma_2,\dots,\Sigma_p$, with complexity $\mO(m_1^3)$.
\item $\langle \mW_i | \mV \rangle$ for all $i=1,2,\dots,n$ with complexity $\mO(ndc_1h_1)$ given $\mH_i$ of 8b), and is dominated by 8b).
\item $[\![\langle \mW_i | \mV \rangle;M_1,M_2,\dots,M_p]\!]$ for all $i=1,2,\dots,n$, which is dominated by 6a).
\item $\sum_{i=1}^n \bmY_{i(k)}\Sigma_{-k}^{-1}\bmY_{i(k)}'$ with complexity $\mO(nmm_1)$.
\end{enumerate}

\end{proof}

\subsubsection{Computational complexity for the CP format}\label{Ssec:computCP}

\begin{proof}
The complexity is expressed in terms of two terms, which corresponds to lines 4 and 7 in Algorithm \ref{alg:CP}, respectively. We now derive the complexity of each of these lines:

\textbf{Line 4:} The computational complexity of line 4 corresponds to implementing equation \eqref{eq:CPL}, and based on the simplifications detailed in Section \ref{supp:CP}. This involves obtaining:
\begin{enumerate}[4a)]
\item the vectors of super-diagonals $\bw_{si}$, which are computationally dominated by 7b).
\item $\sum_{i=1}^n\bw_{si}\bw_{ti}'$ for all $s,t=1,2,\dots,h_1$, with complexity $\mO(nh_1^2r^2).$
\item 
$\mM_q = \circ_{k=1}^{p}M_k[:,q]$
for all $q=1,2,\dots,r$, with complexity $\mO(mr)$. This is dominated by 4d).
\item $\by_{i}(q) =\langle\bmY_{i},\mM_q\rangle $ for all $i=1,2,\dots,n$, and $q=1,2,\dots,r$, with complexity $\mO(nrm)$. 
\item $\bw_{si}*\by_i'$ for all $s=1,2,\dots,h_1$ and $i=1,2,\dots,n$, with complexity $\mO(nrh_1).$ This is dominated by 4b).
\item $*_{k=1}^pM_k'\Sigma_k^{-1}M_k$ with computational complexity $\mO(rm_1^2+m_1^3)$
\item The final inversion and matrix-vector product of equation \eqref{eq:CPL} with complexity $\mO(h_1^3r^3)$.

\end{enumerate}

\textbf{Line 7:} The computational complexity of line 7 corresponds to implementing equation \eqref{eq:CPM}, which is simplified in Section \ref{supp:CP}. This involves obtaining:
\begin{enumerate}[7a)]
\item 
$\mL_q = \circ_{k=1}^{l}L_k[:,q]$
for all $q=1,2,\dots,r$, with complexity $\mO(hr)$. This is dominated by 7b).
\item $\bw_{i}(q) =\langle\mX_{i},\mL_q\rangle $ for all $i=1,2,\dots,n$ and $q=1,2,\dots,r$, with complexity $\mO(nrh)$. 
\item the vectors of super-diagonals $\by_{si}$, which are computationally dominated by 4d).
\item The inverse and matrix product of equation \eqref{eq:CPM} has complexity $\mO(m_1r^2+r^3)$, and $\mO(r^3)$ is dominated by 4g). 
\item $\sum_{i=1}^n \bmY_{i(k)}\Sigma_{-k}^{-1}\bmY_{i(k)}'$ with complexity $\mO(nmm_1)$.
\end{enumerate}
\end{proof}

\subsubsection{Computational complexity for the OP format}\label{Ssec:computOP}
Unlike the CP and Tucker cases, the complexity of the OP format depends on the order of the tensor response and covariate modes. Therefore, finding the exact complexity will depend on many special cases unless a restriction is imposed. Here we will assume that for all $k=1,2,\dots,p$, $m_k=h_k$. This means that all the matrix factors $M_k$ are square. This way, the order of the modes will not matter anymore and we can assume without generality that $m_1$ is the size of the largest mode.
\begin{thm}
The computational complexity of implementing the OP format is 
$$O\big(
nmm_1 + m_1^3
\big).$$
\end{thm}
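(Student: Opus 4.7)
My plan is to mirror the complexity analyses of Sections \ref{Ssec:computTK} and \ref{Ssec:computCP}, adapting them to the OP block-relaxation algorithm described after \eqref{Tucker}. That algorithm is identical to Algorithm \ref{alg:CP} but with $G_{ik}^{CP}$ replaced by $G_{ik}^{OP}=\mX_{i(k)}(\bigotimes_{j=p,j\ne k}^1 M_j')$, so one iteration consists of $p$ MVMLR updates for $(M_k,\Sigma_k)$, a $\sigma^2$ update, and column normalizations. Under the working assumption $m_k=h_k$, every $M_k$ is a square $m_k\times m_k$ matrix and every $\mX_i$ and $\bmY_i$ has exactly $m$ entries, so there is no need to separately track the response and covariate sides of the contraction.

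First, I would decompose one iteration into four contributions: (i) forming $G_{ik}^{OP}$ for all $i,k$; (ii) assembling the cross products $\sum_i {\bmY_i}_{(k)}\Sigma_{-k}^{-1}{G_{ik}^{OP}}'$ and $\sum_i G_{ik}^{OP}\Sigma_{-k}^{-1}{G_{ik}^{OP}}'$ required by the OP analog of \eqref{eq:CPM}; (iii) the matrix inversions appearing in that step together with the inversion of each $\Sigma_k$; and (iv) the residual-sum-of-squares $S_k$ needed for $\hat\Sigma_k$ and $\hat\sigma^2$, exactly as in step $11$d) of Section \ref{Ssec:computTK}. Steps (iii) and (iv) immediately contribute $O(m_1^3)$ and $O(nmm_1)$ by the same arguments already used for the TK and CP cases.

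The crux of the argument lies in step (i). Explicitly forming $\bigotimes_{j\ne k}M_j'$ would be prohibitive, so instead I would use Theorem \ref{thm:mat_outer_product}\ref{thm:mat_outer_product:y} to rewrite $G_{ik}^{OP}=[\![\mX_i;M_1',\ldots,M_{k-1}',I_{m_k},M_{k+1}',\ldots,M_p']\!]_{(k)}$ and compute the right-hand side by successive $j$-mode products $\mX_i\times_j M_j'$ for $j\ne k$. Each such mode product contracts one mode of a tensor with at most $m$ entries against an $m_j\times m_j$ matrix and costs $O(mm_j)$; summed over $j\ne k$, with $p$ fixed, this gives $O(mm_1)$ per pair $(i,k)$, and thus $O(nmm_1)$ for step (i). For step (ii) I would apply $\Sigma_{-k}^{-1}$ through its Kronecker factors in the same TK style, which again stays within $O(nmm_1)$. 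Combining all four contributions yields the claimed bound $O(nmm_1+m_1^3)$.

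The only real obstacle is bookkeeping: I have to be sure that neither the formation of $G_{ik}^{OP}$ nor the two Kronecker-structured multiplications with $\Sigma_{-k}^{-1}$ secretly incur an extra factor of $m/m_1$. This is avoided by always treating the relevant Kronecker product as a sequence of $p$ mode contractions, each of which keeps the intermediate tensors at size $O(m)$ and so avoids instantiating any object of size $O(m\cdot m/m_1)$. Once this is checked, all remaining work amounts to summing a fixed number of contributions of the two forms above, which is the reason the final answer is so compact.
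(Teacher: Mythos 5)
Your proposal is correct and follows essentially the same route as the paper's proof: under the working assumption $m_k=h_k$, the cost is decomposed into forming the $G_{ik}^{OP}$, assembling the cross products and the residual sum of squares (each $O(nmm_1)$), and the matrix inversions ($O(m_1^3)$). The only difference is that you spell out why forming $G_{ik}^{OP}$ stays within $O(nmm_1)$ — via sequential mode contractions rather than instantiating the Kronecker product — a detail the paper asserts without elaboration.
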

\begin{proof}
The computational complexity of line 4 corresponds to implementing equation \eqref{eq:CPM}, but where $G_{ik}^{CP}$ is replaced with the $G_{ik}^{OP}$ of Section \ref{sec:estOP}.

\begin{enumerate}[4a)]
\item Computing $G_{ik}^{OP}$ for each $k=1,2,\dots,p$ and for all $i=1,2,\dots,n$ with $\mO(nmm_1)$ complexity.
\item $\sum_{i=1}^n {\bmY_i}_{(k)} \Sigma_{-1}^{-1}
 {G_{ik}^{OP}}'$, $\sum_{i=1}^n G_{ik}^{OP}
\Sigma_{-k}^{-1}   {G_{ik}^{OP}}'$ and $\sum_{i=1}^n \bmY_{i(k)}\Sigma_{-k}^{-1}\bmY_{i(k)}'$ all have $\mO(nmm_1)$ complexity.
\item The inverse and matrix product of equation \eqref{eq:CPM} has complexity $\mO(m_1^3)$.
\end{enumerate}

\end{proof}

\subsubsection{Computational complexity for the TR format}\label{Ssec:computTR}
\begin{proof}
The complexity is expressed in terms of two terms, which corresponds to lines 4 and 7 in Algorithm \ref{alg:CP}, respectively. We now derive the complexity of each of these lines:

\textbf{Line 4:} The computational complexity of line 4 corresponds to implementing equation \eqref{eq:CPM}, which is implemented in equation \eqref{eq:CPL} when the $H_{ik}^{TR}$ of Section \ref{supp:TR} is used instead of $H_{ik}^{CP}$. This involves obtaining:
\begin{enumerate}[4a)]
\item $\mM=\mM_1\times^1\mM_2\times\dots\times^1\mM_p$ with complexity $\mO(mg_{1}g_0g_p)$.
\item $\mL_{-1} = \mL_2\times^1\mL_2\times\dots\times^1\mL_{l}$, which is dominated by 7a).
\item $\mD_i=\mL_{-1}\times_{2,\dots,l}^{2,\dots,l}\mX_i$ for all $i=1,2,\dots,n$ with complexity $\mO(hns_1s_{l})$, dominated by $\mO(hnss_{l})$.
\item $\mM \times^1 \mD_i$ for all $i=1,2,\dots,n$ with complexity $\mO(nmh_1s_ls_0s_1)$. This is dominated by 4e).
\item The final inversion and matrix-vector product of equation \eqref{eq:CPL} with complexity $\mO(nmh_1^2s_0^2s_1^2+h_1^3s_0^3s_1^3)$.
\end{enumerate}

\textbf{Line 7:} The computational complexity of line 4 corresponds to solving equation (37), which is implemented in equation \eqref{eq:CPL} when the $G_{ik}^{TR}$ of Section \ref{supp:TR} is used instead of $G_{ik}^{CP}$. This involves obtaining:
\begin{enumerate}[7a)]
\item $\mL=\mL_1\times^1\mL_2\times\dots\times^1\mL_l$ with complexity $\mO(hs_{1}s_0s_l)$.
\item $H_i=\mL\times_{2,\dots,h+1}^{1,\dots,h}\mX_i$ for all $i=1,2,\dots,n$ with complexity $\mO(hns_0s_l)$, dominated by 4c).
\item $\mM_{-1} = \mM_2\times^1\mM_3\times\dots\times^1\mM_{p}$, which is dominated by 4a).
\item $\mM_{-1}\times^1H_i$ for all $i=1,2,\dots,n$ with complexity $\mO(nm_{-1}g_0g_{1}g_p)$. This is dominated by 7e).
\item $\sum_{i=1}^n {\bmY_i}_{(k)} \Sigma_{-k}^{-1}
 {G_{ik}^{TR}}'$ and $\sum_{i=1}^n G_{ik}^{TR}
\Sigma_{-k}^{-1}   {G_{ik}^{TR}}'$, with complexity $O\left(nm_{-1}g_0g_1(g_0g_1+m_1)+m_1^3\right)$. The first term in the sum is dominated by 7g).
\item The final inversion and matrix-vector product of equation \eqref{eq:CPM} with complexity $\mO(g_0^3g_1^3)$.
\item $\sum_{i=1}^n \bmY_{i(k)}\Sigma_{-k}^{-1}\bmY_{i(k)}'$ with complexity $\mO(nmm_1)$.
\end{enumerate}

\end{proof}

\subsection{Proof of Theorem \ref{thm:inference_Tucker}}\label{proof:inference_Tucker}
\begin{proof}
Based on Lemma~\ref{lemma1}\ref{lemma1:w} and Equation~\eqref{eq:Tuckerest_V}, we obtain
\begin{equation}\label{eq:V}
\vecc(\mhV) = \big[
M'\Sigma^{-1}\otimes W^{-'}
\big]\vecc(Y'),
\end{equation}
where 
\begin{equation}\label{eq:Yt}
\vecc(Y')\sim \N_{n\times m}\big( (M\otimes W')\vecc(\mV),\sigma^2\Sigma\otimes \I_n\big).
\end{equation}
Furthermore, because $W^- = X'(XX')^{-1}L(L'L)^{-1}$ and both $W^{-'}W'$ and $M'\Sigma^{-1}M$ are identity matrices, we obtain based on \eqref{eq:V} and \eqref{eq:Yt} that
\begin{equation}\label{eq:step1}
\vecc(\mhV)\sim \N_{q} 
\Big(
\vecc(\mV)
,
\sigma^2 I_m \otimes \big((L'L)^{-1}L'(XX')^{-1}L(L'L)^{-1}\big)
\Big),
\end{equation}
where $q$ is the product of the Tucker rank. Now, because $\hM_k$ and $\hL_k$ are MLEs and they are identifiable given the rest, as $n\rightarrow \infty$ $\hM_k\overset{p}{\rightarrow} M_k$ and $\hL_k\overset{p}{\rightarrow} L_k$, and therefore
\begin{equation}\label{eq:step2}
(\bigotimes_{k=p}^1 \hM_k)\otimes (\bigotimes_{k=l}^1 \hL_k)
\overset{p}{\rightarrow}
M\otimes L.
\end{equation}
The rest of the proof follows by left-multiplying the $\vecc(\mhV)$ of \eqref{eq:step1} by the left hand side of \eqref{eq:step2} and applying Slutsky's Theorem.
\end{proof}
\subsection{Proof of Theorem \ref{thm:inference_others}}\label{proof:inference_others}
Before proving Theorem \ref{thm:inference_others} we state some of the matrices involved in it. First, let 
$$
\boldsymbol{\hat{\theta}}_{CP}  = [\vecc(\hL_1)'\vecc(\hL_2)'\ldots \vecc(\hL_l)'\vecc(\hM_1)'\vecc(\hM_2)'\ldots \vecc(\hM_p)']'.
$$
Then the Jacobian matrix 
$
J_{CP} =\frac{\partial}{\partial \boldsymbol{\hat{\theta}}_{CP}} \vecc(\mhB_{CP})
$
is a block matrix with blocks as per Lemma~\ref{lemma:ders}\ref{lemma:ders:z}. Furthermore, we have 
\begin{equation}\label{eq:supp:R_CP}
R_{CP} = [A_1' A_2' \ldots  A_l' B_1' B_2' \ldots   B_p']',
\end{equation}
where for all $k = 1,2,\ldots,l$,
\begin{equation}\label{eq:supp:A_CP}
A_k = [S_{1k}^{-1}H_{1k}^{CP} \Sigma^{-1}\ldots S_{1k}^{-1}H_{nk}^{CP} \Sigma^{-1}]
,\quad
S_{1k} = \sum\limits_{i=1}^n  H_{ik}^{CP} \Sigma^{-1}  {H_{ik}^{CP}}'
\end{equation}
and for all $k = 1,2,\ldots,p$,
\begin{equation}\label{eq:supp:B_CP}
B_k =
\left(
[S_{2k}^{-1}{G_{1k}^{CP}}\Sigma_{-k}^{-1} \ldots S_{2k}^{-1}{G_{nk}^{CP}}\Sigma_{-k}^{-1}] \otimes \I_{m_k}
\right) (I_n\otimes K_{(k)}),
\quad
S_{2k} = \sum\limits_{i=1}^n G_{ik}^{CP}
\Sigma_{-k}^{-1}   {G_{ik}^{CP}}',
\end{equation}
where $K_{(k)}$ is given in Lemma \ref{lemma1}.\ref{lemma1:u} and the matrices $S_{1k}$ and $S_{2k}$ are simplified in Section \ref{supp:CP}. Now we can prove Theorem \ref{thm:inference_others}.

\begin{proof}
From Equations \eqref{eq:CPM} and \eqref{eq:CPL}, we obtain that $\vecc(\hat{L}_k) = A_k \boldsymbol{y}$ and $\vecc(\hat{M}_k) = B_k \boldsymbol{y}$, and thus
\begin{equation}\label{eq:thet_CP}
\boldsymbol{\hat{\theta}}_{CP}
=
R_{CP}
\boldsymbol{y}.
\end{equation}
Further, from our normality and independence assumptions,  
\begin{equation}\label{eq:asymp_c_1}
\boldsymbol{y} = 
\begin{bmatrix}
 \vecc(\bmY_1) \\ \hdots \\   \vecc(\bmY_n)
\end{bmatrix}
\sim
\N_{m\times n}\left(
\begin{bmatrix}
 \vecc\langle\mX_1|\mB\rangle \\ \hdots \\   \vecc\langle\mX_n|\mB\rangle
\end{bmatrix}, \I_n\otimes \Sigma \right),
\end{equation}
and from \eqref{eq:thet_CP} and \eqref{eq:asymp_c_1},
$$
\boldsymbol{\hat{\theta}}_{CP}
\sim \N_{R\times( \sum_{i=1}^pm_i+\sum_{i=1}^lh_i)}
\Big(
\boldsymbol{\theta}_{CP}
,
R_{CP}( \I_n\otimes \Sigma)R_{CP}'
\Big).
$$
Now consider the transformation $g:\mathbb{R}^{R\times( \sum_{i=1}^pm_i+\sum_{i=1}^lh_i)}\rightarrow\mathbb{R}^{m\times h}$ such that $g( \boldsymbol{\hat{\theta}}_{CP})=\vecc(\mhB_{CP})$. The Jacobian matrix of this transformation is $J_{CP}$. Using the multivariate extension of Taylor's Theorem as in Theorem 5.2.3 in \citet{Lehmann99}, we obtain 
$$
g( \boldsymbol{\hat{\theta}}_{CP})
\overset{d}{\rightarrow}
\N_{m\times h}
\Big(
g(\boldsymbol{\theta}_{CP})
,
J_{CP}R_{CP}( \I_n\otimes \Sigma)R_{CP}'J_{CP}'
\Big),
$$
where $g( \boldsymbol{\hat{\theta}}_{CP}) = \vecc(\mhB_{CP}) $ and $g(\boldsymbol{\theta}_{CP})  = \vecc(\mB_{CP})$.
\end{proof}
\subsection{The OP and TR cases}\label{sec:inference_TRnOP:actual}
We now state and prove some results on inference for the OP and TR
format:
\begin{thm}\label{thm:inference_TRnOP:actual} \hfill
\begin{enumerate}[label=(\alph*)]
\item Consider Equation \eqref{eq:model_noM} with $\mB = \mB_{TR}$ as in \eqref{eq:reg_TR} with its analogue
$
\mhB_{TR}
$ and 
$$
\boldsymbol{\hat{\theta}}_{TR} = [\vecc(\hat{\mL}_1)'\vecc(\hat{\mL}_2)'\hdots \vecc(\hat{\mL}_l)'\vecc(\hat{\mM}_{1(2)})'\vecc(\hat{\mM}_{2(2)})'\hdots \vecc(\hat{\mM}_{3(2)})']'.
$$
Then as $n\rightarrow \infty$,
$$
\vecc(\mhB_{TR})
\overset{d}{\rightarrow} 
\N_{m\times h} \Big(
\vecc(\mB_{TR})
,
J_{TR}R_{TR}( \I_n\otimes \Sigma)R_{TR}'J_{TR}'
\Big),
$$
where $J_{TR} =\dfrac{\partial \vecc(\mhB_{TR})}{\partial \boldsymbol{\hat{\theta}}_{TR}}$ is a block matrix with blocks given in Lemma \ref{lemma:ders}\ref{lemma:ders:x} and $R_{TR}$ is equal to $R_{CP}$ in  equation \eqref{eq:supp:R_CP} after replacing $(G_{ik}^{CP},H_{ik}^{CP})$ with $(G_{ik}^{TR},H_{ik}^{TR})$.
\item Now let Equation \eqref{eq:model_noM} have $\mB = \circ [\![ M_1,\sdots,M_p  ]\!]$with its analogue
$
\mhB_{OP}
$ and 
$$
\boldsymbol{\hat{\theta}}_{OP} = [\vecc(\hat{M}_1)'\vecc(\hat{M}_2)'\hdots \vecc(\hat{M}_p)']',
$$
Then as $n\rightarrow \infty$,
$$
\vecc(\mhB_{OP})
\overset{d}{\rightarrow} 
\N_{m\times h} \Big(
\vecc(\mB_{OP})
,
J_{OP}R_{OP}( \I_n\otimes \Sigma)R_{OP}'J_{OP}'
\Big),
$$
where $J_{OP} =\dfrac{\partial \vecc(\mhB_{OP})}{\partial \boldsymbol{\hat{\theta}}_{OP}}$ is a block matrix with blocks given in Lemma \ref{lemma:ders}\ref{lemma:ders:y} and $R_{OP} = [ B_1' B_2' \hdots   B_p'
]'$ , where $B_k$ is given in equation \eqref{eq:supp:B_CP} for all $k=1,\dots,p$ after replacing $G_{ik}^{CP}$ with $G_{ik}^{OP}$.
\end{enumerate}
\end{thm}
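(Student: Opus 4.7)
The plan is to mirror the proof of Theorem~\ref{thm:inference_others} for the CP case, since both claims have exactly the same structure. In each case one needs two ingredients: (i)~an exact linear representation $\boldsymbol{\hat\theta}_{\bullet}=R_{\bullet}\boldsymbol{y}$ of the stacked factor parameters in terms of the stacked response $\boldsymbol{y}=[\vecc(\bmY_1)',\ldots,\vecc(\bmY_n)']'$, evaluated at the true values of the nuisance parameters; and (ii)~a smooth reassembly map $g$ with Jacobian $J_\bullet$ that produces $\vecc(\mhB_\bullet)$ from $\boldsymbol{\hat\theta}_\bullet$. Then the multivariate Delta method (Theorem~5.2.3 in \citet{Lehmann99}), together with the multivariate normality of $\boldsymbol{y}$ from \eqref{eq:asymp_c_1}, yields the desired asymptotic distribution.

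For the TR case, I would first note that the two families of block updates~\eqref{eq:reg_TRkth} and \eqref{eq:reg_TRvec} are formally identical to the CP updates~\eqref{eq:reg_CPkth} and \eqref{eq:reg_CPvec} after the substitution $(H_{ik}^{CP},G_{ik}^{CP})\mapsto(H_{ik}^{TR},G_{ik}^{TR})$; consequently the closed-form MLEs for $\vecc(\hat{\mL}_k)$ and $\vecc(\hat{\mM}_{k(2)})$ are linear in $\boldsymbol{y}$ with coefficient matrices obtained from~\eqref{eq:supp:A_CP}--\eqref{eq:supp:B_CP} via the same substitution. Stacking them gives $\boldsymbol{\hat\theta}_{TR}=R_{TR}\boldsymbol{y}$, so $\boldsymbol{\hat\theta}_{TR}$ inherits normality from~\eqref{eq:asymp_c_1} with covariance $R_{TR}(I_n\otimes\Sigma)R_{TR}'$. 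Then $\vecc(\mhB_{TR})=g(\boldsymbol{\hat\theta}_{TR})$ for the smooth reassembly map $g$ induced by~\eqref{eq:reg_TR}, whose Jacobian $J_{TR}$ at the true parameter has the block form recorded in Lemma~\ref{lemma:ders}\ref{lemma:ders:x}; applying the Delta method delivers the stated limit.

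The OP case is strictly simpler because there are no $\mL_k$ factors: the MLE update for each $M_k$ comes directly from Section~\ref{sec:estOP} as $\vecc(\hat{M}_k)=B_k\boldsymbol{y}$ with $G_{ik}^{OP}$ in place of $G_{ik}^{CP}$ in~\eqref{eq:supp:B_CP}. Stacking yields $\boldsymbol{\hat\theta}_{OP}=R_{OP}\boldsymbol{y}$, and the reassembly $g:\boldsymbol{\theta}_{OP}\mapsto \vecc(\circ[\![M_1,\ldots,M_p]\!])$ is a smooth multilinear map whose Jacobian $J_{OP}$ is block-structured with blocks coming from Lemma~\ref{lemma:ders}\ref{lemma:ders:y} (these follow cleanly from Theorem~\ref{thm:mat_outer_product}\ref{thm:mat_outer_product:z} since $\mB_{OP,<p>}=\bigotimes_{q=p}^{1}M_q'$). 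A final appeal to the Delta method closes the argument.

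The main obstacle is verifying that the linear representations $\boldsymbol{\hat\theta}_\bullet=R_\bullet\boldsymbol{y}$ hold at the true values of the remaining nuisance parameters and then invoking consistency (as in the CP proof), so that the Delta-method expansion is justified: one needs $\hat{\mL}_k\overset{p}{\to}\mL_k$, $\hat{\mM}_k\overset{p}{\to}\mM_k$ (respectively $\hat{M}_k\overset{p}{\to}M_k$), together with $\hSigma_k\overset{p}{\to}\Sigma_k$, to replace the plug-in matrices in $R_\bullet$ by their true-value counterparts, incurring only $o_p(1)$ error via Slutsky's theorem exactly as in the passage from~\eqref{eq:step1} to~\eqref{eq:step2}. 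The algebraic bookkeeping for $J_{TR}$ is the heaviest piece, since the TR reassembly~\eqref{eq:reg_TR} is a cyclic contraction of third-order cores; however, that computation is encapsulated in Lemma~\ref{lemma:ders}\ref{lemma:ders:x}, so within this proof it is used as a black box.
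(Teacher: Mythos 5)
Your proposal is correct and follows essentially the same route as the paper, which itself disposes of this theorem by noting that the proof mirrors that of Theorem~\ref{thm:inference_others}: establish the exact linear representation $\boldsymbol{\hat\theta}_\bullet = R_\bullet\boldsymbol{y}$, invoke the normality of $\boldsymbol{y}$ from \eqref{eq:asymp_c_1}, and apply the Delta method with the Jacobian blocks of Lemma~\ref{lemma:ders}. Your added remarks on consistency of the plug-in nuisance parameters and Slutsky's theorem only make explicit what the paper leaves implicit in the CP argument.
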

\begin{proof} \hfill
The proof of both parts is similar to that of Theorem~\ref{thm:inference_others} and thus omitted.
\end{proof}

\subsection{Jacobian matrices}
The Jacobians needed in Theorems~\ref{thm:inference_others} and \ref{thm:inference_TRnOP:actual} are block matrices, where each block is given in the following lemma
\begin{lemma}\label{lemma:ders}\hfill
\begin{enumerate}[label=(\alph*)]
\item \label{lemma:ders:z} Let 
$
\mB_{CP} = [\![M_1,M_2,\hdots,M_p]\!]
\in  \mathbb{R}^{m_1\times m_2\times \hdots\times m_p}.
$
Then for $k=1,2,\hdots,p$,
$$
\dfrac{\partial \vecc(\mB_{CP})}{\partial \vecc(M_k)}
=K_{(k)}'(T_k^{CP}\otimes I_{m_k})
, \text{ where }
T_k^{CP} = \bigodot\limits_{q=p,q\neq k}^1 M_q.
$$
\item \label{lemma:ders:x} Let 
$
\mB_{TR} = \tr(\mM_1\times^1\mM_2\times^1\hdots\times^1\mM_p)
\in  \mathbb{R}^{m_1\times m_2\times\hdots\times m_p}.
$
Then for $k=1,2,\hdots,p$,
$$
\dfrac{\partial \vecc(\mB_{TR})}{\partial \vecc(\mM_{k(2)})}
=K_{(k)}'(T_k^{TR'}\otimes I_{m_k}),
$$
where
$$
T_k^{TR} = \tr(\mM_{k+1}\times^1\hdots \times^1\mM_p\times^1\mM_1\times^1\hdots\times^1\hdots\times^1\mM_{k-1})_{(\{1,p+1\}\times\{2,3,\hdots,p\} )}
.$$
\item \label{lemma:ders:y} Let $\mB_{OP} = \circ [\![ M_1,M_2 \hdots,M_p  ]\!] \in  \mathbb{R}^{m_1\times  \hdots\times m_p\times n_1\times  \hdots\times n_p}$. Then for $k=1,2,\hdots,p$,
$$
\dfrac{\partial \vecc(\mB_{OP})}{\partial \vecc(M_k')}
=K_{(p+k)}'K_{(k+1)}'
\Big((\vecc \circ [\![ M_1, \hdots,M_{k-1},M_{k+1},\hdots,M_p  ]\!])\otimes I_{m_k n_k} \Big).
$$
\end{enumerate}
\end{lemma}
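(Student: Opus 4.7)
The plan is to prove all three parts via a uniform recipe: (i) matricize $\mB$ so that the factor of interest appears as a linear left-factor, (ii) vectorize using $\vecc(ABC)=(C'\otimes A)\vecc(B)$, and (iii) apply Lemma~\ref{lemma1}\ref{lemma1:u} (composed once or twice as needed) to convert the vectorization of the matricization back to $\vecc(\mB)$. The Jacobian block is then read off as the matrix sandwiched between the commutation-matrix prefactor and the $\vecc$ of the factor.

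For part~(a), I would invoke the classical CP matricization identity $(\mB_{CP})_{(k)}=M_k(T_k^{CP})'$, with $T_k^{CP}=\bigodot_{q=p,q\neq k}^1 M_q$, which follows from \eqref{CPform} by a short direct calculation under the paper's reverse-lexicographic ordering (and is essentially Proposition~3.2 of \cite{kolda06}). Then $\vecc((\mB_{CP})_{(k)})=(T_k^{CP}\otimes I_{m_k})\vecc(M_k)$, and Lemma~\ref{lemma1}\ref{lemma1:u} supplies the prefactor $K_{(k)}'$, yielding the claimed derivative.

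For part~(b), the analogous identity $(\mB_{TR})_{(k)}=\mM_{k(2)}\,T_k^{TR}$ must be derived. Using the cyclic property of the tensor trace in \eqref{TRform}, I would first rewrite $\mB_{TR}(\bi)=\tr\!\bigl(\mM_k(:,i_k,:)\,\mM_{k+1}(:,i_{k+1},:)\cdots\mM_{k-1}(:,i_{k-1},:)\bigr)$, then reorganize the subchain $\mM_{k+1}\times^1\cdots\times^1\mM_{k-1}$ as the two-bond matricization $T_k^{TR}$ whose rows are indexed by the boundary modes of sizes $g_k,g_{k-1}$ and whose columns are indexed by the remaining $m$-modes in reverse-lex order. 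Matrix multiplication against $\mM_{k(2)}$ then recovers $(\mB_{TR})_{(k)}$, after which the same vec/commutation steps as in part~(a) finish the argument. The main bookkeeping hazard is to check that the row ordering of $T_k^{TR}$ dictated by the partition $\{1,p+1\}\times\{2,\ldots,p\}$ matches the ordering of the column-index block $g_{k-1}g_k$ of $\mM_{k(2)}$; this pins down the specific placement of the boundary indices inside $\mathscr{S}$.

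For part~(c), Theorem~\ref{thm:mat_outer_product}\ref{thm:mat_outer_product:x} already hands us the rank-one matricization
\[
\circ[\![M_1,\ldots,M_p]\!]_{(\mathscr{S}\times\mathscr{S}^{\mathsf c})}=(\vecc M_k')\,(\vecc\circ[\![M_1,\ldots,M_{k-1},M_{k+1},\ldots,M_p]\!])'
\]
with $\mathscr{S}=\{k,p+k\}$. Applying $\vecc(uv')=v\otimes u$ to this rank-one expression immediately produces the derivative with respect to $\vecc(M_k')$ as $(\vecc\circ[\![M_1,\ldots,M_{k-1},M_{k+1},\ldots,M_p]\!])\otimes I_{m_kn_k}$. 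What remains---and is the main obstacle---is translating the vectorization of the $(\mathscr{S}\times\mathscr{S}^{\mathsf c})$ matricization back to $\vecc(\mB_{OP})$: because $\mathscr{S}$ selects two non-adjacent modes of a $2p$-ordered tensor, a single application of Lemma~\ref{lemma1}\ref{lemma1:u} does not suffice. The plan is to chain two commutation-matrix steps: first bring mode $p+k$ into place via $K_{(p+k)}'$, and then bring mode $k$ into place via $K_{(k+1)}'$ (the second index is $k+1$ rather than $k$ because mode $p+k$ has already been migrated forward by the first step). Verifying that this two-step migration is precisely the composition mapping the outer-product matricization to $\vecc(\mB_{OP})$, as opposed to some other ordering, is the only delicate point; the rest is routine multilinear algebra.
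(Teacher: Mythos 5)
Your proposal follows essentially the same route as the paper's proof: part (a) via the matricization identity $\mB_{CP(k)}=M_k T_k^{CP'}$ followed by $\vecc$ and Lemma~\ref{lemma1}\ref{lemma1:u}, part (b) via the analogous identity $\mB_{TR(k)}=\mM_{k(2)}T_k^{TR}$, and part (c) by vectorizing the rank-one matricization of Theorem~\ref{thm:mat_outer_product}\ref{thm:mat_outer_product:x} and composing two commutation-matrix reshapings. The only caution is in (c): your narrated order of the two migrations (apply $K_{(p+k)}'$ first, then $K_{(k+1)}'$) would compose as $K_{(k+1)}'K_{(p+k)}'$ rather than the stated $K_{(p+k)}'K_{(k+1)}'$, and since commutation matrices do not commute you must resolve this when carrying out the verification you yourself flag as the delicate step; the paper's proof asserts the composition $\vecc(\mB_{OP(\mathscr{R}\times\mathscr{C})})=K_{(k+1)}K_{(p+k)}\vecc(\mB_{OP})$ without further detail.
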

\begin{proof}\hfill
\begin{enumerate}[label=(\alph*)]
\item First, because $\mB_{CP(k)} = M_kT_k^{CP'}$, we have
$$
\dfrac{\partial \vecc(\mB_{CP(k)})}{\partial \vecc(M_k)}
=T_k^{CP}\otimes I_{m_k}.
$$
The result follows from Lemma \ref{lemma1}\ref{lemma1:u}.
\item The derivation of this results follows a similar path as in part
  (a) because $\mB_{TR(k)} = M_{k(2)} T_k^{TR} $. 
\item Let $\mathscr{S}$ be as in Theorem \ref{thm:mat_outer_product}\ref{thm:mat_outer_product:x}. Then the vectorization of \eqref{eq:mat_outer_product} results in
$$
\dfrac{\partial \vecc(\mB_{OP(\mathscr{S}\times \mathscr{S}^\mathsf{c})})}{\partial \vecc(M_k')}
=\Big((\vecc \circ [\![ M_1, \hdots,M_{k-1},M_{k+1},\hdots,M_p  ]\!])\otimes I_{m_k n_k} \Big).
$$
The result follows from 
$
\vecc(\mB_{OP(\mathscr{R}\times \mathscr{C})}) = K_{(k+1)}K_{(p+k)}\vecc(\mB_{OP}) 
$
, where $K_{(k)}$ is orthogonal as in Lemma \ref{lemma1}\ref{lemma1:u}.
\end{enumerate}
\end{proof}

\subsection{Proof of Theorem \ref{thm:inference_with_intercept}}
\begin{proof}\label{proof:inference_with_intercept}
Based on Section \ref{sec:est_intercept}, the estimation of $\mB$ when the intercept is present can be perfomed by centering the covariates and responses. Let
$
C_n = \I_n - \frac{1}{n} \boldsymbol{1}_n \boldsymbol{1}_n'.
$
For the Tucker case in Theorem \ref{thm:inference_Tucker}, let $Y_c = [(\vecc \bmY_1^c)(\vecc \bmY_2^c)\hdots (\vecc \bmY_n^c)]$, where $\bmY_i^c = \bmY_i - \bar{\bmY}$. Then 
$
\vecc(Y_c') = (\I_m\otimes C_n)\vecc(Y'), 
$
and similar to Equation~\eqref{eq:V}, 
$$
\vecc(\mhV) = 
\big[M'\Sigma^{-1}\otimes W^{+'}
\big](\I_m\otimes C_n)\vecc(Y')
= 
\big[M'\Sigma^{-1}\otimes W^{+'}
\big]\vecc(Y'),
$$
where the last equality follows because $X\boldsymbol{1}_n =\bm{0}$
when the covariates are centered, and therefore
$W^{+'}\boldsymbol{1}_n =\bm{0}$ also. The remaining results follow
using the same steps as in the proof of Theorem
\ref{thm:inference_Tucker}, but with centered covariates. For the CP
case as in Theorem \ref{thm:inference_others}, if $\boldsymbol{y}_c =
[(\vecc \bmY_1^c)'(\vecc \bmY_2^c)'\hdots (\vecc \bmY_n^c)']'$ are the
centered responses, then $\boldsymbol{y}_c = (C_n\otimes
I_m)\boldsymbol{y}$, and similar to Equation~\eqref{eq:thet_CP}, 

$$
\boldsymbol{\hat{\theta}}_{CP}
=
R_{CP}(C_n\otimes I_m)\boldsymbol{y}
=
R_{CP}\boldsymbol{y},
$$
where the last step follows because $R_{CP}(\boldsymbol{1}_n\otimes \I_m)  =\bm{0}$. The rest results by following the same steps as in the proof of Theorem~\ref{thm:inference_others}, but with centered covariates. The proof of the TR and OP cases are similar and thus omitted.
\end{proof}

\subsection{Distribution of the scale components}\label{inference:covariance}
The final inference result is on the distribution of the
estimated scale matrices $\hSigma_1,\hSigma_2,\ldots,\hSigma_p$. We obtain the Fisher information matrix and show that it is singular.
\begin{thm}\label{thm:inference_covariance}
Let
$\boldsymbol{\theta}_{\Sigma} = [ (\vech \Sigma_1)',\vech \Sigma_2)' ,\ldots,(\vech \Sigma_p)']',$
where $\Sigma_1,\Sigma_2,\ldots,\Sigma_p$ are scale matrices with no
restrictions on their proportionality, and $\vech$ is the
\textit{half-vectorization} mapping, as discussed further in
Section \ref{app:multilinear_statistics}. Then 
\begin{enumerate}[label=(\alph*)]

\item The joint linear components
$
\begin{bmatrix}
\vecc(\hat{\Upsilon})  \\
\vecc(\mhB)
\end{bmatrix}$ and $\boldsymbol{\widehat{\theta}}_{\Sigma}$ are asymptotically independent regardless of the format of $\mhB$.

\item The Fisher information with respect to $\boldsymbol{\theta}_{\Sigma}$, denoted as $\mathbb{I}_{\Sigma}$, is a block matrix with $kth$ block diagonal matrices ($k = 1,2,\ldots,p$) given by
\begin{equation} 
\mathbb{E}\Big(-\dfrac{\partial^2 \ell(\Sigma_k) 
}{\partial(\vech \Sigma_k)\partial(\vech \Sigma_k)'}\Big)
=
\dfrac{nm_{-k}}{2} D_{m_k}' (\Sigma_k^{-1} \otimes \Sigma_k^{-1})D_{m_k}
\end{equation}
and $(k,l)$th block matrices for $k \neq l$, $k,l = 1,2,\ldots,p$, given as
\begin{equation}
\mathbb{E}\Big(-\dfrac{\partial^2 \ell(\Sigma_k,\Sigma_l)
}{\partial(\vech \Sigma_k)\partial(\vech \Sigma_l)'} \Big)
=
\dfrac{nm_{-kl}}{2}  D_{m_k}' \big( \vecc (\Sigma_k^{-1}) \vecc (\Sigma_l^{-1})')D_{m_l},
\end{equation}
where $D_{m_k}$ is the duplication matrix of Section \ref{app:multilinear_statistics} and $m_{-kl}= \prod_{i=1,i\neq k, i\neq l}^p m_i$.

\item The Fisher information $\mathbb{I}_{\Sigma}$ is singular.
\end{enumerate}
\end{thm}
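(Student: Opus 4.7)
The plan is to compute the blocks of the full Fisher information matrix with respect to the joint parameter $(\boldsymbol\theta_{\mB}, \boldsymbol\theta_{\Sigma})$, where $\boldsymbol\theta_{\mB}$ collects the intercept together with the parameters that generate $\mB$. For part~(a) I would show that the cross block between $\boldsymbol\theta_{\mB}$ and $\boldsymbol\theta_{\Sigma}$ vanishes; for part~(b) I would differentiate the scale-dependent part of the loglikelihood twice and take expectations via Gaussian moment identities; for part~(c) I would exhibit an explicit family of null vectors of $\mathbb{I}_\Sigma$ stemming from the scale indeterminacy of the Kronecker-separable structure.

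For part~(a), I would write the loglikelihood~\eqref{eq:likel_noM} as the sum of $-\tfrac{n}{2}\log|2\pi\sigma^2\Sigma|$, which does not depend on $\boldsymbol\theta_{\mB}$, and a Mahalanobis quadratic in the residual $\mZ_i=\mY_i-\langle\mX_i|\mB\rangle$. Differentiating once in a linear parameter produces an expression linear in $\mZ_i$, and a further differentiation in an entry of $\Sigma_k$ merely rearranges the scale matrices but keeps the dependence on $\mZ_i$ linear. Taking expectation at the true parameters sends every such cross partial to zero, since $\mathbb E[\mZ_i]=0$. The $(\boldsymbol\theta_{\mB},\boldsymbol\theta_{\Sigma})$ block of the joint Fisher information then vanishes, and standard asymptotic likelihood theory gives a block-diagonal inverse, hence asymptotic independence. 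The argument uses only the linearity of $\langle\mX_i|\mB\rangle$ in the parameters of $\mB$, so it is format-independent.

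For part~(b), I would use Property~\ref{prop:Kron}\ref{prop:Kron:a} and Lemma~\ref{lemma1}\ref{lemma1:x} to write the scale-dependent part of $\ell$ as $-\tfrac{n}{2}\sum_k m_{-k}\log|\Sigma_k| - \tfrac{1}{2\sigma^2}\sum_i\vecc(\mZ_i)'(\bigotimes_{j=p}^1\Sigma_j^{-1})\vecc(\mZ_i)$. For the $(k,k)$ block, I would differentiate twice in $\Sigma_k$ using $d\log|\Sigma_k|=\tr(\Sigma_k^{-1}d\Sigma_k)$ and $d\Sigma_k^{-1}=-\Sigma_k^{-1}(d\Sigma_k)\Sigma_k^{-1}$, then take expectation using $\mathbb E[\mZ_{i(k)}\Sigma_{-k}^{-1}\mZ_{i(k)}']=m_{-k}\Sigma_k$, which follows from Theorem~\ref{thm:tensnorm_reshap}\ref{tensnorm_reshap:x}; the log-determinant and the Mahalanobis term should combine to yield $\tfrac{nm_{-k}}{2}\tr((\Sigma_k^{-1}d\Sigma_k)^2)$, which in $\vech$ coordinates through $D_{m_k}$ is the stated diagonal block. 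For the $(k,l)$ block with $l\neq k$, the log-determinant separates and contributes nothing, while the second mixed differential of the Mahalanobis term splits the Kronecker factors at positions $k$ and $l$; taking expectation with $\mathbb E[\vecc(\mZ_i)\vecc(\mZ_i)']=\Sigma$ and using $\tr(A\otimes B)=\tr(A)\tr(B)$ collapses the remaining factors to $\prod_{j\neq k,l}\tr(\Sigma_j^{-1}\Sigma_j)=m_{-kl}$, leaving $\tr(\Sigma_k^{-1}d\Sigma_k)\tr(\Sigma_l^{-1}d\Sigma_l)$, which in $\vech$ form is precisely $\tfrac{nm_{-kl}}{2}D_{m_k}'\vecc(\Sigma_k^{-1})\vecc(\Sigma_l^{-1})'D_{m_l}$.

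For part~(c), I would exploit the invariance $\bigotimes_j(c_j\Sigma_j)=(\prod_j c_j)\bigotimes_j\Sigma_j$: the infinitesimal rescaling $\Sigma_k\mapsto(1+\epsilon\alpha_k)\Sigma_k$ with $\sum_k\alpha_k=0$ preserves $\Sigma$ to first order, so the direction $\boldsymbol v=(\alpha_1\vech(\Sigma_1)',\ldots,\alpha_p\vech(\Sigma_p)')'$ must lie in the kernel of $\mathbb{I}_\Sigma$. To verify this algebraically, I would use $D_{m_j}\vech(\Sigma_j)=\vecc(\Sigma_j)$, $(\Sigma_k^{-1}\otimes\Sigma_k^{-1})\vecc(\Sigma_k)=\vecc(\Sigma_k^{-1})$ and $\vecc(\Sigma_l^{-1})'\vecc(\Sigma_l)=m_l$: a short calculation shows that both diagonal and off-diagonal contributions to the $k$th block of $\mathbb{I}_\Sigma\boldsymbol v$ collapse to a common multiple of $D_{m_k}'\vecc(\Sigma_k^{-1})$, with total proportional to $\sum_l\alpha_l=0$. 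Hence $\mathbb{I}_\Sigma$ would have a kernel of dimension at least $p-1$, establishing singularity. The main obstacle will be the bookkeeping in part~(b) for the off-diagonal block, where the second differential of a Kronecker product of $p$ factors produces a telescoping expression whose traces must be identified correctly with $m_{-kl}$ and then converted between $\vecc$ and $\vech$ coordinates via the duplication matrices.
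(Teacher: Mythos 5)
Your proposal is correct, and while parts (a) and (b) track the paper's argument in substance, part (c) takes a genuinely different and arguably cleaner route. For (a) the paper simply rewrites the model as a vectorized MVMLR and cites Theorem~15.4 of Magnus and Neudecker; you instead prove the underlying fact directly by noting that the mixed partials in $(\boldsymbol\theta_{\mB},\boldsymbol\theta_\Sigma)$ are linear in the zero-mean residual, which is the content of that citation and is valid for any format since $\langle\mX_i|\mB\rangle$ is linear in the factor parameters. For the off-diagonal block in (b) the paper works with the matricization $\mZ_{i(1)}$, splits the Kronecker product inside the vectorization via the commutation-matrix operator $R_{\Sigma_{-12}^{-1}}$ of Property~\ref{prop:Com}\ref{prop:Com:c}, computes $\mathbb{E}(X\otimes X)=\vecc(\Sigma_1)\vecc(\Sigma_2)'$ for a matrix normal $X$, and then needs the auxiliary identity $R_{\Sigma_{-12}^{-1}}'R_{\Sigma_{-12}}=m_{-12}I_{m_2^2}$; your route differentiates $\Sigma^{-1}=\bigotimes_{j=p}^1\Sigma_j^{-1}$ factorwise and uses $\tr\big((\bigotimes_j A_j)(\bigotimes_j\Sigma_j)\big)=\prod_j\tr(A_j\Sigma_j)$, which reaches $\tfrac{nm_{-kl}}{2}\tr(\Sigma_k^{-1}\partial\Sigma_k)\tr(\Sigma_l^{-1}\partial\Sigma_l)$ with far less bookkeeping and no commutation matrices. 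For (c) the paper computes the Schur complement of the $(\Sigma_1,\Sigma_2)$ block and shows $\tr(A^{-1}BC^{-1}B')=1$, then lifts singularity of that principal submatrix to $\mathbb{I}_\Sigma$ (implicitly using positive semidefiniteness); your argument exhibits the explicit kernel vectors $(\alpha_1\vech(\Sigma_1)',\ldots,\alpha_p\vech(\Sigma_p)')'$ with $\sum_k\alpha_k=0$, which checks out since $(\Sigma_k^{-1}\otimes\Sigma_k^{-1})D_{m_k}\vech(\Sigma_k)=\vecc(\Sigma_k^{-1})$ and $\vecc(\Sigma_l^{-1})'\vecc(\Sigma_l)=m_l$ make every block of $\mathbb{I}_\Sigma\boldsymbol v$ proportional to $\sum_l\alpha_l$. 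Your version buys more: it identifies the kernel dimension as at least $p-1$ and ties the singularity directly to the scale indeterminacy $cA\otimes B=A\otimes cB$ that motivated the $\Sigma_k(1,1)=1$ constraint, whereas the paper's Schur-complement computation establishes singularity without exposing its source.
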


\begin{proof}\hfill
\begin{enumerate}[label=(\alph*)]
\item This statement follows upon expressing the
  Equation~\eqref{eq:model} as the MVMLR regression model $
\vecc(\bmY_i)  = \vecc(\Upsilon) + \mB_{<l>}'\vecc(\mX_i)+ \boldsymbol{e}_i,
$, and then applying Theorem~15.4 in
  \citet{magnusandneudecker99}. 
\item 
Equation~\eqref{eq:model} with $\sigma^2=1$ can be expressed as
\begin{equation}\label{eq:tensontensMLE2}
\mathcal{Z}_i = \bmY_i - \langle  \mX_i|  \mB \rangle \overset{iid}{\sim}N_{m_1,m_2,\hdots,m_p}
(0, 
\Sigma_1,\Sigma_2,\hdots,\Sigma_p)
,\quad
i = 1,2\hdots,n.
\end{equation}
We will start by finding the diagonal of the block Fisher information matrix. We know from Theorem \ref{thm:tensnorm_reshap}\ref{tensnorm_reshap:x} that
\begin{equation}\label{eq:centermat_model}
\mathcal{Z}_{i(k)} \sim \N_{m_k,m_{-k}} \big(0, \Sigma_k,\Sigma_{-k}\big)
\end{equation}
 for $k=1,\hdots,p$. We find the second differential of the loglikelihood of $\mathcal{Z}_{i(k)}$ in \eqref{eq:centermat_model} with respect to $\Sigma_k$ only as 
 
\begin{equation}\label{eq:diff_likel}
\begin{split}
\ell(\Sigma_k) =  -\dfrac{m_{-k}}{2} \log |\Sigma_k| - \dfrac{1}{2} \tr(\Sigma_k^{-1}\mathcal{Z}_{i(k)}\Sigma_{-k}^{-1}\mathcal{Z}_{i(k)}'), \hspace*{2cm}
&\\
\partial \ell(\Sigma_k) = -\dfrac{m_{-k}}{2} \tr(\Sigma_k^{-1}\partial \Sigma_k )+ \dfrac{1}{2} \tr(\Sigma_k^{-1}\partial \Sigma_k\Sigma_k^{-1}\mathcal{Z}_{i(k)}\Sigma_{-k}^{-1}\mathcal{Z}_{i(k)}'),\hspace*{1cm}
&\\
\partial^2 \ell(\Sigma_k) = \dfrac{m_{-k}}{2} \tr(\Sigma_k^{-1}\partial \Sigma_k \Sigma_k^{-1}\partial \Sigma_k)-  \tr(\Sigma_k^{-1}\partial \Sigma_k\Sigma_k^{-1}\partial \Sigma_k\Sigma_k^{-1}\mathcal{Z}_{i(k)}\Sigma_{-k}^{-1}\mathcal{Z}_{i(k)}').
\end{split}
\end{equation} 

Now, based on Equation~\eqref{eq:centermat_model}, $
\mathcal{Z}_{i(k)}
\Sigma_{-k}^{-1}
\mathcal{Z}_{i(k)}' \sim \mathcal{W}_{m_k}(m_{-k},\Sigma_k),$
where $\mathcal{W}$ denotes the Wishart distribution on $m_k\times
m_k$-dimension random matrices and $m_{-k}$ degrees of freedom. Therefore
$
\mathbb{E}( \mathcal{Z}_{i(k)}\Sigma_{-k}^{-1}\mathcal{Z}_{i(k)}') = m_{-k}\Sigma_k
$. Thus using the duplication matrix and expressing the matrix trace as a vector inner product in Lemmas \ref{lemma1}\ref{lemma1:x} and \ref{lemma1}\ref{lemma1:w}
\begin{equation*}
    \begin{split}
\mathbb{E}(-\partial^2 \ell(\Sigma_k) )
&= \dfrac{m_{-k}}{2} \tr(\Sigma_k^{-1}\partial \Sigma_k \Sigma_k^{-1}\partial \Sigma_k) =
\partial(\vech \Sigma_k)'
\Big( \dfrac{m_{-k}}{2}  D_{m_{-k}}' (\Sigma_k \otimes \Sigma_k) D_{m_{-k}}\Big)
\partial(\vech \Sigma_k).
\end{split}
\end{equation*}
Now we find the element in the (2,1)th position of the block-Fisher
information matrix. Finding this element is enough to find the rest of
the Fisher information matrix because the order of the scale matrices
in the TVN distribution can be permuted in concordance with the modes of
the random tensor. Let $\Sigma_{-12} = \bigotimes_{i=p}^3 \Sigma_i$ and $m_{-12} =\prod_{i=3}^p m_i $. Then based on the distribution of $\mathcal{Z}_{i(1)}$ in \eqref{eq:centermat_model} we can write the log likelihood as 
\begin{equation}\label{eq:loglikel_S12}
\begin{split}
\ell(\Sigma_1,\Sigma_2)
&= -\dfrac{1}{2} \tr\big[(\Sigma_{-12}^{-1} \otimes
\Sigma_2^{-1})\mathcal{Z}_{i(1)} '
\Sigma_1^{-1}\mathcal{Z}_{i(1)}\big] \quad \text{(by Property \ref{prop:Kron}\ref{prop:Kron:c})}
\\&=
-\dfrac{1}{2}\vecc(\Sigma_{-12}^{-1} \otimes \Sigma_2^{-1})'
(\mathcal{Z}_{i(1)} \otimes \mathcal{Z}_{i(1)})'
\vecc{(\Sigma_1^{-1})}\quad \text{(by Lemmas \ref{lemma1}\ref{lemma1:x} and \ref{lemma1}\ref{lemma1:w})}
\\&=
-\dfrac{1}{2} \vecc{(\Sigma_2^{-1})}' 
R_{\Sigma_{-12}^{-1}}'
 (\mathcal{Z}_{i(1)} \otimes \mathcal{Z}_{i(1)})' 
\vecc{(\Sigma_1^{-1})},\quad \text{(by Property \ref{prop:Com}\ref{prop:Com:c})}
\end{split}
\end{equation}
Furthermore, if $X\sim N_{m_1,m_2}(0,\Sigma_1,\Sigma_2)$,  $J^{k,l}$ is a single-entry matrix (with 1 at the position $(k,l)$ and 0 everywhere else) and $\Sigma_1^{1/2}[k,i] = \sigma^1_{ki}$, $\Sigma_2^{1/2}[j,l] = \sigma^2_{jl}$, then for  $Z \sim N_{p,r} (0,I_p,I_r)$,
\begin{equation}\label{proof:kronmatnorm}
\begin{split}
\mathbb{E} (X\otimes X) 
&= (\Sigma_1^{1/2}\otimes \Sigma_1^{1/2})\mathbb{E} (Z\otimes Z)(\Sigma_2^{1/2}\otimes \Sigma_2^{1/2})
=
(\Sigma_1^{1/2}\otimes \Sigma_1^{1/2})
\{ J^{k,l} \}_{k,l}
(\Sigma_2^{1/2}\otimes \Sigma_2^{1/2})
\\&=
\Big\{ 
\sum\limits_{i=1}^p \sum\limits_{j=1}^r
\sigma^1_{ki}\Sigma_1^{1/2} J^{i,j} \Sigma_2^{1/2} \sigma^2_{jl}
\Big\}_{k,l}
\\&
=\left\{ 
\sum\limits_{i=1}^p \sum\limits_{j=1}^r
\sigma^1_{ki}\Sigma_1^{1/2}[:,i]\Sigma_2^{1/2}[j,:] \sigma^2_{jl}
\right\}_{k,l}
\\&=
\Big\{ 
\big(
\sum\limits_{i=1}^p 
\sigma^1_{ki}\Sigma_1^{1/2}[:,i]\big)
\big(\sum\limits_{j=1}^r\Sigma_2^{1/2}[j,:] \sigma^2_{jl}\big)
\Big\}_{k,l}
\\&=
\Big\{ 
\big[\sigma^1_{k1} I_p \hdots \sigma^1_{kp} I_p] \vecc(\Sigma_1^{1/2})
\vecc(\Sigma_2^{1/2})'
\begin{bmatrix}
\sigma^2_{1l} I_r\\
\vdots\\
\sigma^2_{rl} I_r
\end{bmatrix}
\Big\}_{k,l}
\\&=
(\Sigma_1^{1/2} \otimes I_p)
\vecc(\Sigma_1^{1/2})
\vecc(\Sigma_2^{1/2})'
(\Sigma_2^{1/2} \otimes I_r)
=\vecc(\Sigma_1)\vecc(\Sigma_2)'. 
\end{split}
\end{equation}
To differentiate the negative loglikelihood of~\eqref{eq:loglikel_S12}, we use that if $\Sigma$ is non-singular then, $
 \dfrac{\partial \vecc\Sigma^{-1}}{\partial\vecc \Sigma}=-(\Sigma^{-1} \otimes \Sigma^{-1})$, 
which follows from vectorizing both sides of  $\partial \Sigma^{-1} =
-\Sigma^{-1} \partial \Sigma\Sigma^{-1} $. From application of this
operation to both sides of~\eqref{eq:loglikel_S12} and taking expectation of \eqref{proof:kronmatnorm} results in
  \begin{equation*}
    \begin{split}
-\mathbb{E}\Big(
\dfrac{\partial^2 \ell(\Sigma_1,\Sigma_2) 
}{\partial(\vecc \Sigma_2)\partial(\vecc \Sigma_1)'}
\Big)
&=
 \dfrac{1}{2}(\Sigma_2^{-1} \otimes \Sigma_2^{-1})
R_{\Sigma_{-12}^{-1}}'  \Big\{\mathbb{E}(\mathcal{Z}_{i(1)} \otimes \mathcal{Z}_{i(1)})\Big\}' 
(\Sigma_1^{-1} \otimes \Sigma_1^{-1}) 
\\&=
\dfrac{1}{2}(\Sigma_2^{-1} \otimes \Sigma_2^{-1})
R_{\Sigma_{-12}^{-1}}'  
\Big\{
(\vecc \Sigma_1)(\vecc (\Sigma_{-12}\otimes\Sigma_2))'
\Big\}' 
(\Sigma_1^{-1} \otimes \Sigma_1^{-1}) 
\\&=
\dfrac{1}{2}(\Sigma_2^{-1} \otimes \Sigma_2^{-1})
R_{\Sigma_{-12}^{-1}}'  
\vecc (\Sigma_{-12}\otimes\Sigma_2)
(\vecc \Sigma_1^{-1})'
\\&=
\dfrac{1}{2}(\Sigma_2^{-1} \otimes \Sigma_2^{-1})
R_{\Sigma_{-12}^{-1}}'  
R_{\Sigma_{-12}}
(\vecc \Sigma_2)
(\vecc \Sigma_1^{-1})' 
\quad  \text{(Lemma \ref{prop:Com}\ref{prop:Com:c})}
\\&=
\dfrac{m_{-12}}{2}(\Sigma_2^{-1} \otimes \Sigma_2^{-1})
(\vecc \Sigma_2)
(\vecc \Sigma_1^{-1})'
\quad (\text{see below})
\\&=
\dfrac{m_{-12}}{2}
(\vecc \Sigma_2^{-1})
(\vecc \Sigma_1^{-1})'.
\end{split}
\end{equation*}
The rest follows by multiplying $D_{m_1}'$ on the left and $D_{m_{2}}$
on the right. The following term simplifies greatly using
Property~\ref{prop:Com} of the commutation matrix and
Property~\ref{prop:Kron} of the Kronecker product: 
\begin{equation*}
    \begin{split}
R_{\Sigma_{-12}^{-1}}'  
R_{\Sigma_{-12}}
&=
\Big(
((\vecc\Sigma_{-12}^{-1})'\otimes I_{m_2})(I_{m_{-12}}\otimes K_{m_{-12},m_2})
(I_{m_{-12}}\otimes K_{m_2,m_{-12}})(\vecc\Sigma_{-12}\otimes I_{m_2})
\Big)\otimes I_{m_2} 
\\&=
\Big(
((\vecc\Sigma_{-12}^{-1})'\otimes I_{m_2})(I_{m_2 \times m_{-12}^2})(\vecc\Sigma_{-12}\otimes I_{m_2})
\Big)\otimes I_{m_2} 
\\&=
\tr(\Sigma_{-12}^{-1}\Sigma_{-12})\otimes I_{m_2^2} 
=m_{-12}I_{m_2^2}.
\end{split}
\end{equation*}
\item Let $\mathbb{I}_{\Sigma_{12}}$ denote the first four block matrices of the Fisher information,
$$
\mathbb{I}_{\Sigma_{12}} = 
\dfrac{nm_{-12}}{2}\begin{bmatrix}
m_2D_{m_1}'(\Sigma_1^{-1}\otimes\Sigma_1^{-1})D_{m_1} &
  D_{m_1}' \big(\vecc(\Sigma_1^{-1})\vecc(\Sigma_2^{-1})'\big)D_{m_2}
  \\
D_{m_2}' \big(\vecc(\Sigma_2^{-1})\vecc(\Sigma_1^{-1})'\big)D_{m_1}   &
 m_1D_{m_2}'(\Sigma_2^{-1}\otimes\Sigma_2^{-1})D_{m_2} 
\end{bmatrix}
=
\begin{bmatrix}
A &B \\ B' &C
\end{bmatrix}.
$$
 Using the result of Section 3.8 of \citet{magnusandneudecker99}, where $E^-$ denotes the Moore-Penrose inverse of $E$, we obtain:
$$
A^{-1} = \dfrac{2}{nm_{-1}}D_{m_1}^-(\Sigma_1\otimes\Sigma_1)D_{m_1}^{-'}
,\quad
C^{-1} = \dfrac{2}{nm_{-2}}D_{m_2}^-(\Sigma_2\otimes\Sigma_2)D_{m_2}^{-'},
$$
and$$
BC^{-1}B' =\dfrac{nm_{-1}}{2m_1}D_{m_1}' (\vecc\Sigma_1^{-1})(\vecc\Sigma_1^{-1})'D_{m_1}.
$$
Using these results we obtain 
\begin{equation*}
    \begin{split}
\tr(A^{-1}BC^{-1}B') 
&=
\dfrac{1}{m_1}\tr\Big(D_{m_1}^-(\Sigma_1\otimes\Sigma_1)D_{m_1}^{+'}D_{m_1}' (\vecc\Sigma_1^{-1})(\vecc\Sigma_1^{-1})'D_{m_1}\Big)
\\&=
\dfrac{1}{m_1}\tr\Big(D_{m_1}^-(\vecc\Sigma_1)(\vecc\Sigma_1^{-1})'D_{m_1}\Big)=\dfrac{m_1}{m_1} = 1.
\end{split}
\end{equation*}
This last result implies that for the Schur complement of $\mathbb{I}_{\Sigma_{12}}$, namely $S = (A-BC^{-1}B')$, it follows that $
S\big[A^{-1}D_{m_1}'$ $ (\vecc\Sigma_1^{-1})\big] = \bm{0}, 
$
which means that the  $S$ has a non-trivial kernel, and therefore is
singular. So $\mathbb{I}_{\Sigma_{12}}$ is singular
\citep{luandshiou02}, and consequently so is $\mathbb{I}_\Sigma$.
\end{enumerate}
\vspace*{-.5cm}
\end{proof}

\section{Supplement to Section \ref{sec:simulation}}\label{app:camelidsWilks}
To calculate Wilks' Lambda statistic $\Lambda = |\hat\Sigma_{R}|/|\hat\Sigma_{T}|$ we need $\hat\Sigma_{R}$, which is the sample covariance matrix of the residuals
after fitting~\eqref{eq:realsim_camelids}, and is formally expressed as
$$
\hat\Sigma_{R} = \sum_{i=1}^{4}\sum_{j=1}^{3}\sum_{k=1}^{50}\vecc(Y_{ijk}-\langle X_{ij}|\mhB\rangle)\vecc(Y_{ijk}-\langle X_{ij}|\mhB\rangle)',
$$
with appropriate $\mB$ and $\mhB$ is its estimate. The matrix $\hat\Sigma_{T}$ is the sample covariance matrix of the simpler model's residuals, which finds a common mean across all camelids, and is formally written as  
$$
\hat\Sigma_{T} = \sum_{i=1}^{4}\sum_{j=1}^{3}\sum_{k=1}^{50}\vecc(Y_{ijk}-\langle \bx_{j}|\mhB^*\rangle)\vecc(Y_{ijk}-\langle \bx_{j}|\mhB^*\rangle)',
$$
where $\bx_{j}^*\in \mathbb{R}^3$ has 1 in position $i$ and zeroes elsewhere. The tensor $\mhB^*$ is the low-rank estimator of $\mB^*$, both
of size $3\times 87\times 106$, and contains the grand-camelid mean image. 
The tensor $\mB$ is of order 4 while $\mB^*$ is of order 3, and therefore considerations must be considered when choosing the ranks of $\mB^*$. We chose the same CP rank for both $\mB$ and $\mB^*$, and the rank of $\mB^*$ was chosen to concord to the corresponding dimensions of $\mB$ in the TK and TR cases.
The matrices $\hat\Sigma_{R}$ and $\hat\Sigma_{T}$ are
non-negative definite of size 
$9222\times 9222$ and rank $600$, so the 
Wilks' Lambda statistic calculation involves the generalized determinant , which is the product of the non-zero eigenvalues.

\section{Supplement to Section \ref{sec:data_application}}
This section will need the following lemma, which identifies the distribution of the Tucker product of a random TVN  tensor.
\begin{lemma}\label{lemma:tvn_product}
If $\bY \sim \N_{\bbm}(0,\Sigma_1,\Sigma_2,\hdots,\Sigma_p)$ and  $M_k$ $\in$ $\mathbb{R}^{n_k\times m_k}$ for all $k=1,2,\hdots,p$, then $\bX=[\![\bY;M_1,M_2,\hdots,M_p]\!]\sim \N_{[n_1,n_2,\hdots,n_p]'}(0,M_1\Sigma_1M_1',M_2\Sigma_2M_2',\hdots,$ $M_p\Sigma_pM_p')$.
\end{lemma}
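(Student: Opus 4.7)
The plan is to reduce the statement about tensor distributions to one about vector multinormals via Definition \ref{def:tvn}, apply the standard MVN linear transformation rule, and then repackage the resulting Kronecker-structured covariance back into the TVN form.

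First, I would invoke Definition \ref{def:tvn} to rewrite the hypothesis as $\vecc(\bY) \sim \N_m(0, \bigotimes_{i=p}^1 \Sigma_i)$. Next, applying Lemma \ref{lemma1}\ref{lemma1:w} to the Tucker product $\bX = [\![\bY; M_1, \ldots, M_p]\!]$ gives
\begin{equation*}
\vecc(\bX) = \Bigl(\bigotimes_{i=p}^1 M_i\Bigr)\vecc(\bY),
\end{equation*}
so $\vecc(\bX)$ is a linear transformation of an MVN random vector and is therefore MVN with mean zero and covariance
\begin{equation*}
\Bigl(\bigotimes_{i=p}^1 M_i\Bigr)\Bigl(\bigotimes_{i=p}^1 \Sigma_i\Bigr)\Bigl(\bigotimes_{i=p}^1 M_i\Bigr)'.
\end{equation*}

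The second step is to collapse this covariance to the desired Kronecker form. Using Property \ref{prop:Kron}\ref{prop:Kron:b} to distribute the transpose over the Kronecker product and then Property \ref{prop:Kron}\ref{prop:Kron:d} to combine the three Kronecker products into one, the covariance simplifies to $\bigotimes_{i=p}^1 M_i \Sigma_i M_i'$. Finally, applying Definition \ref{def:tvn} in the other direction identifies this as the vectorization of a TVN tensor with the claimed scale matrices, yielding $\bX \sim \N_{[n_1,\ldots,n_p]'}(0, M_1\Sigma_1 M_1', \ldots, M_p\Sigma_p M_p')$.

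There is no real obstacle here: every piece is a direct application of a result stated earlier in the paper. The only point that requires a small bit of care is keeping the reverse-lexicographic ordering of the Kronecker products consistent throughout, since $\vecc$ in this paper is defined so that the mode-$p$ index varies slowest; this is exactly why Lemma \ref{lemma1}\ref{lemma1:w} uses $\bigotimes_{i=p}^1$, and why the matching ordering makes Property \ref{prop:Kron}\ref{prop:Kron:d} apply cleanly without any need for commutation matrices.
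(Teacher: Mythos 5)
Your proposal is correct and follows essentially the same route as the paper's own proof: vectorize via Lemma~\ref{lemma1}\ref{lemma1:w}, apply the MVN linear-transformation rule, collapse the covariance with Property~\ref{prop:Kron}\ref{prop:Kron:d}, and read off the TVN form from Definition~\ref{def:tvn}. Your added remark about the consistency of the reverse-lexicographic ordering is a fair observation but does not change the argument.
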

\begin{proof}
  From Lemma \ref{lemma1}\ref{lemma1:w} $\vecc(\bX) = (\bigotimes_{k=p}^1M_k)\vecc(\bY)$, and  
  therefore based on the properties of the MVN distribution and Property \ref{prop:Kron}(d),
$$
\vecc(\bX)  \sim \N_{n_1\times \hdots\times n_p}\big(0,\otimes_{k=p}^1(M_k\Sigma_kM_k')\big).
$$
The remainder of the proof follows from 
and Definition \ref{def:tvn}.
\end{proof}

\subsection{Supplement to Section \ref{application:suicide}}\label{sec:app:brain_interaction}

In Section \ref{application:suicide} we are interested in doing inference on the interaction between word type and subject
suicide attempter/ideator status to determine markers for suicide risk assessment and intervention. The $3$-level interaction is given as 
$\mhB_{*}=\mhB\times_1  \boldsymbol{c}_1' \times_2  C_2 \times_3
\boldsymbol{c}_3'$, where $\boldsymbol{c}_1=(1,-1)'$ is a contrast
vector that finds differences between suicide attempter/ideation status, $\boldsymbol{c}_3=
(1,1,\ldots,1)'/10$ a contrast vector that averages the 10 words of each word type, and 
\begin{equation}\label{inter:brain}
C_2 = 
\begin{bmatrix}
1 & 0 & -1 \\
1 & -1 & 0 \\
0 & -1 & 1 
\end{bmatrix}
\end{equation}
is a contrast matrix for differences across word type. The tensor
$\mhB_{*}$ contains all the interactions; for instance, the first row
of $C_2$ finds differences between the first and third word types, so the first level interaction $\mhB_{*}(1,:,:,:)$ 
$$
\dfrac{1}{10}\sum_{k=1}^{10}
\left(
\mhB(1,1,k,:,:,:)-
\mhB(1,3,k,:,:,:)-
\mhB(2,1,k,:,:,:)+
\mhB(2,3,k,:,:,:)
 \right),
$$
identifies regions of significant differences in how death- and
negative-connoted words affect suicide
ideators and attempters. Similarly, the second row of $C_2$ finds
differences between the first two word types, so the second level
interaction $\mhB_{*}(2,:,:,:)$ is $$
\dfrac{1}{10}\sum_{k=1}^{10}
\left(
\mhB(1,1,k,:,:,:)-
\mhB(1,2,k,:,:,:)-
\mhB(2,1,k,:,:,:)+
\mhB(2,2,k,:,:,:)
 \right),
$$
and its significant regions correspond to locations with a
significant difference in how the contrast
between death-related and positive-connoting words affect suicide
ideators and attempters. Finally, the third row of $C_2$ finds
differences between the first and third word types and so the third-level interaction $\mhB_{*}(3,:,:,:)$ is 
$$
\dfrac{1}{10}\sum_{k=1}^{10}
\left(
\mhB(1,3,k,:,:,:)-
\mhB(1,2,k,:,:,:)-
\mhB(2,3,k,:,:,:)+
\mhB(2,2,k,:,:,:)
 \right),
$$
and its significant regions correspond to locations with a
significant difference in how the contrast
between negative- and positive-connoting words affect suicide
ideators and attempters. 

To obtain the asymptotic distribution of the estimate $\mhB$ we use Theorem \ref{thm:inference_Tucker} and Definition \ref{def:tvn}, to obtain that as $n\rightarrow\infty$,
\begin{align}\label{eq:tensnorm_brain}
\mhB
\overset{d}{\rightarrow} 
 \N_{2,3,10,43,56,20}( \mB,&\sigma^2 P_1(XX')^{-1}P_1 ,M_1M_1',
 M_2M_2',M_3M_3',M_4M_4',M_5M_5'),
\end{align}
where $XX'$ is a diagonal matrix with two non-zero entries $(9,8)$. From \eqref{eq:tensnorm_brain}, \eqref{inter:brain} and Lemma \ref{lemma:tvn_product}, the asymptotic
distribution of $\mhB_{*}$ is
\begin{equation}
\mhB_{*}
\overset{d}{\rightarrow} 
 \N_{3,20,43,56}( \mB_{*},\tau^2C_2M_1M_1'C_2',M_3M_3',M_4M_4',M_5M_5'),
\end{equation}
where $\tau^2 =
\sigma^2 \times (\boldsymbol{c}_1' P_1(XX')^{-1}P_1\boldsymbol{c}_1)\times(\boldsymbol{c}_3'M_2M_2'\boldsymbol{c}_3)$ and
$\mB_{*}=\mB\times_1  \boldsymbol{c}_1' \times_2  C_2 \times_3
\boldsymbol{c}_3'$. Using~\eqref{inter:asympt}, we marginally
standardize  $\mhB_{*}$ using the Tucker 
 product and Lemma \ref{lemma:tvn_product} to obtain
\begin{equation}\label{stand_statistic}
\mhZ_*\!\!=[\![\mhB_{*} ; \text{d}_2(\tau^2C_2M_1M_1'C_2'),\text{d}_2(M_3M_3'),\text{d}_2(M_4M_4'),\text{d}_2(M_5M_5')]\!] ,
\end{equation}
where $\text{d}_2(A)$ is a diagonal matrix of the inverse square roots of the diagonal entries of $A$. 
Then $\mhZ_* (i,\cdot,\cdot,\cdot)$ has the TVN distribution, with correlation matrices as scale parameters.

\subsection{Supplement to Section \ref{application:LFW}}\label{sec:app:lfw}
 The ethnic origin and age-group assignments of
these images are from the classifier of \citet{Kumaretal09} that
provided values of attributes that are positive for its presence and
negative otherwise, along with magnitudes that 
describe the degree to which the attribute is present or absent (these
magnitudes cannot be compared between different attributes). Since
these attributes are for all age groups and ethnic origins, we have
ambiguous classification for a large majority of the images.
We removed ambiguities in age group and ethnic origin by selecting
images where (for each factor) the 
maximum attribute is positive and the other attributes are
negative.  The genders were as per~\citet{AfifiandAbdelhamed19}'s 
manual assignments of each image. We
considered images only of male or female genders, and discarded those
that have no or both gender assignments.  Although these two filters
reduced the number of images to 5,472, they were very imbalanced, as 37\% of them corresponded to only one of the $2\times 3\times 4=24$
factor-combinations (senior males of European ethnicity). While
our TANOVA model can handle imbalance, it can provide inaccurate
information for the underrepresented factor-combinations. Therefore,
we randomly selected at most 33 images for each factor combination,
reducing the sample size from 5,472 to 605.

We used the inner $151\times 111\times 3$
 portion of the original $250\times 250\times 3$ voxels where the
 third dimension corresponds to the color intensities in RGB format,
 and the logit transformation was applied to every value to match the
 statespace of the normal distribution.
\begin{figure*}[h]
\centering
\hspace{.9in}
\subfloat{
  \includegraphics[width=.15\linewidth]{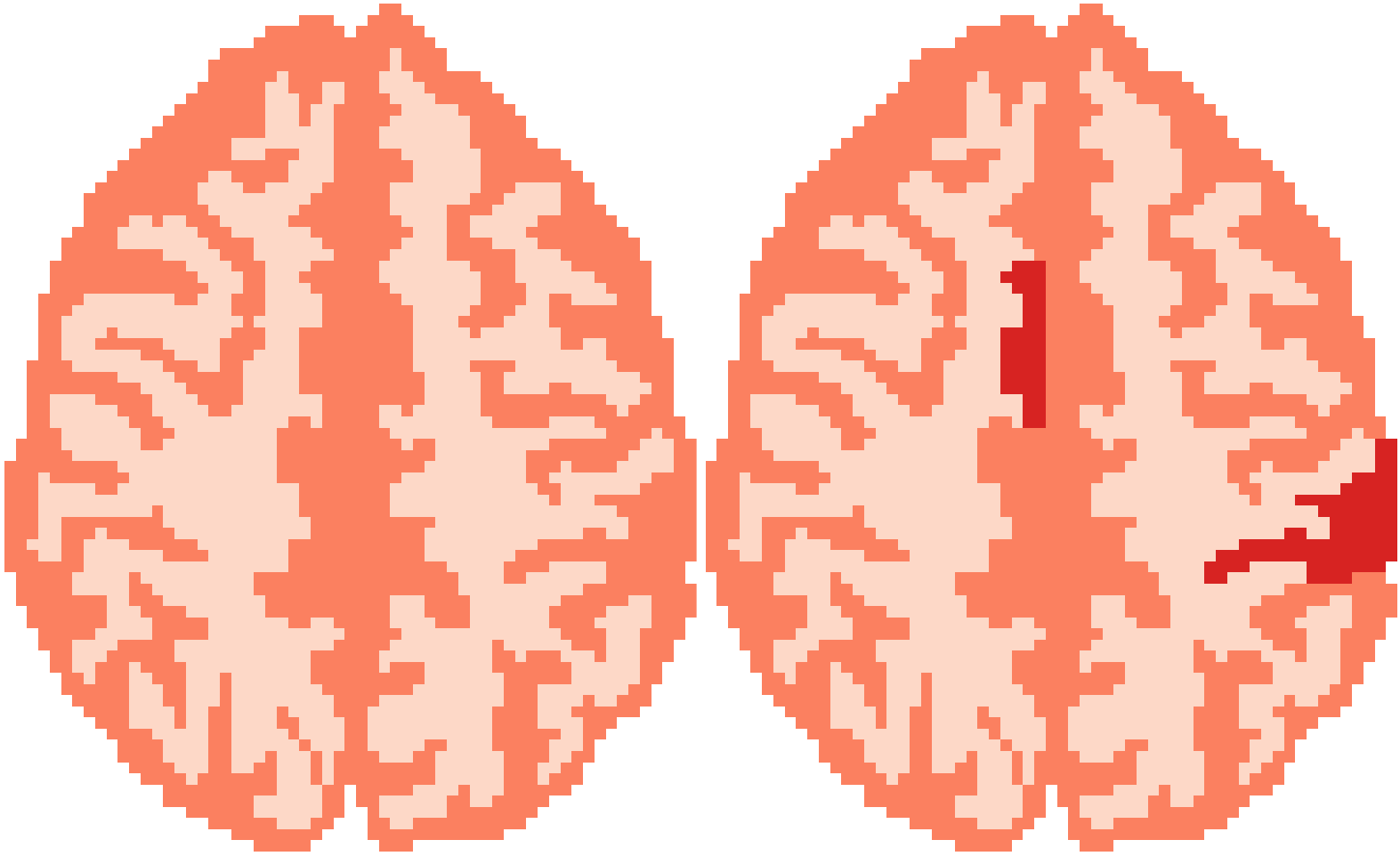}
  }\\
\subfloat{
  \includegraphics[width=.85\linewidth,page=2]{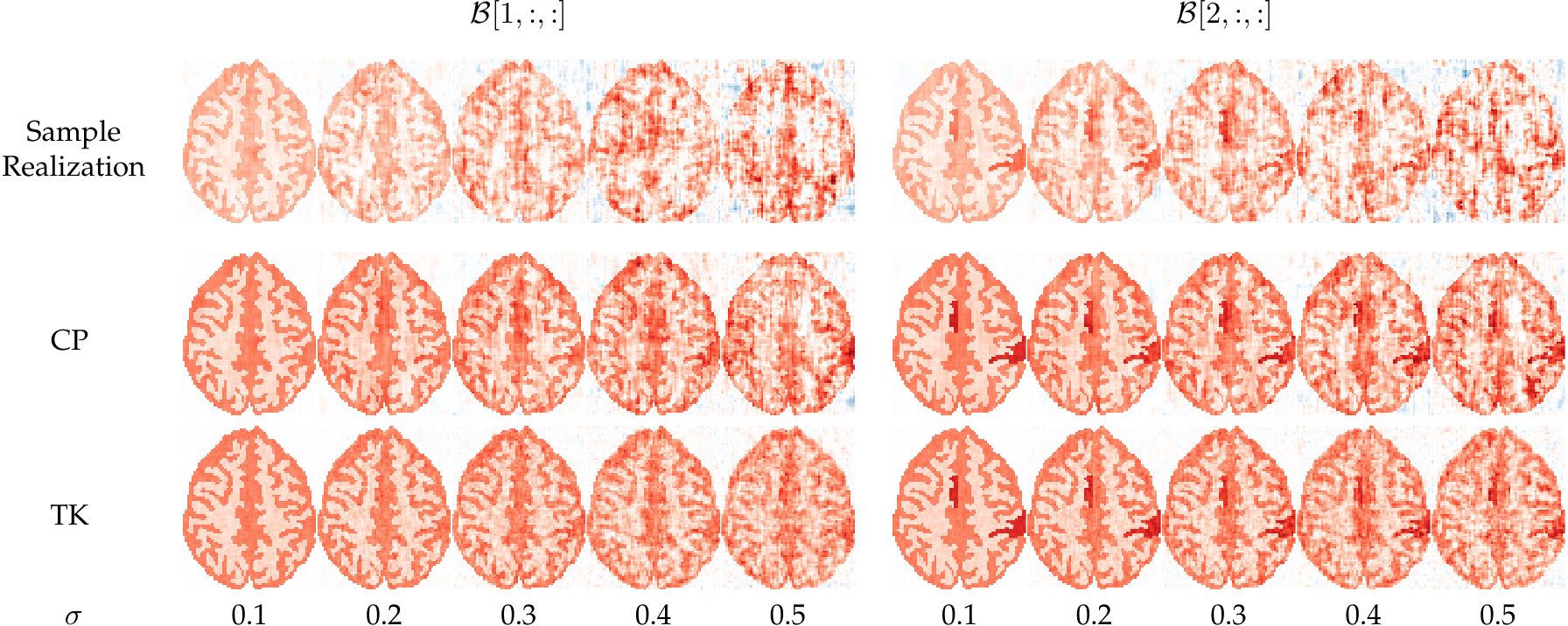}
  }\\
  \subfloat{
    \includegraphics[width=.85\linewidth,page=1]{figures/brainsim/brainsim-crop.pdf}
  }
\caption{The top block displays the modified Hoffman's
  phantom~\citep{almodovarandmaitra19} that formed our true $\mB$ in
  the experiments of Section~\ref{evaluation-arxiv}. The second block
  displays a sample realization (first row) and the $\mhB$ (left
  block corresponds to the first layer of $\mB$, right block
  corresponds to the second later of of $\mB$) using the CP
  (second row) and TK (third row) formats, with optimal ranks
  estimated by BIC, for $\sigma=0.1,0.2,0.3,0.4,0.5$, with data
  generated having modest AR(1) correlations: $\rho_1=0.25,\rho_2=0.35$. The
  bottom block is similar to the middle block, but it displays
  performance in the high AR(1) correlations case: here,
  $\rho_1=0.6,\rho_2=0.8.$ We see that although increasing variance
  and AR(1) correlation lead to noisier sample realizations and
  parameter estimates, the anatomy of brain structure, and the
  activation is largely recovered by both formats, even when we have
  only three realizations under each setting.}
\label{fig:simbrains}
\end{figure*}
\section{Supplement to Section~\ref{sec:discussion}}
\label{evaluation-arxiv}
\citet{papadogeorgouetal21} recently reported blocking behavior in
the direction of the modes of $\mB$ when estimating it via low-rank CP
and TK formats and in the context of a scalar-on-tensor regression
framework. We provide a limited evaluation to see if this
behavior is also exhibited in ToTR, for
instance, in imaging applications, when we use BIC to tune the ranks.
Our evaluation is in 
the context of the balanced TANOVA(1,2) model
\begin{equation*}
Y_{i} =   \langle\bx_{i} | \mB\rangle +E_{i}
,\quad
E_{i} \overset{iid}{\sim} \N_{[61,76]'}(0,\sigma^2,
\Sigma_1,\Sigma_2),
\end{equation*}
with  $\mB$ being a 3-way tensor of size $2\times 61\times 76$, which we
chose from the inner $61\times76$ portion (eliminating all rows and columns
consisting only of background pixels) of the $128\times128$ digitized
Hoffman phantom of~\citep{hoffmanetal90} and recently
adapted by~\citep{almodovarandmaitra19} for use in fMRI simulation experiments. 
The original phantom~\citep{hoffmanetal90} is a digitized
representation of the brain, and the inner ($61\times 76$) portion of this
phantom formed the first layer ({\em i.e.}, $\mB[1,:,:]$) of our
$\mB$, with the background pixels having value 0, the light foreground
pixels having value 1 and the darker foreground pixels having value
2. The second layer $\mB[2,:,:]$ had the same values as $\mB[1,:,:]$,
except for the 138 ``truly activated pixels'' in two disjoint regions
that were introduced by~\citet{almodovarandmaitra19} to evaluate
activation detection methods in simulation settings. At these pixels, $\mB[2,:,:]$ had the
value 3. (Fig.~\ref{fig:simbrains}, top block, displays the two layers of
$\mB$.) The vector $\bx_i$ (where $i=1,2,\dots, 6$)  is a 2D binary
vector that indicates which of the two phantom images corresponds to
$\mathbb{E}(Y_i)$. We let $(\Sigma_1,\Sigma_2)$ be AR(1) correlation
matrices with coefficients $(\rho_1,\rho_2) \in \{ (0.25,0.35)
,(0.6,0.8)\} $ to represent cases with low and high correlations among
the errors and set $\sigma\in\{0.2,0.4,0.6,0.8,1\}$, for five
different noise settings. 

Fig.~\ref{fig:simbrains} (top row of middle and bottom blocks) shows sample realizations  of the model under lower (middle block) and high correlations (bottom block) cases. We estimated $\mB$ assuming CP and TK formats,
with ranks tuned using BIC, as per Section \ref{subsec:rank_choos},
and displayed these $\mhB$s in the second and third rows of the
middle  and lower blocks of Fig.~\ref{fig:simbrains}. 
These estimates show good recovery in $\mB$ in all cases, of both the
putative anatomical and activation regions, even though it is
understandable that this performance depreciates with higher noise and
intra-brain correlation coefficients. 
Lower $\sigma$s yield larger optimal ranks for both TK and CP, while
increasing the AR(1) coefficients increases the model complexity. The
figures also indicate that there is not much appreciable blocking effect in
the direction of the modes of the tensor. 
We however, observed blocking behavior, earlier with the CP and TK
(and also TR) formats when the ranks were set (see Fig.~\ref{fig:camelids}), but not when we allowed BIC to choose these
ranks. We also did not observe such patterns with the TR format and
BIC-tuned ranks in Fig. \ref{fig:lfw}, even though we achieved a
parameter reduction of over 99\%.   Therefore, while further
investigations are needed, our initial evaluations here show that
our BIC-optimized reduced-rank (CP and TK) format ToTR and TANOVA do
not induce unwanted directional effects, therefore showing promise in
imaging and other applications. 

\bibliographystyle{IEEEtran}
\bibliography{tensorontensor-new}

\begin{thebibliography}{10}
\providecommand{\url}[1]{#1}
\csname url@samestyle\endcsname
\providecommand{\newblock}{\relax}
\providecommand{\bibinfo}[2]{#2}
\providecommand{\BIBentrySTDinterwordspacing}{\spaceskip=0pt\relax}
\providecommand{\BIBentryALTinterwordstretchfactor}{4}
\providecommand{\BIBentryALTinterwordspacing}{\spaceskip=\fontdimen2\font plus
\BIBentryALTinterwordstretchfactor\fontdimen3\font minus
  \fontdimen4\font\relax}
\providecommand{\BIBforeignlanguage}[2]{{%
\expandafter\ifx\csname l@#1\endcsname\relax
\typeout{** WARNING: IEEEtran.bst: No hyphenation pattern has been}%
\typeout{** loaded for the language `#1'. Using the pattern for}%
\typeout{** the default language instead.}%
\else
\language=\csname l@#1\endcsname
\fi
#2}}
\providecommand{\BIBdecl}{\relax}
\BIBdecl

\bibitem{johnsonandwichern08}
R.~A. Johnson and D.~W. Wichern, \emph{\BIBforeignlanguage{English}{Applied
  multivariate statistical analysis}}.\hskip 1em plus 0.5em minus 0.4em\relax
  Pearson, 2008.

\bibitem{tibshirani96}
R.~Tibshirani, ``\BIBforeignlanguage{en}{Regression {Shrinkage} and {Selection}
  {Via} the {Lasso}},'' \emph{\BIBforeignlanguage{en}{Journal of the Royal
  Statistical Society: Series B (Methodological)}}, vol.~58, no.~1, pp.
  267--288, Jan. 1996.

\bibitem{friedmanetal08}
J.~Friedman, T.~Hastie, and R.~Tibshirani, ``\BIBforeignlanguage{en}{Sparse
  inverse covariance estimation with the graphical lasso},''
  \emph{\BIBforeignlanguage{en}{Biostatistics}}, vol.~9, no.~3, pp. 432--441,
  Jul. 2008.

\bibitem{cooketal10}
R.~D. Cook, B.~Li, and F.~Chiaromonte, ``Envelope models for parsimonious and
  efficient multivariate liner regression,'' \emph{Statistica Sinica}, vol.~20,
  no.~3, pp. 927--960, 2010.

\bibitem{mukherjeeandzhu11}
A.~Mukherjee and J.~Zhu, ``\BIBforeignlanguage{en}{Reduced rank ridge
  regression and its kernel extensions},''
  \emph{\BIBforeignlanguage{en}{Statistical Analysis and Data Mining}}, vol.~4,
  no.~6, pp. 612--622, Dec. 2011.

\bibitem{buschetal03}
K.~A. Busch, J.~Fawcett, and D.~G. Jacobs, ``Clinical correlates of inpatient
  suicide,'' \emph{journal of clinical psychiatry}, vol.~64, 2003.

\bibitem{justetal17}
M.~A. Just, L.~Pan, V.~L. Cherkassky, D.~McMakin, C.~Cha, M.~K. Nock, and
  D.~Brent, ``\BIBforeignlanguage{eng}{Machine learning of neural
  representations of suicide and emotion concepts identifies suicidal youth},''
  \emph{\BIBforeignlanguage{eng}{Nature Human Behaviour}}, vol.~1, pp.
  911--919, 2017.

\bibitem{Huangetal07}
G.~B. Huang, M.~Ramesh, T.~Berg, and E.~Learned-Miller, ``Labeled faces in the
  wild: A database for studying face recognition in unconstrained
  environments,'' University of Massachusetts, Amherst, Tech. Rep., October
  2007.

\bibitem{AfifiandAbdelhamed19}
M.~Afifi and A.~Abdelhamed, ``Afif4: Deep gender classification based on
  adaboost-based fusion of isolated facial features and foggy faces,''
  \emph{Journal of Visual Communication and Image Representation}, vol.~62, pp.
  77 -- 86, 2019.

\bibitem{Kumaretal09}
N.~Kumar, A.~C. Berg, P.~N. Belhumeur, and S.~K. Nayar, ``Attribute and simile
  classifiers for face verification,'' in \emph{2009 IEEE 12th International
  Conference on Computer Vision}, Sep. 2009, pp. 365--372.

\bibitem{bietal21}
X.~Bi, X.~Tang, Y.~Yuan, Y.~Zhang, and A.~Qu, ``Tensors in statistics,''
  \emph{Annual Review of Statistics and Its Application}, vol.~8, no.~1, pp.
  345--368, 2021.

\bibitem{panagakisetal21}
Y.~Panagakis, J.~Kossaifi, G.~G. Chrysos, J.~Oldfield, M.~A. Nicolaou,
  A.~Anandkumar, and S.~Zafeiriou, ``Tensor methods in computer vision and deep
  learning,'' \emph{Proceedings of the IEEE}, vol. 109, no.~5, pp. 863--890,
  2021.

\bibitem{hoff14}
P.~Hoff, ``Multilinear tensor regression for longitudinal relational data,''
  \emph{The Annals of Applied Statistics}, vol.~9, 11 2014.

\bibitem{sunandli16}
W.~W. Sun and L.~Li, ``{STORE}: sparse tensor response regression and
  neuroimaging analysis,'' \emph{The Journal of Machine Learning Research},
  vol.~18, no.~1, pp. 4908--4944, Jan. 2017.

\bibitem{liandzhang17}
L.~Li and X.~Zhang, ``Parsimonious tensor response regression,'' \emph{Journal
  of the American Statistical Association}, vol. 112, no. 519, pp. 1131--1146,
  2017.

\bibitem{guoetal12}
W.~Guo, I.~Kotsia, and I.~Patras, ``Tensor learning for regression,''
  \emph{IEEE Transactions on Image Processing}, vol.~21, no.~2, pp. 816--827,
  2012.

\bibitem{fuetal14}
Y.~Fu, J.~Gao, X.~Hong, and D.~Tien, ``Tensor regression based on linked
  multiway parameter analysis,'' in \emph{2014 IEEE International Conference on
  Data Mining}, 2014, pp. 821--826.

\bibitem{zhouetal13}
H.~Zhou, L.~Li, and H.~Zhu, ``\BIBforeignlanguage{English (US)}{Tensor
  regression with applications in neuroimaging data analysis},''
  \emph{\BIBforeignlanguage{English (US)}{Journal of the American Statistical
  Association}}, vol. 108, no. 502, pp. 540--552, 12 2013.

\bibitem{xiaoshanetal18}
X.~Li, H.~Zhou, and L.~Li, ``Tucker tensor regression and neuroimaging
  analysis,'' \emph{Statistics in Biosciences}, Mar 2018.

\bibitem{liandzhang21}
C.~Li and H.~Zhang, ``Tensor quantile regression with application to
  association between neuroimages and human intelligence,'' \emph{Annals of
  Applied Statistics}, vol.~15, no.~3, pp. 1455--1477, 2021.

\bibitem{rabusseauandkadri16}
G.~Rabusseau and H.~Kadri, ``Low-{Rank} {Regression} with {Tensor}
  {Responses},'' \emph{Advances in Neural Information Processing Systems},
  no.~29, p.~15, 2016.

\bibitem{guhaniyogietal17}
R.~Guhaniyogi, S.~Qamar, and D.~B. Dunson, ``Bayesian tensor regression,''
  \emph{Journal of Machine Learning Research}, vol.~18, no.~79, pp. 1--31,
  2017.

\bibitem{ahmedetal20}
T.~Ahmed, H.~Raja, and W.~U. Bajwa, ``Tensor regression using low-rank and
  sparse tucker decompositions,'' \emph{SIAM Journal on Mathematics of Data
  Science}, vol.~2, no.~4, pp. 944--966, 2020.

\bibitem{kongetal20}
D.~Kong, B.~An, J.~Zhang, and H.~Zhu, ``L2rm: Low-rank linear regression models
  for high-dimensional matrix responses,'' \emph{Journal of the American
  Statistical Association}, vol. 115, no. 529, pp. 403--424, 2020.

\bibitem{zhouetal21}
Y.~Zhou, R.~K.~W. Wong, and K.~He, ``Tensor {Linear} {Regression}: {Degeneracy}
  and {Solution},'' \emph{IEEE Access}, vol.~9, 2021.

\bibitem{dengetal21}
Y.~Deng, X.~Tang, and A.~Qu, ``Correlation tensor decomposition and its
  application in spatial imaging data,'' \emph{Journal of the American
  Statistical Association}, vol.~0, no.~0, pp. 1--17, 2021.

\bibitem{poythressetal21}
J.~C. Poythress, J.~Ahn, and C.~Park, ``Low-rank, orthogonally decomposable
  tensor regression with application to visual stimulus decoding of fmri
  data,'' \emph{Journal of Computational and Graphical Statistics}, vol.~0,
  no.~0, pp. 1--14, 2021.

\bibitem{papadogeorgouetal21}
\BIBentryALTinterwordspacing
G.~Papadogeorgou, Z.~Zhang, and D.~B. Dunson, ``Soft tensor regression,''
  \emph{Journal of Machine Learning Research}, vol.~22, no. 219, pp. 1--53,
  2021. [Online]. Available: \url{http://jmlr.org/papers/v22/20-476.html}
\BIBentrySTDinterwordspacing

\bibitem{lock17}
E.~Lock, ``Tensor-on-tensor regression,'' \emph{Journal of Computational and
  Graphical Statistics}, 01 2017.

\bibitem{carrollandchang70}
J.~D. Carroll and J.-J. Chang, ``\BIBforeignlanguage{en}{Analysis of individual
  differences in multidimensional scaling via an n-way generalization of
  “{Eckart}-{Young}” decomposition},''
  \emph{\BIBforeignlanguage{en}{Psychometrika}}, vol.~35, no.~3, 1970.

\bibitem{harshman70}
R.~A. Harshman, ``Foundations of the parafac procedure: Models and conditions
  for an explanatory” multi-modal factor analysis,'' \emph{UCLA Working
  Papers in Phonetics}, no.~16, 1970.

\bibitem{liuetal20}
Y.~Liu, J.~Liu, and C.~Zhu, ``Low-{Rank} {Tensor} {Train} {Coefficient} {Array}
  {Estimation} for {Tensor}-on-{Tensor} {Regression},'' \emph{IEEE Transactions
  on Neural Networks and Learning Systems}, vol.~31, Dec. 2020.

\bibitem{llosa18}
\BIBentryALTinterwordspacing
C.~Llosa, ``Tensor on tensor regression with tensor normal errors and tensor
  network states on the regression parameter,'' \emph{Iowa State University
  Creative Components}, vol.~82, Jan. 2018. [Online]. Available:
  \url{https://lib.dr.iastate.edu/creativecomponents/82}
\BIBentrySTDinterwordspacing

\bibitem{zhaoetal16b}
W.~Zhao, K.~M. Kendrick, F.~Chen, H.~Li, and T.~Feng,
  ``\BIBforeignlanguage{en}{Neural circuitry involved in quitting after
  repeated failures: role of the cingulate and temporal parietal junction},''
  \emph{\BIBforeignlanguage{en}{Scientific Reports}}, vol.~6, no.~1, p. 24713,
  Apr. 2016.

\bibitem{oseledets11}
I.~Oseledets, ``Tensor-train decomposition,'' \emph{SIAM Journal on Scientific
  Computing}, vol.~33, no.~5, pp. 2295--2317, 2011.

\bibitem{brandianddimatteo21}
G.~Brandi and T.~{Di Matteo}, ``Predicting multidimensional data via tensor
  learning,'' \emph{Journal of Computational Science}, vol.~53, p. 101372,
  2021.

\bibitem{tucker66}
L.~R. Tucker, ``Some mathematical notes on three-mode factor analysis,''
  \emph{Psychometrika}, vol.~31, no.~3, pp. 279--311, Sep 1966.

\bibitem{koldaandbader09}
T.~G. Kolda and B.~W. Bader, ``Tensor decompositions and applications,''
  \emph{SIAM Review}, vol.~51, no.~3, pp. 455--500, September 2009.

\bibitem{bahrietal017}
M.~Bahri, Y.~Panagakis, and S.~Zafeiriou, ``Robust {Kronecker}-{Decomposable}
  {Component} {Analysis} for {Low}-{Rank} {Modeling},'' in \emph{2017 {IEEE}
  {International} {Conference} on {Computer} {Vision} ({ICCV})}, Oct. 2017, pp.
  3372--3381.

\bibitem{hoff11}
P.~Hoff, ``Separable covariance arrays via the tucker product, with
  applications to multivariate relational data,'' \emph{Bayesian Analysis},
  vol.~6, no.~2, pp. 179--196, 06 2011.

\bibitem{akdemirandgupta11}
D.~Akdemir and A.~Gupta, ``Array variate random variables with multiway
  kronecker delta covariance matrix structure,'' \emph{Journal of Algebraic
  Statistics}, vol.~2, pp. 98--113, 01 2011.

\bibitem{ohlsonetal13}
M.~Ohlson, M.~R. Ahmad, and D.~von Rosen, ``The multilinear normal
  distribution: Introduction and some basic properties,'' \emph{Journal of
  Multivariate Analysis}, vol. 113, pp. 37 -- 47, 2013.

\bibitem{manceuranddutilleul13}
A.~M. Manceur and P.~Dutilleul, ``Maximum likelihood estimation for the tensor
  normal distribution: Algorithm, minimum sample size, and empirical bias and
  dispersion,'' \emph{Journal of Computational and Applied Mathematics}, vol.
  239, pp. 37 -- 49, 2013.

\bibitem{kolda06}
T.~G. Kolda, ``\BIBforeignlanguage{en}{Multilinear operators for higher-order
  decompositions.}'' Sandia National Laboratories, Tech. Rep. SAND2006-2081,
  923081, Apr. 2006.

\bibitem{cichockietal17}
A.~Cichocki, N.~Lee, I.~Oseledets, A.-H. Phan, Q.~Zhao, and D.~P. Mandic,
  ``Tensor networks for dimensionality reduction and large-scale optimization:
  Part 1 low-rank tensor decompositions,'' \emph{Foundations and Trends® in
  Machine Learning}, vol.~9, no. 4-5, 2016.

\bibitem{DeLathauweretal00}
L.~De~Lathauwer, B.~De~Moor, and J.~Vandewalle, ``A multilinear singular value
  decomposition,'' \emph{SIAM Journal on Matrix Analysis and Applications},
  vol.~21, no.~4, pp. 1253--1278, 2000.

\bibitem{gerardandhoff16}
D.~Gerard and P.~Hoff, ``A higher-order {LQ} decomposition for separable
  covariance models,'' \emph{Linear Algebra and its Applications}, vol. 505,
  pp. 57 -- 84, 2016.

\bibitem{orus14}
R.~Orús, ``A practical introduction to tensor networks: Matrix product states
  and projected entangled pair states,'' \emph{Annals of Physics}, vol. 349,
  pp. 117 -- 158, 2014.

\bibitem{thompsonetal20}
G.~Z. Thompson, R.~Maitra, W.~Q. Meeker, and A.~F. Bastawros, ``Classification
  {With} the {Matrix}-{Variate}-t {Distribution},'' \emph{Journal of
  Computational and Graphical Statistics}, vol.~29, no.~3, Jul. 2020.

\bibitem{guptaandnagar99}
A.~Gupta and D.~Nagar, \emph{Matrix Variate Distributions}.\hskip 1em plus
  0.5em minus 0.4em\relax Taylor \& Francis, 1999.

\bibitem{wolfetal08}
M.~M. Wolf, F.~Verstraete, M.~B. Hastings, and J.~I. Cirac, ``Area laws in
  quantum systems: Mutual information and correlations,'' \emph{Physical review
  letters}, vol. 100, p. 070502, 03 2008.

\bibitem{srivastavaetal08}
M.~S. Srivastava, T.~von Rosen, and D.~von Rosen, ``Models with a kronecker
  product covariance structure: Estimation and testing,'' \emph{Mathematical
  Methods of Statistics}, vol.~17, no.~4, Dec 2008.

\bibitem{deleeuw94}
J.~de~Leeuw, ``Block-relaxation algorithms in statistics,'' in
  \emph{Information Systems and Data Analysis}.\hskip 1em plus 0.5em minus
  0.4em\relax Berlin, Heidelberg: Springer Berlin Heidelberg, 1994, pp.
  308--324.

\bibitem{Abdi07}
H.~Abdi, ``Singular value decomposition (svd) and generalized singular value
  decomposition (gsvd),'' \emph{Encyclopedia of Measurement and Statistics.},
  01 2007.

\bibitem{glanzandcarvalho18}
H.~Glanz and L.~Carvalho, ``An expectation maximization algorithm for the
  matrix normal distribution with an application in remote sensing,''
  \emph{Journal of Multivariate Analysis}, vol. 167, 2018.

\bibitem{viroli12}
C.~Viroli, ``On matrix-variate regression analysis,'' \emph{Journal of
  Multivariate Analysis}, vol. 111, pp. 296 -- 309, 2012.

\bibitem{dingandcook16}
S.~Ding and R.~Cook, ``Matrix-variate regressions and envelope models,''
  \emph{Journal of the Royal Statistical Society: Series B (Statistical
  Methodology)}, vol.~80, 05 2016.

\bibitem{hackbuschandkuhn09}
W.~Hackbusch and S.~Kühn, ``A new scheme for the tensor representation,''
  \emph{Journal of Fourier Analysis and Applications}, vol.~15, pp. 706--722,
  10 2009.

\bibitem{grasedyck10}
L.~Grasedyck, ``Hierarchical singular value decomposition of tensors,''
  \emph{SIAM Journal on Matrix Analysis and Applications}, vol.~31, no.~4, pp.
  2029--2054, 2010.

\bibitem{currietal06}
I.~D. Currie, M.~Durban, and P.~H.~C. Eilers, ``Generalized {Linear} {Array}
  {Models} with {Applications} to {Multidimensional} {Smoothing},''
  \emph{Journal of the Royal Statistical Society. Series B (Statistical
  Methodology)}, vol.~68, no.~2, pp. 259--280, 2006.

\bibitem{kashyap77}
R.~{Kashyap}, ``A bayesian comparison of different classes of dynamic models
  using empirical data,'' \emph{IEEE Transactions on Automatic Control},
  vol.~22, no.~5, pp. 715--727, 1977.

\bibitem{schwarz78}
G.~Schwarz, ``Estimating the dimension of a model,'' \emph{Ann. Statist.},
  vol.~6, no.~2, pp. 461--464, 03 1978.

\bibitem{mardiaetal79}
K.~V. Mardia, J.~T. Kent, and J.~M. Bibby, \emph{Multivariate analysis}.\hskip
  1em plus 0.5em minus 0.4em\relax New York: Academic Press, 1979.

\bibitem{helleretal06}
R.~Heller, D.~Stanley, D.~Yekutieli, N.~Rubin, and Y.~Benjamini,
  ``Cluster-based analysis of {fMRI} data,'' \emph{NeuroImage}, vol.~33, no.~2,
  pp. 599--608, nov 2006.

\bibitem{cox96}
R.~W. Cox, ``{AFNI}: software for analysis and visualization of functional
  magnetic resonance neuroimages,'' \emph{Computers and Biomedical research},
  vol.~29, no.~3, pp. 162--173, 1996.

\bibitem{coxandhyde97}
R.~W. Cox and J.~S. Hyde, ``Software tools for analysis and visualization of
  {fMRI} data,'' \emph{NMR in Biomedicine}, vol.~10, no. 4-5, pp. 171--178,
  1997.

\bibitem{genoveseetal02}
C.~R. Genovese, N.~A. Lazar, and T.~Nichols, ``Thresholding of statistical maps
  in functional neuroimaging using the false discovery rate,''
  \emph{Neuroimage}, vol.~15, pp. 870--878, 2002.

\bibitem{benjaminiandheller07}
Y.~Benjamini and R.~Heller, ``False discovery rates for spatial signals,''
  \emph{Journal of the American Statistical Association}, vol. 102, no. 480,
  pp. 1272--1281, 2007.

\bibitem{smithandfahrmeir07}
M.~Smith and L.~Fahrmeir, ``Spatial {B}ayesian variable selection with
  application to functional {M}agnetic {R}esonance {I}maging,'' \emph{Journal
  of the American Statistical Association}, vol. 102, no. 478, pp. 417--431,
  2007.

\bibitem{smithandnichols09}
S.~M. Smith and T.~E. Nichols, ``Threshold-free cluster enhancement: Addressing
  problems of smoothing, threshold dependence and localisation in cluster
  inference,'' \emph{Neuroimage}, vol.~44, 2009.

\bibitem{tabelowetal06}
K.~Tabelow, J.~Polzehl, H.~U. Voss, and V.~Spokoiny, ``Analyzing
  {fMRI}experiments with structural adaptive smoothing procedures,''
  \emph{NeuroImage}, vol.~33, no.~1, pp. 55--62, 2006.

\bibitem{polzehletal10}
J.~Polzehl, H.~U. Voss, and K.~Tabelow, ``Structural adaptive segmentation for
  statistical parametric mapping,'' \emph{NeuroImage}, vol.~52, no.~2, pp.
  515--523, 2010.

\bibitem{almodovarandmaitra19}
I.~A. Almod\'ovar-Rivera and R.~Maitra, ``{FAST} adaptive smoothed thresholding
  for improved activation detection in low-signal f{MRI},'' \emph{{IEEE}
  Transactions on Medical Imaging}, vol.~38, no.~12, pp. 2821--2828, 2019.

\bibitem{chengetal18}
W.~Cheng, E.~T. Rolls, J.~Qiu, D.~Yang, H.~Ruan, D.~Wei, L.~Zhao, J.~Meng,
  P.~Xie, and J.~Feng, ``\BIBforeignlanguage{English}{Functional {Connectivity}
  of the {Precuneus} in {Unmedicated} {Patients} {With} {Depression}},''
  \emph{\BIBforeignlanguage{English}{Biological Psychiatry: Cognitive
  Neuroscience and Neuroimaging}}, vol.~3, no.~12, pp. 1040--1049, Dec. 2018.

\bibitem{jacobetal20}
Y.~Jacob, L.~S. Morris, K.-H. Huang, M.~Schneider, S.~Rutter, G.~Verma, J.~W.
  Murrough, and P.~Balchandani, ``\BIBforeignlanguage{en}{Neural correlates of
  rumination in major depressive disorder: {A} brain network analysis},''
  \emph{\BIBforeignlanguage{en}{NeuroImage: Clinical}}, vol.~25, p. 102142,
  Jan. 2020.

\bibitem{zhouetal20}
H.-X. Zhou, X.~Chen, Y.-Q. Shen, L.~Li, N.-X. Chen, Z.-C. Zhu, F.~X.
  Castellanos, and C.-G. Yan, ``\BIBforeignlanguage{en}{Rumination and the
  default mode network: {Meta}-analysis of brain imaging studies and
  implications for depression},'' \emph{\BIBforeignlanguage{en}{NeuroImage}},
  vol. 206, p. 116287, Feb. 2020.

\bibitem{becharaetal00}
A.~Bechara, H.~Damasio, and A.~R. Damasio, ``\BIBforeignlanguage{en}{Emotion,
  {Decision} {Making} and the {Orbitofrontal} {Cortex}},''
  \emph{\BIBforeignlanguage{en}{Cerebral Cortex}}, vol.~10, no.~3, pp.
  295--307, Mar. 2000.

\bibitem{jollantetal08}
F.~Jollant, N.~S. Lawrence, V.~Giampietro, M.~J. Brammer, M.~A. Fullana,
  D.~Drapier, P.~Courtet, and M.~L. Phillips,
  ``\BIBforeignlanguage{eng}{Orbitofrontal cortex response to angry faces in
  men with histories of suicide attempts},'' \emph{\BIBforeignlanguage{eng}{The
  American Journal of Psychiatry}}, vol. 165, no.~6, pp. 740--748, Jun. 2008.

\bibitem{satputeetal19}
A.~B. Satpute and K.~A. Lindquist, ``\BIBforeignlanguage{English}{The {Default}
  {Mode} {Network}’s {Role} in {Discrete} {Emotion}},''
  \emph{\BIBforeignlanguage{English}{Trends in Cognitive Sciences}}, vol.~23,
  no.~10, pp. 851--864, Oct. 2019.

\bibitem{simonetal02}
S.~R. Simon, M.~Meunier, L.~Piettre, A.~M. Berardi, C.~M. Segebarth, and
  D.~Boussaoud, ``\BIBforeignlanguage{eng}{Spatial attention and memory versus
  motor preparation: premotor cortex involvement as revealed by {fMRI}},''
  \emph{\BIBforeignlanguage{eng}{Journal of Neurophysiology}}, vol.~88, no.~4,
  pp. 2047--2057, Oct. 2002.

\bibitem{koenigsetal09}
M.~Koenigs, A.~K. Barbey, B.~R. Postle, and J.~Grafman,
  ``\BIBforeignlanguage{en}{Superior {Parietal} {Cortex} {Is} {Critical} for
  the {Manipulation} of {Information} in {Working} {Memory}},''
  \emph{\BIBforeignlanguage{en}{Journal of Neuroscience}}, vol.~29, no.~47, pp.
  14\,980--14\,986, Nov. 2009.

\bibitem{malleretal14}
J.~J. Maller, R.~H.~S. Thomson, J.~V. Rosenfeld, R.~Anderson, Z.~J. Daskalakis,
  and P.~B. Fitzgerald, ``\BIBforeignlanguage{en}{Occipital bending in
  depression},'' \emph{\BIBforeignlanguage{en}{Brain}}, vol. 137, no.~6, pp.
  1830--1837, Jun. 2014.

\bibitem{smithetal18}
R.~Smith, R.~D. Lane, A.~Alkozei, J.~Bao, C.~Smith, A.~Sanova, M.~Nettles, and
  W.~D.~S. Killgore, ``\BIBforeignlanguage{en}{The role of medial prefrontal
  cortex in the working memory maintenance of one’s own emotional
  responses},'' \emph{\BIBforeignlanguage{en}{Scientific Reports}}, vol.~8,
  no.~1, p. 3460, Feb. 2018.

\bibitem{rolls19}
E.~T. Rolls, ``\BIBforeignlanguage{eng}{The cingulate cortex and limbic systems
  for action, emotion, and memory},'' \emph{\BIBforeignlanguage{eng}{Handbook
  of Clinical Neurology}}, vol. 166, pp. 23--37, 2019.

\bibitem{bubbetal17}
E.~J. Bubb, L.~Kinnavane, and J.~P. Aggleton, ``Hippocampal – diencephalic
  – cingulate networks for memory and emotion: {An} anatomical guide,''
  \emph{Brain and Neuroscience Advances}, vol.~1, 2017.

\bibitem{heilbronner11}
S.~R. Heilbronner, B.~Y. Hayden, and M.~L. Platt, ``Decision {Salience}
  {Signals} in {Posterior} {Cingulate} {Cortex},'' \emph{Frontiers in
  Neuroscience}, vol.~5, Apr. 2011.

\bibitem{zhaoetal16a}
Q.~Zhao, G.~Zhou, S.~Xie, L.~Zhang, and A.~Cichocki, ``Tensor ring
  decomposition,'' \emph{CoRR}, vol. abs/1606.05535, 2016.

\bibitem{zaitchiketal10}
D.~Zaitchik, C.~Walker, S.~Miller, P.~LaViolette, E.~Feczko, and B.~C.
  Dickerson, ``Mental state attribution and the temporoparietal junction: {An}
  {fMRI} study comparing belief, emotion, and perception,''
  \emph{Neuropsychologia}, vol.~48, no.~9, pp. 2528--2536, Jul. 2010.

\bibitem{lettierietal19}
G.~Lettieri, G.~Handjaras, E.~Ricciardi, A.~Leo, P.~Papale, M.~Betta,
  P.~Pietrini, and L.~Cecchetti, ``\BIBforeignlanguage{en}{Emotionotopy in the
  human right temporo-parietal cortex},'' \emph{\BIBforeignlanguage{en}{Nature
  Communications}}, vol.~10, no.~1, p. 5568, Dec. 2019.

\bibitem{ungerleideretal98}
L.~G. Ungerleider, S.~M. Courtney, and J.~V. Haxby, ``\BIBforeignlanguage{en}{A
  neural system for human visual working memory},''
  \emph{\BIBforeignlanguage{en}{Proceedings of the National Academy of
  Sciences}}, vol.~95, no.~3, pp. 883--890, Feb. 1998.

\bibitem{magnusandneudecker99}
J.~R. Magnus and H.~Neudecker, \emph{\BIBforeignlanguage{en}{Matrix
  differential calculus with applications in statistics and econometrics}},
  third edition~ed., ser. Wiley series in probability and statistics.\hskip 1em
  plus 0.5em minus 0.4em\relax Wiley, 2019.

\bibitem{petersenandpedersen12}
K.~B. Petersen and M.~S. Pedersen, ``The matrix cookbook,'' nov 2012, version
  20121115.

\bibitem{Lehmann99}
E.~L. Lehmann, \emph{\BIBforeignlanguage{en}{Elements of large-sample theory}},
  ser. Springer texts in statistics.\hskip 1em plus 0.5em minus 0.4em\relax New
  York: Springer, 1999.

\bibitem{luandshiou02}
T.-T. Lu and S.-H. Shiou, ``Inverses of 2 x 2 block matrices,'' \emph{Computers
  \& Mathematics with Applications}, vol.~43, no.~1, pp. 119--129, 2002.

\bibitem{hoffmanetal90}
E.~Hoffman, P.~Cutler, W.~Digby, and J.~Mazziotta, ``3-d phantom to simulate
  cerebral blood flow and metabolic images for pet,'' \emph{IEEE Transactions
  on Nuclear Science}, vol.~37, no.~2, pp. 616--620, 1990.

\end{thebibliography}

\end{document}